\newtheorem{theorem}{Theorem}
\newtheorem{lemma}[theorem]{Lemma}
\newtheorem{proposition}[theorem]{Proposition}
\newtheorem{corollary}[theorem]{Corollary}
\theoremstyle{definition}
\newtheorem{remark}{Remark}
\newcommand{\fo}{\rightarrowtail}
\newcommand{\PPT}{\text{PPT}}
\newcommand{\qstate}{\mathcal{S}(A)}
\DeclareMathOperator*{\argmin}{arg\,min}
\renewcommand{\tilde}[1]{\widetilde{#1}} 
\renewcommand{\hat}[1]{\widehat{#1}}
\def \d {\mathrm{d}}
\DeclareMathOperator{\sinc}{sinc}
\DeclareMathOperator{\SEP}{SEP}
\DeclareMathOperator{\supp}{supp}
\renewcommand*{\thesubsection}{\thesection.\arabic{subsection}}
\renewcommand*{\p@subsection}{}
\renewcommand*{\p@subsubsection}{}
\begin{document}

\title{New additivity properties of the relative entropy of entanglement and its generalizations} 

\begin{abstract}
We prove that the relative entropy of entanglement is additive when \emph{at least one of the two states} belongs to some specific class. We show that these classes include bipartite pure, maximally correlated, GHZ, Bell diagonal, isotropic, and generalized Dicke states. Previously, additivity was established only if \textit{both} states belong to the same class. Moreover, we extend these results to entanglement monotones based on the $\alpha$-$z$ R\'enyi relative entropy. Notably, this family of monotones includes also the generalized robustness of entanglement and the geometric measure of entanglement. In addition, we prove that any monotone based on a quantum relative entropy is not additive for general states. 
We also compute closed-form expressions of the monotones for bipartite pure, Bell diagonal, isotropic, generalized Werner, generalized Dicke, and maximally correlated Bell diagonal states. Our results rely on developing a method that allows us to recast the initial convex optimization problem into a simpler linear one. Even though we mostly focus on entanglement theory, we expect that some of our technical results could be useful in investigating more general convex optimization problems. 
\end{abstract}

\author{Roberto Rubboli}
\email{roberto.rubboli@u.nus.edu}
\affiliation{Centre for Quantum Technologies, National University of Singapore, Singapore 117543, Singapore}

\author{Marco Tomamichel}
\affiliation{Department of Electrical and Computer Engineering,
National University of Singapore, Singapore 117583, Singapore}
\affiliation{Centre for Quantum Technologies, National University of Singapore, Singapore 117543, Singapore}

\maketitle

\section{Introduction}
Quantifying entanglement is a central problem in quantum information theory. Entanglement monotones have the desirable property that they do not increase under local operations and classical communications (LOCC), so-called free operations in the resource theory of entanglement. Furthermore, additive entanglement monotones remain monotones even in the presence of catalysts that can be leveraged in the procedure but need to be returned unchanged. Establishing the additivity properties of an entanglement monotone is also of particular relevance in the analysis of transformation rates between states under LOCC where we expect operationally relevant quantities to be (at least) weakly additive. For example, it is shown that the regularized entropy of entanglement provides an upper bound on transformation rates~\cite{horodecki2001entanglement}. More in general, in convex resource theories, the regularized relative entropy of resource gives an upper bound on state transformations under free operations~\cite{horodecki2013quantumness}. Thus, a fundamental problem is to establish whether a certain entanglement monotone or measure is additive and, if not, what are the minimum constraints on the states to ensure its additivity. While this has been an active area for many decades, a complete understanding of the additivity properties of many entanglement monotones is still missing.

In this work, we focus on the additivity properties of the relative entropy of entanglement~\cite{vedral1997quantifying}. The relative entropy of entanglement is an example of a monotone that is known to be not additive under tensor product~\cite{vollbrecht2001entanglement}. Nevertheless, we show that additivity holds whenever one of the two states belongs to some specific class. This proves for the first time that for several classes of states, only one state needs to be of some specific form to ensure its additivity. Previously, additivity was proven only in the case where both states belong to some specific class, e.g. when both states are bipartite pure~\cite{vedral1998entanglement} and left as an open question for the case where only one state is constrained. Hence, whether the relative entropy of entanglement is additive when only one state is of some specific form was still not known.

We also mention that in the seminal work~\cite{rains1999bound}, the author proved that the relative entropy optimized over the set of
states with positive partial transpose (REEP) is additive for the tensor product of two maximally
correlated states. Moreover, when one state commutes with its optimizer, under some further
specific conditions, additivity results can be obtained for the REEP (see Section~\ref{First class} for a more detailed discussion).
In addition, in~\cite{miranowicz2008closed} the authors proved that the REEP is weakly additive for any two-qubit state that commutes with its optimizer. The latter result relies on a closed formula for the inverse problem of finding all the entangled states of a given closest separable state. This result was later generalized for all dimensions and number of parties in~\cite{friedland2011explicit} and in~\cite{girard2014convex} to the Rains bound and other functions of interest in quantum information theory.

We then generalize our results to the monotones based on the $\alpha$-$z$  R\'enyi relative entropies which provide a general framework to address different families of entanglement monotones. Notably, this family includes, among many others, also the geometric measure~\cite{barnum2001monotones} and the generalized robustness of entanglement~\cite{Datta_rob2}. 

These additivity results have direct application in deriving fundamental limits of quantum tasks, where no dimensional or purity assumption can be made on one of the two states. An example is provided by catalytic resource theories. Here, the additivity properties of the monotones based on the $\alpha$-$z$ R\'enyi relative entropies are tightly connected with their role in characterizing catalytic state conversions~\cite{rubboli2021fundamental}. In this setting, different monotones for different values of the parameters have different roles. Therefore, it is crucial to completely characterize the additivity properties of these monotones for all the values of the parameters. An important open question is whether a catalyst in a mixed state could provide an advantage over pure state catalysts~\cite{datta2022catalysis}. Our results immediately provide a set of necessary conditions for catalytic transformation without correlation of pure entangled states with mixed catalyst (see Section~\ref{Bipartite pure section} for more details). This set of conditions is a proper subset of the necessary (and sufficient) one for pure state catalysts~\cite{Klimesh}. However, whether these conditions are also sufficient is still an open question. Moreover, for correlated catalysis, i.e. when there are residual correlations between the system and the catalyst at the end of the process, the currently known protocols use catalysts which, for small residual correlations, are typically highly dimensional~\cite{Kondra, Lipka}. Our results allow us to establish fundamental limits on correlated catalytic transformations, which hold for any protocol and do not require any assumption on the structure of the catalyst. Explicitly, we are interested in the additivity of the monotones for $\alpha=z$ with $\alpha \in [1/2,1)$ of a pure state with any catalyst state~\cite{rubboli2021fundamental}. The latter requirement on the range of the parameters highlights the importance of extending the result outside the relative entropy of entanglement case. 

Entanglement monotones based on quantum relative entropies play a pivotal role in entanglement theory. Some notable examples include the above-mentioned relative entropy of entanglement, generalized robustness, and geometric measure of entanglement. A fundamental question is whether an entanglement monotone based on a quantum relative entropy is additive or not. Even though it was already known that the relative entropy of entanglement and the generalized robustness are not additive for general states~\cite{vollbrecht2001entanglement,zhu2010additivity}, a complete answer to this question was still missing. We answer this question by showing that this is not the case, i.e. any entanglement monotone based on a quantum relative entropy is not additive for general states.

Optimization problems involving quantum relative entropies are ubiquitous in quantum information theory. Some examples include the optimization of the fidelity function or the Umegaki relative entropy over a set of states satisfying some specific constraints. Here, we develop a very general ansatz-based technique for a wide range of quantum relative entropies that allows us to recast the initial convex optimization problem into a simpler linear one. We believe that these conditions are very general and could potentially be applied to a wide range of (even classical) convex optimization problems. Our results hinge on this method which is based on deriving necessary and sufficient conditions for the optimizer of the monotones based on the $\alpha$-$z$  R\'enyi relative entropies. We show that these conditions provide a powerful tool to investigate additivity questions for any resource theory by first considering the exemplary case of the resource theory of coherence. In this case, we can prove
that the monotones based on the $\alpha$-$z$ R\'enyi divergences are additive for any pair of states by extending some known results (see the discussion in Section~\ref{Additivity coherence}). In particular, we can show that an optimizer of the tensor product of two states is the tensor product of optimizers of the marginal problems. More interestingly, in entanglement theory, we can show that
the optimizer of the tensor product of two states still factorizes when one state belongs to some specific class.

In the case when the state commutes with its optimizer, we find that these necessary and sufficient conditions considerably simplify. This allows us to readily provide a counterexample to the additivity of any entanglement monotone based on a quantum relative entropy. Moreover, these conditions allow us to analytically compute the monotones in a simple fashion for several states by extending and unifying some already known results (see Section~\ref{Analytics} and Table~\ref{states}).

The paper is structured as follows:
\begin{itemize}
\item In Section~\ref{monotones} we introduce the monotones based on $\alpha$-$z$ R\'enyi relative entropies, and we discuss their connections with the relative entropy of entanglement, the geometric measure, and the generalized robustness of entanglement. 

\item In Section~\ref{condition} we derive necessary and sufficient conditions for the optimizer of very general convex optimization problems involving the $\alpha$-$z$ R\'enyi relative entropies. 

\item In Section~\ref{Additivity} we prove new additivity properties of the entanglement monotones based on $\alpha$-$z$ R\'enyi relative entropies. We prove that they are additive whenever one state belongs to one of two classes. The first class includes the states that commute with their optimizer such that the alpha power of their product with the inverse of their optimizer has positive entries on a product basis. The second class is the set of maximally correlated states, which, notably, contains the bipartite pure states. Finally, we extend the latter result to also include the multipartite GHZ state.  

\item In Section~\ref{Counterexample} we prove that any monotone based on a quantum relative entropy is not additive for general states by providing a counterexample.  Moreover, we prove that they are not additive also when the minimization over the separable states is replaced by the minimization over the states with positive partial transpose (PPT).

\item Finally, in Section~\ref{Analytics} we provide some examples of states that belong to the additivity classes mentioned above. In particular, we show that maximally correlated, bipartite pure, GHZ, Bell diagonal, isotropic, generalized Dicke,  and separable states belong to these classes. Moreover, we compute their value for bipartite pure, Bell diagonal, isotropic, generalized Werner, generalized Dicke, and maximally correlated Bell diagonal states by generalizing some known results (see Table~\ref{states}).

\end{itemize}

\section{Resource monotones based on $\alpha$-$z$ R\'enyi relative entropies} 
\label{monotones}

We denote with $\mathcal{P}(A)$ the set of positive operators on a Hilbert space $A$. Moreover, we denote with $\qstate$ the set of quantum states, i.e. the subset of $\mathcal{P}(A)$ with unit trace. Let $\sigma \in \mathcal{S}(\otimes_{j=1}^N A_j)$ be a $N$-partite state shared among N parties with Hilbert spaces $A_1, \hdots ,A_N$, respectively. We say that $\sigma$ is separable if it is of the form $\sigma = \sum_i p_i \sigma_i^1 \otimes \hdots \otimes \sigma_i^N $ for some local states $\sigma_i^j \in \mathcal{S}(A_j)$ and a probability distribution $\{ p_i \}$. Otherwise, we call it entangled. We denote the set of all separable states by $\SEP(A_1:\hdots:A_N)$ or just $\SEP$ if the geometry is clear from the context. The extreme points of the set of separable states are the pure product states $\sigma =\ketbra{\sigma}{\sigma}$ with $|\sigma \rangle = |\sigma_1\rangle \otimes \hdots\otimes  |\sigma_N\rangle$. We denote the set of pure product states with $\text{PRO}(A_1:\hdots:A_N)$ or simply $\text{PRO}$.
 
A key property that a measure of entanglement is required to satisfy is the monotonicity under local operations and classical communications (LOCC)~\cite{horodecki2009quantum,vedral1998entanglement,vedral1997quantifying}.
We call a function $\mathfrak{R} : \mathcal{S}(A) \rightarrow [0, + \infty]$ an entanglement monotone if it does not increase under LOCC, i.e., if $\mathfrak{R}(\rho) \geq \mathfrak{R}(\mathcal{E}(\rho))$ for any state $\rho$ and any LOCC operation $\mathcal{E}$~\cite{vidal2000entanglement}. For an entanglement monotone to be called an entanglement measure, further properties such as faithfulness, convexity, or full monotonicity are usually required~\cite{horodecki2009quantum,vedral1998entanglement,vedral1997quantifying}.  

Here, we follow the standard convention where, given two $N$-partite states  $\rho_1\in \mathcal{S}(\otimes_{j=1}^N A_j)$  and  $\rho_2 \in \mathcal{S}(\otimes_{j=1}^N A'_j)$, the parties $A_j$ and $A'_j$ for $j=1,\cdots,N$ are assumed to be in the same laboratory and hence the tensor product $\rho_1 \otimes \rho_2$ is considered to be a $N$-partite state shared among the parties $A_1A'_1, \hdots ,A_N A'_N$. Similarly, we extend this convention to tensor products of more than two states. We say that $\mathfrak{R}$ is \textit{tensor sub-additive} (or just sub-additive) for the states $\rho_1$ and $\rho_2$ if $\mathfrak{R}(\rho_1 \otimes \rho_2) \leq \mathfrak{R}(\rho_1) +  \mathfrak{R}(\rho_2)$.  Moreover, we say that $\mathfrak{R}$ is \textit{tensor additive} (or just additive) for the states $\rho_1$ and $\rho_2$ if $\mathfrak{R}(\rho_1 \otimes \rho_2) = \mathfrak{R}(\rho_1) +  \mathfrak{R}(\rho_2)$.

We now introduce the entanglement monotones based on the $\alpha$-$z$ R\'enyi relative entropy. We remark that even though we formulate our results very generally for the monotones based on the $\alpha$-$z$ R\'enyi relative entropy, our results are also new for the special cases of most interest, namely the relative entropy of entanglement, the generalized robustness of entanglement and the geometric measure of entanglement (see the discussion below for more details).  Let $ \alpha \in (0,1) \cup(1,\infty), \; z>0$, $\rho \in \qstate$ and $ \sigma \in \mathcal{P}(A)$. Then the $\alpha$-$z$ \textit{R\'enyi relative entropy} of $\sigma$ with $\rho$ is defined as~\cite{audenaert2015alpha,zhang2020wigner}
\begin{equation}
\label{definition}
D_{\alpha,z}(\rho \| \sigma):=
\begin{cases}
\frac{1}{\alpha-1}\log{\text{Tr}\left(\rho^\frac{\alpha}{2z}\sigma^\frac{1-\alpha}{z}\rho^\frac{\alpha}{2z}\right)^z} & \text{if}\; (\alpha<1 \wedge \rho \not \perp \sigma) \vee \rho \ll \sigma \\
 +\infty & \text{else}
\end{cases} \,.
\end{equation}
In the following, we denote $Q_{\alpha,z}(\rho \| \sigma):=\exp \big( (\alpha-1)D_{\alpha,z}(\rho \| \sigma) \big)$.
In the limit points of the ranges of the parameters, we define the $\alpha$-$z$ R\'enyi relative entropy by taking the corresponding pointwise limits.
In particular, we have the pointwise limits~\cite{audenaert2015alpha,lin2015investigating}
\begin{equation}
\label{limits}
\quad D_{\min}(\rho \| \sigma) = \lim \limits_{\alpha \rightarrow 0} D_{\alpha,1}(\rho \| \sigma) \;, \quad D(\rho \| \sigma) = \lim \limits_{\alpha \rightarrow 1} D_{\alpha,\alpha}(\rho \| \sigma) \;, \quad D_{\max}(\rho \| \sigma) = \lim \limits_{\alpha \rightarrow \infty} D_{\alpha,\alpha}(\rho \| \sigma)   \,,
\end{equation}
where 
\begin{align}
\label{Dmin}
& D_{\min}(\rho \| \sigma) := -\log{\Tr(\Pi(\rho)\sigma)} \, , \\
&D(\rho \| \sigma) := \Tr(\rho (\log{\rho} - \log{\sigma}))\,, \,  \text{and}\\
&D_{\max}(\rho \| \sigma) := \inf \{ \lambda \in \mathbb{R} : \rho \leq 2^{\lambda}\sigma \} \, ,
\end{align}
are the \textit{min-relative entropy}~\cite{Renner,Datta_rob2}, the Umegaki relative entropy, and the  \textit{max-relative entropy}~\cite{Tomamichel,Datta_rob2, Renner}, respectively. Note that the second limit in~\eqref{limits} for the Umegaki relative entropy actually holds for any $z > 0$~\cite{lin2015investigating}. Here, we denoted with $\Pi(\rho)$ the projector onto the support of $\rho$.

   
When $z=1$, the $\alpha$-$z$ R\'enyi relative entropy reduces to the \emph{Petz R\'enyi divergence}~\cite{petz1986quasi,Tomamichel}
\begin{equation}
\bar{D}_{\alpha}(\rho \| \sigma) :=
\frac{1}{\alpha-1} \log{\Tr(\rho^\alpha \sigma^{1-\alpha})} \,.
\end{equation}
For $z=\alpha$, it reduces to the \textit{sandwiched quantum R\'enyi divergence}~\cite{Muller, Wilde3,Tomamichel},
\begin{equation}
\tilde{D}_{\alpha}(\rho \| \sigma) :=
\frac{1}{\alpha-1}\log{\Tr\big(\sigma^{\frac{1-\alpha}{2\alpha}}\rho \sigma^{\frac{1-\alpha}{2\alpha}}\big)^{\alpha}} \,.
\end{equation}
For $z=1-\alpha$, we find the \textit{reverse sandwiched R\'enyi relative entropy}~\cite{audenaert2015alpha},
\begin{equation}
\tilde{D}^{\text{rev}}_{\alpha}(\rho \| \sigma) :=
\frac{1}{\alpha-1}\log{\Tr\big(\rho^{\frac{\alpha}{2(1-\alpha)}}\sigma \rho^{\frac{\alpha}{2(1-\alpha)}}\big)^{1-\alpha}} \,.
\end{equation}

In~\cite[Theorem 1.2]{zhang2020wigner} the authors proved that the $\alpha$-$z$ R\'enyi relative entropy satisfies the data-processing inequality (DPI) if and only if one of the following holds,
\begin{enumerate}
\item  $\quad 0<\alpha<1 \; \; and \; \; z\geq \max\{\alpha,1-\alpha\},$ 
\item $\quad 1<\alpha \leq 2 \;\;  and \;\;  \frac{\alpha}{2} \leq z \leq \alpha,\; and $
\item $\quad 2 \leq \alpha<\infty \;\;  and \;\;  \alpha-1 \leq z \leq \alpha\, . $
\end{enumerate}
We denote with $\mathcal{D}$ the set of above values of the parameters $(\alpha,z)$ for which the $D_{\alpha,z}$ satisfies the DPI.
Because of the limits in~\eqref{limits}, the $\alpha$-$z$ R\'enyi relative entropy also satisfies the DPI on the line $\alpha=1$. We include this line in the region $\mathcal{D}$.
In Fig.~\ref{fig: alpha-z} we represent the limits discussed above in the $\alpha$-$z$ plane and we show in blue the region for which the DPI holds. 
 
\begin{figure}
\centering
{\includegraphics[width=1.\textwidth]{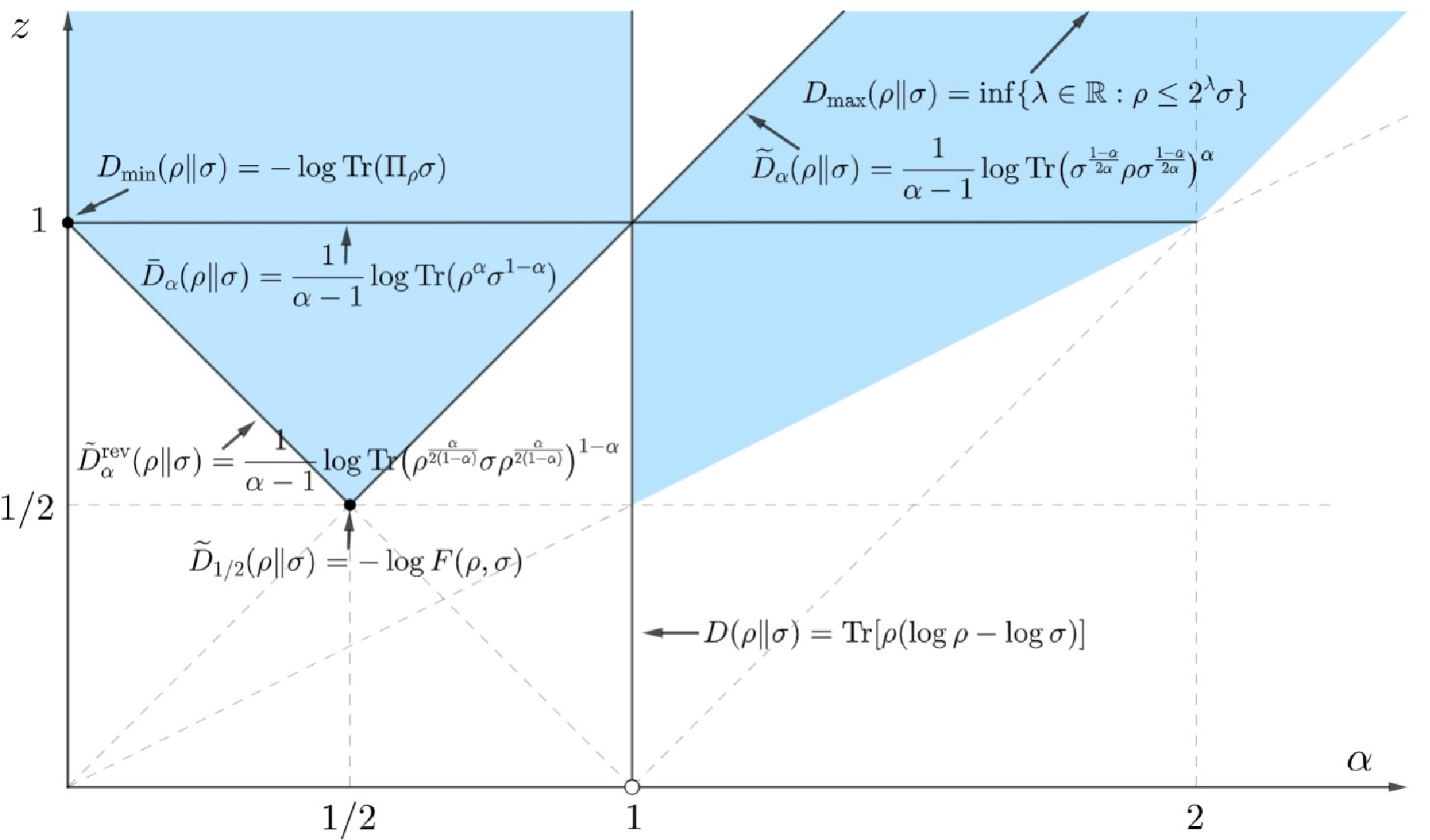}}
\caption{$\alpha$-$z$ plane for the $\alpha$-$z$ R\'enyi relative entropies. The $z=1$ and the $z=\alpha$ lines correspond to the Petz and the sandwiched R\'enyi divergences, respectively. The vertical line $\alpha=1$ for $z > 0$ corresponds to the Umegaki relative entropy. The blue region (together with the Umegaki relative entropy line) is where the data-processing inequality holds}
\label{fig: alpha-z}
\end{figure}

We define entanglement monotones for $(\alpha,z)\in\mathcal{D}$ as
\begin{align}
\label{problem}
&\mathfrak{D}_{\alpha,z} (\rho) := \inf_{\sigma \in \SEP} D_{\alpha,z}(\rho \| \sigma) \,.
\end{align}
In particular,
\begin{equation}
\mathfrak{D}_{\min}(\rho) :=   \inf_{\sigma \in \SEP} D_{\min}(\rho \| \sigma) , \quad \mathfrak{D}(\rho) :=   \inf_{\sigma \in \SEP} D(\rho \| \sigma) , \quad \mathfrak{D}_{\max}(\rho):= \inf_{\sigma \in \SEP} D_{\max}(\rho \| \sigma) \,.
\end{equation}
In Appendix~\ref{lower-semicontinuous} we show that the latter monotones coincide with the corresponding pointwise limits of the monotone~\eqref{problem}. Namely, we have
\begin{equation}
\label{limits monotone}
\mathfrak{D}_{\min}(\rho) = \lim \limits_{\alpha \rightarrow 0}\mathfrak{D}_{\alpha,1}(\rho) , \quad \mathfrak{D}(\rho) = \lim \limits_{\alpha \rightarrow 1}\mathfrak{D}_{\alpha,\alpha}(\rho) , \quad \mathfrak{D}_{\max}(\rho) = \lim \limits_{\alpha \rightarrow \infty}\mathfrak{D}_{\alpha,\alpha}(\rho) \,.
\end{equation} 
In the following, we refer to the $\mathfrak{D}_{\alpha,z}$ as $\alpha$-$z$ R\'enyi relative entropy of entanglement or as entanglement monotone based on $\alpha$-$z$ R\'enyi relative entropy. In Appendix~\ref{lower-semicontinuous} we show that the above infimum is always achieved, i.e. there always exists a closest separable state and hence the infimum can be replaced by the minimum. However, in general, the minimum might not be unique as shown in~\cite[Example 21]{lami2021attainability} for $\alpha=z=1$.

It is straightforward to prove that, for the values of $\alpha$ and $z$ for which the underlying $\alpha$-$z$ R\'enyi relative entropy satisfies the data-processing inequality, $\mathfrak{D}_{\alpha,z}$ does not increase under LOCC, i.e. it is indeed an entanglement monotone.  Moreover, in this interval, the function $Q_{\alpha,z}(\rho\| \sigma)$ is jointly concave for $\alpha \leq 1$ and jointly convex for $\alpha \geq 1$~\cite{zhang2020wigner}. This implies that any local minimum of the problem~\eqref{problem} is also a global minimum. This fact has been already pointed out for $\alpha=z=1$ in~\cite{vedral1998entanglement}.

In~\cite{vedral1998entanglement}, the authors propose a numerical method to compute the relative entropy of entanglement for bipartite states. Following this approach, the minimum of the relative entropy of entanglement can be found with a gradient search in multidimensional parameter space. This is done using Caratheodory's theorem, which provides a way to bound the number of independent real parameters needed to parametrize a separable state. The same strategy could be used to find the minimum of the problem~\eqref{problem}. However, this method becomes soon impractical even for low dimensions. More recently, a method to compute a lower bound on the relative entropy of entanglement using semidefinite program solvers was presented in~\cite{fawzi2018efficient}.

Interestingly, the $\mathfrak{D}_{\alpha,z}$ encompass several well-known entanglement measures. The monotone $\mathfrak{D}$ is, by definition, equal to the relative entropy of entanglement. The relative entropy of entanglement was the first divergence-based entanglement measure to be studied and was introduced in~\cite{vedral1997quantifying}. 

In~\cite{Datta_rob2} it was shown that the generalized (log) robustness is equal to the entanglement monotone based on the $D_{\max}$ divergence, i.e. $\mathfrak{D}_{\max}(\rho)=\log{(1 +\mathfrak{R}_g(\rho))}$ where
\begin{equation}
\mathfrak{R}_g(\rho) := \min \left\{ s\geq 0: \exists \omega \in \mathcal{S}(A) \,\, \text{s.t} \,\, \frac{1}{1+s}\left(\rho + s\omega\right) \in \SEP  \right\}\, .
\end{equation} 
is the generalized robustness of entanglement. Historically, the robustness of entanglement was introduced first in~\cite{vidal1999robustness}.  The robustness of entanglement quantifies how much mixing with a separable state (separable noise) can take place before an entangled state becomes a separable state. Subsequently, in~\cite{Steiner} and~\cite{harrow2003robustness} the authors introduced the generalized robustness of entanglement. The latter quantity generalizes the robustness of entanglement, and it quantifies how much mixing with a state (not necessarily separable) can take place before an entangled state becomes separable. 

The monotone $\mathfrak{D}_{1/2,1/2}(\rho)$ is linked to the fidelity of separability and the geometric measure of entanglement through the relations $\mathfrak{D}_{1/2,1/2}(\rho) = -\log{F_s(\rho)}=-\log{(1-E_G(\rho))}$. The fidelity of separability is $F_s(\rho)=\max_{\sigma \in \SEP}F(\rho,\sigma)$.  Here, $F(\rho, \sigma) :=  ( \Tr|\sqrt{\rho}\sqrt{\sigma}| )^2$ is the Uhlmann's fidelity. The geometric measure of entanglement is defined through the convex-roof construction as follows:
\begin{align}
&E_G(|\psi\rangle) = 1- \max_{|\phi\rangle \in \text{PRO}}|\langle\phi|\psi\rangle|^2 \, ,\\
&E_G(\rho) = \min_{\{p_i,|\psi_i\rangle\}} \sum_i p_i E_G(|\psi_i\rangle)\,,
\end{align} 
where the minimization is performed over all the decomposition $\rho = \sum_i p_i |\psi_i\rangle\!\langle \psi_i|$ into pure states.  The geometric measure of entanglement was first proposed in~\cite{barnum2001monotones} and subsequently investigated in~\cite{wei2003geometric,chen2010computation,hubener2009geometric}.  The proof of the above link was provided in~\cite{streltsov2010linking} where the authors showed the key relation $E_G(\rho) = 1 - F_s(\rho)$. Note that the additivity of $\mathfrak{D}_{1/2,1/2}$ is equivalent to the multiplicativity of the fidelity of separability $F_s$. 

More recently, the limits $z=1$ and $z=\alpha$ have been investigated in~\cite{zhu2017coherence}. In these cases, the $\mathfrak{D}_{\alpha,z}$ reduces to the entanglement monotone based on the Petz  R\'enyi divergence and to the one based on the sandwiched R\'enyi divergence, respectively.

We introduce the Fr\'echet derivative~(see e.g. \cite[Sec. V.3 and Sec. X.4]{bhatia1997graduate} ). We use the shorthand $\partial_x:=\partial/\partial x$. For $\rho, \sigma, \tau \in\mathcal{S}(\mathcal{H})$ we define
\begin{equation}
\left.\partial_\sigma D_{\alpha,z}(\rho\| \tau):=  \partial_xD_{\alpha,z}(\rho\| (1-x)\tau+x\sigma)\right|_{x=0} \,.
\end{equation}
In the following, we use the notational convention $A/B:=B^{-\frac{1}{2}}AB^{-\frac{1}{2}}$ for Hermitian operators $A$ and $B$. Moreover, we define the negative powers in the sense of generalized inverses; i.e., when we take the negative powers of positive operators, we simply ignore the kernel.

\section{Necessary and sufficient conditions for the optimizer} 
\label{condition}
In this section, we derive necessary and sufficient conditions for the optimal state of the minimization problem~\eqref{problem}. 
We formulate our result for very general convex optimization problems. Let $\mathcal{F}$ be a closed and convex set that contains a state with full support. We define
\begin{align}
\label{problem F}
&\mathfrak{D}^\mathcal{F}_{\alpha,z} (\rho) := \inf_{\sigma \in \mathcal{F}} D_{\alpha,z}(\rho \| \sigma) \,.
\end{align}
Note that the above minimum is always achieved (see Appendix~\ref{lower-semicontinuous}). In quantum resource theories, the above function quantifies the amount of `quantum resource' contained in a state (see~\cite{Gour,Brandao2} for a review). 
Resource theories offer a general framework to quantify the usefulness of quantum states and their inconvertibility under free operations. 
A convex resource theory is defined by a closed convex subset of quantum
states called \textit{free states} and a set of \textit{free operations} with the property that they are closed under composition
and map free states into free states. Hence, in this setting, $\mathcal{F}$ is the set of free states. Some examples of resource theories include entanglement theory~\cite{horodecki2009quantum,Plenio,vedral1997quantifying}, athermality in thermodynamics~\cite{Brandao3,faist2015gibbs,Oppenheim}, and coherence~\cite{Winter,aberg2006quantifying,baumgratz2014quantifying}.  
We call a function a resource monotone if it does not increase under free operations. It is straightforward to prove that the quantity~\eqref{problem F} is a resource monotone for $(\alpha,z) \in \mathcal{D}$.
In entanglement theory, where $\mathcal{F}= \SEP$, the above quantity reduces to the entanglement monotone in~\eqref{problem}. In quantum resource theories, the existence of a full-rank state is equivalent to the requirement that the monotone~\eqref{problem F} must always be finite for all $\alpha \geq 1$ and full support states. The latter requirement is a reasonable assumption since in any reasonable resource theory, strictly speaking, no state can have infinite resources. 
However, we remark that the results we derive in this section are very general and the set $\mathcal{F}$ does not necessarily have to be linked to any quantum resource theory. We first prove two auxiliary lemmas. 
\begin{lemma}
\label{technical}
Let $\tau > 0$ and $\sigma \geq 0$. Then, for $\beta \in  (-1,0) \cup (0,1) $, we have
\begin{equation}
 \partial_x((1-x)\tau + x \sigma)^\beta  =  \frac{\sin(\pi \beta)}{\pi} \int_0^\infty \frac{\sigma-\tau}{((1-x)\tau + x \sigma+t)^2} t^\beta \d t  \,.
\end{equation}
In particular, we have
\begin{equation}
\label{eq lemma}
\left. \partial_x((1-x)\tau + x \sigma)^\beta \right|_{x=0} =  \frac{\sin(\pi \beta)}{\pi} \int_0^\infty \frac{\sigma}{(\tau+t)^2} t^\beta \d t - \beta \tau^\beta \,.
\end{equation}
\end{lemma}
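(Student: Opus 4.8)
The plan is to reduce the operator statement to the resolvent (integral) representation of fractional powers and then differentiate under the integral sign. Write $A(x):=(1-x)\tau+x\sigma$; since $\tau>0$ and $\sigma\geq 0$, the operator $A(x)$ is positive definite for all $x$ in a neighbourhood of the relevant interval, so $A(x)^\beta$ is well defined by functional calculus and $A'(x)=\sigma-\tau$. The starting point is the scalar identity $\lambda^\beta=\frac{\sin(\pi\beta)}{\pi}\int_0^\infty \frac{\lambda}{\lambda+t}\,t^{\beta-1}\,\d t$ valid for $\lambda>0$, $0<\beta<1$ (and the analogous $\lambda^\beta=-\frac{\sin(\pi\beta)}{\pi}\int_0^\infty (\lambda+t)^{-1}t^\beta\,\d t$ for $-1<\beta<0$), both of which follow from the Beta-function evaluation $\int_0^\infty \frac{s^{a-1}}{1+s}\,\d s=\pi/\sin(\pi a)$ and the reflection formula. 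Lifting these to $A(x)$ by functional calculus gives
\[ A(x)^\beta=\frac{\sin(\pi\beta)}{\pi}\int_0^\infty A(x)\,(A(x)+t)^{-1}\,t^{\beta-1}\,\d t \quad (0<\beta<1), \]
and the corresponding representation in the range $-1<\beta<0$.

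Next I would differentiate in $x$ under the integral sign. Using $A(x)(A(x)+t)^{-1}=\mathds{1}-t(A(x)+t)^{-1}$ together with the resolvent derivative $\partial_x(A(x)+t)^{-1}=-(A(x)+t)^{-1}(\sigma-\tau)(A(x)+t)^{-1}$, the $x$-derivative of the integrand becomes $t^\beta (A(x)+t)^{-1}(\sigma-\tau)(A(x)+t)^{-1}$; starting instead from the $-1<\beta<0$ representation one obtains the identical expression, so both ranges are handled uniformly. Recalling the convention $X/Y:=Y^{-1/2}XY^{-1/2}$ and that $(A(x)+t)^2$ has square root $A(x)+t$, this is exactly $t^\beta\,(\sigma-\tau)/(A(x)+t)^2$, which yields the claimed formula.

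For the ``in particular'' statement I then set $x=0$, so that $A(0)=\tau$, and split $\sigma-\tau$ in the numerator. The $\sigma$-term is already in the desired form, while for the $\tau$-term everything commutes and reduces to the scalar integral $\frac{\sin(\pi\beta)}{\pi}\int_0^\infty \frac{\lambda}{(\lambda+t)^2}t^\beta\,\d t$. Substituting $t=\lambda s$ turns this into $\lambda^\beta\,\frac{\sin(\pi\beta)}{\pi}\int_0^\infty \frac{s^\beta}{(1+s)^2}\,\d s$, and the evaluation $\int_0^\infty \frac{s^\beta}{(1+s)^2}\,\d s=B(\beta+1,1-\beta)=\Gamma(\beta+1)\Gamma(1-\beta)=\pi\beta/\sin(\pi\beta)$ collapses it to $\beta\lambda^\beta$; by functional calculus the operator version equals $\beta\tau^\beta$, producing the $-\beta\tau^\beta$ correction.

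The main obstacle is the analytic justification of differentiating under the integral sign rather than the algebra. I would fix a compact $x$-interval on which the eigenvalues of $A(x)$ stay bounded below by some $\delta>0$, so that $\|(A(x)+t)^{-1}(\sigma-\tau)(A(x)+t)^{-1}\|\leq \|\sigma-\tau\|\,(\delta+t)^{-2}$; then $t^\beta(\delta+t)^{-2}$ is an integrable dominating function precisely because $\beta\in(-1,1)$ ensures convergence both at $t\to 0$ (needing $\beta>-1$) and at $t\to\infty$ (needing $\beta<1$). This domination licenses the interchange and simultaneously shows that the resulting integral converges; the only point requiring care is that $\beta=0$ must be excluded, consistent with the stated range.
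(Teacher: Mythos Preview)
Your proof is correct and follows essentially the same route as the paper: both rely on the resolvent integral representations of $A^\beta$ for the two ranges of $\beta$, differentiate under the integral sign using the resolvent derivative formula, and then at $x=0$ split off the $\tau$-term and evaluate the resulting scalar integral. The only differences are cosmetic---you supply an explicit domination argument to justify the interchange and compute the final integral via the Beta/Gamma reflection identity, whereas the paper simply quotes the formula $\int_0^\infty (A+t)^{-2}t^\beta\,\d t = A^{\beta-1}/\sinc(\pi\beta)$.
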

\begin{proof}
For a positive definite operator $A>0$ we have the following integral formulas
\begin{align}
& A^\beta = -\frac{\sin{(\pi \beta)}}{\pi}\int_0^\infty \frac{1}{A+t}t^\beta \d t \quad \text{for all} \quad \beta \in (-1,0)\, ,\;  \text{and}\\
& A^\beta = \frac{\sin{(\pi \beta)}}{\pi} \int_0^\infty \frac{A}{A+t} t^{\beta-1} \d t  \quad \text{for all} \quad \beta \in (0,1) \, .
\end{align}
We then use the above formulas to obtain
\begin{align}
\label{first 1}
& \partial_x((1-x)\tau + x \sigma)^\beta = -\frac{\sin{(\pi \beta)}}{\pi}\int_0^\infty  \partial_x \frac{1}{(1-x)\tau + x \sigma+t}t^\beta \d t \quad \beta \in (-1,0) \,, \;  \text{and}\\
\label{second 1}
&  \partial_x((1-x)\tau + x \sigma)^\beta  = \frac{\sin{(\pi \beta)}}{\pi} \int_0^\infty   \partial_x   \frac{(1-x)\tau + x \sigma}{(1-x)\tau + x \sigma+t}      t^{\beta-1} \d t   \quad \beta \in (0,1) \,.
\end{align}
Moreover, we use that $ \partial \sigma(x)^{-1}/\partial x = - \sigma(x)^{-1} (\partial \sigma(x)/\partial x)  \sigma(x)^{-1} $ to obtain
\begin{align} 
\label{first 2}
&  \partial_x\frac{1}{(1-x)\tau + x \sigma+t} = - \frac{\sigma-\tau}{((1-x)\tau + x \sigma + t)^2} \,,\;  \text{and}\\
\label{second 2}
&  \partial_x \frac{(1-x)\tau + x \sigma}{(1-x)\tau + x \sigma+t} =  t \frac{\sigma-\tau}{((1-x)\tau + x \sigma + t)^2} \,.
\end{align} 
Combing equations~\eqref{first 1} and~\eqref{second 1} together with equations~\eqref{first 2} and~\eqref{second 2} proves the first part of the lemma. We now evaluate the derivative at $x=0$. We obtain
\begin{align}
\left. \partial_x ((1-x)\tau + x \sigma)^\beta \right|_{x=0} & = \frac{\sin(\pi \beta)}{\pi} \int_0^\infty \frac{\sigma-\tau}{(\tau+t)^2} t^\beta \d t \\
& = \frac{\sin(\pi \beta)}{\pi} \int_0^\infty \frac{\sigma}{(\tau+t)^2} t^\beta \d t - \frac{\sin(\pi \beta)}{\pi} \tau \int_0^\infty \frac{1}{(\tau+t)^{2}} t^\beta \d t \,.
\end{align}
We now use that, for a positive operator $A>0$, we have 
\begin{equation}
 \int_0^\infty \frac{1}{(A+ t)^2} t^{\beta} \d t = \frac{A^{\beta-1}}{\sinc(\pi \beta) }  \quad \text{for all} \quad \beta \in (-1,0) \cup (0,1) \,,
\end{equation}
to integrate the second term in the r.h.s of the previous expression.
This proves equation~\eqref{eq lemma}. 
\end{proof}
\begin{lemma}
\label{support}
Let $A,B \geq 0$. The following two conditions are equivalent:
\begin{enumerate}[ {(}1{)} ]
\item $\exists\;  0 < \gamma,\beta < \infty$ such that $\gamma \Pi(A) \leq ABA \leq \beta \Pi(A)$.
\item $\textup{supp}(ABA)=\textup{supp}(A)$.
\end{enumerate}
\end{lemma}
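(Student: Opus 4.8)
The plan is to show that both conditions are equivalent to the single statement that the positive operator $ABA$, restricted to the subspace $S:=\supp(A)$, is positive definite. The first step I would take is to record the elementary but crucial observation that $\supp(ABA)\subseteq\supp(A)$ always holds: since $ABA=A(BA)$, the range of $ABA$ is contained in the range of $A$, and for a Hermitian operator the range spans the support. Equivalently $\ker(ABA)\supseteq\ker(A)=S^{\perp}$, because $Ax=0$ forces $ABAx=0$. Consequently $ABA$ is supported entirely on the $S$-block, i.e.\ $ABA=\Pi(A)\,ABA\,\Pi(A)$, so that for every vector $x$ one has $\langle x,ABA\,x\rangle=\langle \Pi(A)x,\,ABA\,\Pi(A)x\rangle$ (using $\Pi(A)^{\dagger}=\Pi(A)$). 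This identity is what lets me translate the two-sided operator inequality back and forth between the whole space and $S$.

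For the implication $(1)\Rightarrow(2)$ I would argue by contradiction. Since $\supp(ABA)\subseteq S$ always holds, if $(2)$ failed there would exist a nonzero $x\in S$ orthogonal to $\supp(ABA)$, that is $x\in S\cap\ker(ABA)$. Then $\langle x,ABA\,x\rangle=0$, whereas the lower bound in $(1)$ gives $\langle x,ABA\,x\rangle\geq\gamma\,\langle x,\Pi(A)x\rangle=\gamma\|x\|^{2}>0$, a contradiction. Hence $\supp(ABA)=S$. Note the upper bound plays no role here beyond re-confirming the inclusion $\supp(ABA)\subseteq\supp(A)$.

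For $(2)\Rightarrow(1)$ I would use that $\supp(ABA)=S$ means exactly that $ABA$ has trivial kernel inside $S$, so its restriction $(ABA)|_{S}$ is a positive definite operator on $S$. Letting $\gamma>0$ and $\beta<\infty$ be its smallest and largest eigenvalues, I get $\gamma I_{S}\leq(ABA)|_{S}\leq\beta I_{S}$. Lifting this back through the block-support identity and the relation $\langle x,\Pi(A)x\rangle=\|\Pi(A)x\|^{2}$, for every $x$ I obtain $\gamma\langle x,\Pi(A)x\rangle\leq\langle x,ABA\,x\rangle\leq\beta\langle x,\Pi(A)x\rangle$, which is precisely $\gamma\Pi(A)\leq ABA\leq\beta\Pi(A)$ with $0<\gamma,\beta<\infty$.

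The argument is essentially routine linear algebra, and the only place requiring genuine care is the bookkeeping of supports and kernels: establishing $\supp(ABA)\subseteq\supp(A)$ and the resulting block-support identity is exactly what makes the two-sided operator inequality equivalent to positive-definiteness of the single block $(ABA)|_{S}$. I expect the only (minor) obstacle to be handling the cross terms correctly when passing between inequalities on the full space and on $S$; these vanish by Hermiticity once the block-support identity $ABA=\Pi(A)\,ABA\,\Pi(A)$ is in place, so I do not anticipate any real difficulty.
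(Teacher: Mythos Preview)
Your proposal is correct and follows essentially the same route as the paper: both establish the automatic inclusion $\supp(ABA)\subseteq\supp(A)$, use the lower bound $\gamma\Pi(A)\le ABA$ to force equality of supports (you phrase it by contradiction, the paper directly), and for the converse take $\gamma$ to be the smallest nonzero eigenvalue of $ABA$ so that $ABA\ge\lambda_{\min}(ABA)\,\Pi(ABA)=\lambda_{\min}(ABA)\,\Pi(A)$. The only cosmetic difference is that the paper picks the explicit upper constant $\beta=\lambda_{\max}(B)\lambda_{\max}^{2}(A)$ rather than the top eigenvalue of $(ABA)|_{S}$.
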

\begin{proof}
We always have $\supp(ABA)\subseteq \supp(A)$ and $ABA \leq \beta \Pi(A)$ with $\beta = \lambda_{\max}(B) \lambda^2_{\max}(A)$. Here, for a positive operator $A$, we denoted with $\lambda_{\max}(A)$ the maximum eigenvalue of $A$.

We first show $\mathit{(1)} \!\! \implies \!\! \mathit{(2)}$. Let us choose $|\psi\rangle \in \supp(A)$. We have
\begin{equation}
\langle \psi|ABA|\psi\rangle \geq \gamma \langle \psi| \Pi(A) |\psi\rangle \neq 0 \, .
\end{equation} 
which implies that $\supp(ABA)\supseteq \supp(A)$ and hence $\supp(ABA) = \supp(A)$.

We now prove that $\mathit{(2)} \!\! \implies \!\! \mathit{(1)}$. For a positive operator $A$, we denoted with $\lambda_{\min}(A)$ the minimum non-zero eigenvalue of $A$. We have
\begin{equation}
ABA \geq \lambda_{\min}(ABA) \Pi(ABA) = \lambda_{\min}(ABA)\Pi(A) \, .
\end{equation}
The implication follows by setting $\gamma=\lambda_{\min}(ABA)$.
\end{proof}

We now state the necessary and sufficient conditions for the optimizer(s) of~\eqref{problem F}. To properly formulate the necessary and sufficient conditions of the optimizer of the problem~\eqref{problem F} (see Theorem~\ref{main Theorem} below), we introduce the following set of states. For any $\rho \in \qstate$ and $(\alpha,z)\in \mathcal{D}$, we define
\begin{align}
\label{sets}
S_{\alpha,z}(\rho) =  \begin{cases} 
\left\{ \sigma \in \qstate : \supp(\rho) = \supp\!\left(\Pi(\rho)\sigma \Pi(\rho)\right) \right\}  &  \textup{if} \; (1-\alpha)/z = 1\\
\left\{ \sigma \in \qstate : \supp(\rho) \subseteq \supp(\sigma)  \right\} &  \textup{otherwise} 
\end{cases} \, ,
\end{align} 
where we denoted with $\Pi(\rho)$ the projector onto the support of $\rho$ as before. 
Note that the line $z=1-\alpha$ is located at the boundary of the DPI region for $\alpha \leq 1/2$ (see Fig.~\ref{fig: alpha-z}). We now show that the optimizer(s) of~\eqref{problem F} must belong to the previous set. We remark that these support conditions are very relevant, as without them the theorem could lead to wrong conclusions.  We have the following, 
\begin{lemma}
\label{necessary conditions}
Let $\rho$ be a quantum state and $(\alpha,z) \in \mathcal{D}$. If $\tau \in \argmin_{\sigma \in \mathcal{F}} D_{\alpha,z}(\rho \| \sigma)$ then $\tau \in S_{\alpha,z}(\rho)$.
\end{lemma}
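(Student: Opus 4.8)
The plan is to argue by contradiction, using the standing assumption that $\mathcal{F}$ contains a full-support state $\sigma_0>0$ together with the convexity of $\mathcal{F}$. For an alleged optimizer $\tau$ and $x\in[0,1]$ set $\tau_x:=(1-x)\tau+x\sigma_0\in\mathcal{F}$; note that $\tau_x>0$ for every $x>0$. I would show that whenever $\tau\notin S_{\alpha,z}(\rho)$ the one-sided derivative $\partial_x D_{\alpha,z}(\rho\|\tau_x)\big|_{x=0^+}$ equals $-\infty$, so that moving an infinitesimal amount from $\tau$ towards $\sigma_0$ strictly lowers the objective and $\tau$ cannot minimize~\eqref{problem F}. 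Throughout I write $\beta:=(1-\alpha)/z$ and $K:=\rho^{\alpha/2z}$, so that $\supp K=\supp\rho$ and $Q_{\alpha,z}(\rho\|\sigma)=\Tr\big((K\sigma^{\beta}K)^{z}\big)$.

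The case $\alpha\geq1$ is immediate and needs no perturbation: here $(1-\alpha)/z\leq0<1$, so we are in the ``otherwise'' branch of~\eqref{sets}. Since $\sigma_0>0$ gives $D_{\alpha,z}(\rho\|\sigma_0)<\infty$, any minimizer $\tau$ also satisfies $D_{\alpha,z}(\rho\|\tau)<\infty$; by the definition~\eqref{definition} (and by finiteness of the Umegaki entropy on the line $\alpha=1$) this forces $\rho\ll\tau$, i.e.\ $\supp\rho\subseteq\supp\tau$, which is exactly $\tau\in S_{\alpha,z}(\rho)$.

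The substance lies in $\alpha<1$, where $D_{\alpha,z}(\rho\|\tau)$ is finite as soon as $\rho\not\perp\tau$, so finiteness alone is insufficient and a genuine first-order computation is required; note also that optimality already forces $\rho\not\perp\tau$, hence $Q_{\alpha,z}(\rho\|\tau)>0$. The DPI constraints give $\beta\in(0,1]$, with $\beta=1$ precisely on the special line $z=1-\alpha$, and for $\alpha<1$ minimizing $D_{\alpha,z}$ is the same as maximizing $Q_{\alpha,z}$. In the generic case $\beta\in(0,1)$ I would assume $\supp\rho\not\subseteq\supp\tau$ and differentiate $Q_{\alpha,z}(\rho\|\tau_x)=\Tr\big((K\tau_x^{\beta}K)^{z}\big)$ using $\tfrac{\d}{\d x}\Tr(A_x^{z})=z\,\Tr\!\big(A_x^{z-1}\,\partial_x A_x\big)$ together with the Fr\'echet-derivative formula~\eqref{eq lemma} of Lemma~\ref{technical}. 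Writing $P_\perp$ for the projector onto $\ker\tau$, the integral in~\eqref{eq lemma} restricted to this block is $\propto\int_0^\infty t^{\beta-2}\,\d t$, which diverges at $t=0$ because $\beta-2<-1$; since $\supp\rho\not\subseteq\supp\tau$ is equivalent to $P_\perp K\neq0$, this divergence survives the sandwiching by $K$ and the trace against the (positive) factor $A_0^{z-1}$, giving $\partial_x Q_{\alpha,z}(\rho\|\tau_x)\big|_{x=0^+}=+\infty$ and hence $\partial_x D_{\alpha,z}(\rho\|\tau_x)\big|_{x=0^+}=-\infty$.

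For the special line $\beta=1$ Lemma~\ref{technical} does not apply, but now $\sigma^{\beta}=\sigma$ is linear, so $A_x:=K\tau_x K=(1-x)K\tau K+x\,K\sigma_0 K$ is affine in $x$. Here I would use Lemma~\ref{support} twice: first to rephrase $\tau\notin S_{\alpha,z}(\rho)$, i.e.\ $\supp(\Pi(\rho)\tau\Pi(\rho))\neq\supp\rho$, as the statement that $A_0=K\tau K$ has a nontrivial kernel inside $\supp\rho$; and second to guarantee that $K\sigma_0 K$ is of full rank on $\supp\rho$, so that this kernel direction is lifted linearly in $x$. Since $z=1-\alpha\in(0,1)$, the map $t\mapsto t^{z}$ has infinite slope at $t=0^+$, so the lifted eigenvalue contributes a term of order $x^{z}=x^{1-\alpha}$ to $\Tr(A_x^z)$, again yielding $\partial_x Q_{\alpha,z}(\rho\|\tau_x)\big|_{x=0^+}=+\infty$ and contradicting optimality. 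The main obstacle is making these first-order estimates fully rigorous when $\tau$ has a nontrivial kernel (so that Lemma~\ref{technical}, stated for $\tau>0$, must be applied to $\tau_x$ and passed to the limit $x\to0^+$) and the relevant operators do not commute; concretely one must control the block structure with respect to $\supp\tau\oplus\ker\tau$ and the cross terms, or, as a cleaner alternative, replace the derivative by the value estimate $Q_{\alpha,z}(\rho\|\tau_x)\geq Q_{\alpha,z}(\rho\|\tau)+c\,x^{\gamma}$ with some $\gamma\in(0,1)$ and $c>0$, which already furnishes the contradiction for small $x>0$.
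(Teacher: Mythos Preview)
Your strategy coincides with the paper's: the $\alpha\geq1$ case is verbatim the same, and for $\alpha<1$ both you and the paper perturb $\tau$ towards a full-support free state $\sigma_0$ and argue that $Q_{\alpha,z}(\rho\|\tau_x)$ increases strictly near $x=0$ whenever the relevant support condition fails, with the two sub-cases $\beta\in(0,1)$ and $\beta=1$ treated separately. The obstacle you flag at the end---that Lemma~\ref{technical} is stated for $\tau>0$ and that the block/cross-term structure for noncommuting $\tau,\sigma_0$ must be controlled---is exactly what the paper resolves, and it does so essentially via your suggested ``value estimate'' alternative, but in differential form: instead of evaluating the Fr\'echet derivative at $x=0$, the paper keeps $x>0$ throughout (so $\tau_x>0$ and Lemma~\ref{technical} applies directly to $\tau_x$), and then produces explicit lower bounds $\partial_x Q_{\alpha,z}(\rho\|\tau_x)\geq c\,x^{-\gamma}-C$ with $\gamma\in\{\alpha,\,1-\tfrac{1-\alpha}{z}\}>0$ by using operator (anti)monotonicity of the powers $t\mapsto t^s$ for $s\in[-1,1]$ to dominate $\tau_x$ and $K\tau_x^{\beta}K$ by coarse diagonal objects ($\Pi(\tau)+x\,\Pi^c(\tau)$, $\Pi(\rho)$, etc.), thereby sidestepping the noncommutativity entirely. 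So your plan is correct and on the same track; what remains is precisely this bookkeeping with operator-monotone bounds at $x>0$ rather than attempting to differentiate at the singular endpoint.
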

We split the proof for the region $(\alpha,z) \in \mathcal{D}$ in different ranges.
\begin{proof}[Proof for $\alpha \geq 1$]
For $\alpha \geq 1$, if  $\tau \in \argmin_{\sigma \in \mathcal{F}} D_{\alpha,z}(\rho \| \sigma)$, it must be that $\supp(\rho)\subseteq \supp(\tau)$ or otherwise the $\mathfrak{D}_{\alpha,z}(\rho)$, by definition, would be infinite. However, by assumption, there always exists a free state $\sigma$ with full support such that $\mathfrak{D}_{\alpha,z}(\rho) \leq D_{\alpha,z}(\rho \| \sigma) < \infty$. 
\end{proof}

\begin{proof}[Proof for $z=1-\alpha \wedge \alpha \in (0,1)$]
In this case, we show that a necessary condition for $\tau$ to be an optimizer is that $ \supp\!\left(\Pi(\rho)\tau \Pi(\rho)\right) = \supp(\rho)$. To do that, we show that, if $\tau$ does not satisfy the latter conditions on the support, we can always find a sufficiently small $x$ and a free state $\sigma $ such that $D_{\alpha,z}(\rho \| (1-x)\tau+x\sigma) < D_{\alpha,z}(\rho \| \tau)$ (or, since  $\alpha \in (0,1)$, $Q_{\alpha,z}(\rho \| (1-x)\tau+x\sigma) > Q_{\alpha,z}(\rho \| \tau)$). This contradicts the fact that $\tau$ is an optimizer. Here, we implicitly used that the resource theory is convex, i.e. $(1-x)\tau+x\sigma \in \mathcal{F}$.

We start by calculating the Fr\'echet derivative along a free direction $\sigma$ with full support. We have
\begin{align}
\partial_x Q_{\alpha,z}(\rho \| (1-x)\tau + x \sigma) =& z\Tr(\chi_{\alpha,z}(\rho,(1-x)\tau + x \sigma) (\sigma-\tau)) \\
\label{derivative}
=& z\Tr(\chi_{\alpha,z}(\rho,(1-x)\tau + x \sigma) \sigma)-z\Tr(\chi_{\alpha,z}(\rho,(1-x)\tau + x \sigma) \tau) \, .
\end{align}
where we defined the positive operator $\chi_{\alpha,z}(\rho,\sigma):=  \rho^\frac{\alpha}{2z}( \rho^{\frac{\alpha}{2z}}\sigma^{\frac{1-\alpha}{z}} \rho^{\frac{\alpha}{2z}})^{z-1} \rho^\frac{\alpha}{2z}$. The above function is a continuous and finite function of $x$ for $x \neq 0$. 
In the following, we repeatedly use that if $A,A',B \geq 0$ and $A \leq A'$, then $ \Tr(AB)\leq \Tr(A'B)$.
We now bound separately the two terms of equation~\eqref{derivative}. We define $\beta:= \alpha/(2z)$. To bound the first term, we use that $\rho^{\beta}((1-x)\tau + x \sigma)\rho^{\beta} \leq (1-x)\Pi(\rho^{\beta} \tau  \rho^{\beta}) + x \Pi(\rho)$. Here, we used that $\rho^{2 \beta} \leq \Pi(\rho)$ and that $\Tr(\tau \rho^{2\beta}) \leq 1$. This implies that
\begin{align}
\label{1st line}
\Tr(\chi_{\alpha,z}(\rho,(1-x)\tau + x \sigma) \sigma) &= \Tr\left(\rho^{\beta}(\rho^{\beta}((1-x)\tau + x \sigma)\rho^{\beta})^{-\alpha}\rho^{\beta}\sigma\right)  \\
\label{1st implication}
&\geq \lambda_{\min}(\sigma) \Tr\left(\rho^{2\beta}(\Pi(\rho^{\beta} \tau  \rho^{\beta})+x^{-\alpha}(\Pi(\rho)-\Pi(\rho^{\beta} \tau  \rho^{\beta}))\right)\\
\label{first term derivative I}
& \geq  x^{-\alpha} \Tr\left(\Pi(\rho)(\Pi(\rho)-\Pi(\rho^\beta \tau \rho^\beta))\right) K \,,
\end{align}
where we denoted with $\lambda_{\min}(\rho)$ the minimum non-zero eigenvalue of $\rho$, and we defined $K:=\lambda_{\min}(\sigma)\lambda^{2\beta}_{\min}(\rho)$. In~\eqref{1st implication} we used that $t \mapsto t^s$ is operator antimonotone for $s \in [-1,0)$. 
We now bound the second term in~\eqref{derivative}. We first note that $Q_{\alpha,z}(\rho \| \sigma) \leq 1$ for any $\rho, \sigma \in \qstate$. Indeed, we have
\begin{equation}
Q_{\alpha,z}(\rho \| \sigma) = \int Q_{\alpha,z}\left(U\rho U^\dagger \middle\| U\sigma U^\dagger \right) \d U \leq Q_{\alpha,z}\left( \int U\rho U^\dagger \d U \middle\| \int U\sigma U^\dagger \d U \right)  = Q_{\alpha,z}\left(\frac{\mathds{1}}{d} \middle\| \frac{\mathds{1}}{d} \right) =1 \,,
\end{equation} 
where we denoted with $\d U$ the invariant Haar measure. In the first equality, we used the unitary invariance of $Q_{\alpha,z}$ and in the inequality, we used that for $\alpha \in (0,1)$, the $Q_{\alpha,z}$ is jointly concave. 

We then use that $\tau \leq \tau +(x/(1-x))\sigma$ to get
\begin{align}
 \Tr(\chi_{\alpha,z}(\rho,(1-x)\tau + x \sigma) \tau) & = \Tr\left(\rho^{\beta}(\rho^{\beta}((1-x)\tau + x \sigma)\rho^{\beta})^{-\alpha}\rho^{\beta}\tau\right) \\
&\leq \frac{1}{1-x} Q_{\alpha,z}(\rho \| (1-x)\tau + x \sigma) \\
& \leq 2 \,.
\end{align}
In the last inequality, we used that $Q_{\alpha,z}(\rho \| \sigma) \leq 1$ and we assumed that $x \leq 1/2$. Combining the above results we obtain, for $x \leq 1/2$,
\begin{equation}
\label{lower bound derivative}
\partial_x Q_{\alpha,1-\alpha}(\rho \| (1-x)\tau + x \sigma) \geq  x^{-\alpha} \Tr\left(\Pi(\rho)(\Pi(\rho)-\Pi(\rho^\beta \tau \rho^\beta))\right) K -2 \,.
\end{equation}
The trace $\Tr\left(\Pi(\rho)(\Pi(\rho)-\Pi(\rho^\beta \tau \rho^\beta))\right)$ is different from zero only if $ \supp\!\left(\rho^{\beta}\sigma \rho^{\beta}\right) \subset \supp(\rho)$. In this case, the derivative~\eqref{lower bound derivative} goes to infinity as $x^{-\alpha}$ as $x$ vanishes. This means that in the neighborhood of $x=0$ the function $Q_{\alpha,z}(\rho \| (1-x)\tau+x\sigma)$ is a strictly increasing (and continuous) function of $x$ and hence we can always find a sufficiently small $x$ such that $Q_{\alpha,z}(\rho \| (1-x)\tau+x\sigma) > Q_{\alpha,z}(\rho \| \tau)$. This implies that if $\tau \in \argmin_{\sigma \in \mathcal{F}} D_{\alpha,z}(\rho \| \sigma)$ then $ \supp\!\left(\rho^{\beta}\tau \rho^{\beta}\right) = \supp(\rho)$.  

We finally use Lemma~\ref{support} to show that  $ \supp\!\left(\rho^{\beta}\tau \rho^{\beta}\right) = \supp(\rho)$ if and only if $ \supp\!\left(\Pi(\rho)\tau \Pi(\rho)\right) =\supp(\rho)$. Indeed, we have that $\gamma \Pi(\rho) \leq \rho^{\beta} \tau  \rho^{\beta} \leq \beta \Pi(\rho)$ implies
\begin{align}
\lambda_{\min}\left( \rho^{-2\beta}\right) \Pi(\rho) \leq \gamma \rho^{-2\beta} \leq \Pi(\rho) \tau  \Pi(\rho) \leq \beta \rho^{-2\beta} \leq \beta \lambda_{\max}\left(\rho^{-2\beta}\right)  \Pi(\rho) \,,
\end{align}
where we used that since the negative power is taken only outside the kernel of $\rho$, it holds $\rho^{-\beta} \rho^{\beta} = \Pi(\rho)$.
Conversely, the relation $\gamma \Pi(\rho) \leq \Pi(\rho) \tau  \Pi(\rho) \leq \beta \Pi(\rho) $ implies that 
\begin{align}
\gamma \lambda_{\min}\left( \rho^{2\beta}\right) \Pi(\rho) \leq \gamma \rho^{2\beta}  \leq  \rho^{\beta} \tau  \rho^{\beta} \leq \beta \rho^{2\beta} \leq \beta \lambda_{\max}\left(\rho^{2\beta}\right) \Pi(\rho) \,.
\end{align}
This means that the condition $ \supp\!\left(\Pi(\rho)\tau \Pi(\rho)\right) =\supp(\rho)$ is a necessary condition for $\tau$ to be a minimum of $\mathfrak{D}_{\alpha,z}(\rho)$. 
\end{proof}

\begin{proof}[Proof for $|(1-\alpha)/z| \neq 1 \, \wedge \, \alpha \in (0,1)$]
The strategy for the proof in this range is similar to the one provided for the case $z=1-\alpha$. We first calculate the Fr\'echet derivative along the direction $\sigma$ with full support 
\begin{align}
&\partial_x Q_{\alpha,z}(\rho \| (1-x)\tau+x\sigma)= z \Tr\left( \chi_{\alpha,z}(\rho,(1-x)\tau+x\sigma) \partial_x ((1-x)\tau + x \sigma)^\frac{1-\alpha}{z}\right) \,.
\end{align}
We now use Lemma~\ref{technical} to get
\begin{align}
\label{sum derivative}
&\partial_x Q_{\alpha,z}(\rho \| (1-x)\tau+x\sigma) \notag \\
& \qquad \qquad \qquad
=(1-\alpha)K_{\alpha,z} \Tr \left(\chi_{\alpha,z}(\rho,(1-x)\tau+x\sigma)\int_0^\infty \frac{\sigma-\tau}{((1-x)\tau + x \sigma+t)^{2}} t^\frac{1-\alpha}{z} \d t \right) \,,
\end{align}
where we defined the positive constant $K_{\alpha,z}:=\sinc\left(\pi\frac{{1-\alpha}}{z}\right)$. 
We first bound the first term in the sum on the r.h.s of equation~\eqref{sum derivative}. We have
\begin{align}
& \Tr \left(\chi_{\alpha,z}(\rho,(1-x)\tau+x\sigma) \int_0^\infty \frac{\sigma}{((1-x)\tau + x \sigma+t)^{2}} t^\beta \d t \right) \\
&\qquad \qquad \qquad \qquad \geq \lambda_{\min}(\sigma)K^{-1}_{\alpha,z}  \Tr \left( \chi_{\alpha,z}(\rho,(1-x)\tau+x\sigma) ((1-x)\tau+x\sigma)^{\frac{1-\alpha}{z}-1}\right) \,,
\end{align}
where we used that for a positive operator $A>0$ it holds
\begin{equation}
\int_0^\infty \frac{1}{(A+ t)^2} t^{\beta} \d t =  \frac{A^{\beta-1} }{\sinc{(\pi \beta)}}  \; \quad \text{for}\;\;   \beta \in (-1,0) \cup (0,1) \,.
\end{equation}
Let us denote $\beta:=\alpha/(2z)$. We distinguish two cases. In the following, we repeatedly use that for $A,B \geq 0$ such that $[A,B] = 0$ it holds $A \leq B$ if and only if $A^\gamma \leq B^\gamma$ for any $\gamma > 0$. Moreover, if $\gamma <0$, the latter inequality goes in the opposite direction.

For $z \leq 1$ we have that $((1-x)\tau + x \sigma) \leq \Pi(\tau) + x \Pi^c(\tau)$ and $\rho^\beta((1-x)\tau + x \sigma)^\frac{1-\alpha}{z}\rho^\beta \leq \Pi(\rho)$ where we denoted the projector $\Pi^c(\tau):=\mathds{1}-\Pi(\tau)$. This implies that 
\begin{align}
\Tr \left( \chi_{\alpha,z}(\rho,(1-x)\tau+x\sigma) ((1-x)\tau+x\sigma)^{\frac{1-\alpha}{z}-1}\right) 
\label{op-ant}
&\geq \Tr \left( \rho^{2\beta} (\Pi(\tau) + x^{\frac{1-\alpha}{z}-1} \Pi^c(\tau)) \right) \\
\label{first functional}
& \geq  x^{\frac{1-\alpha}{z}-1}  \Tr(\Pi(\rho)\Pi^c(\tau)) K_1 \,,
\end{align}
where we denoted and $K_1:= \lambda^{2\beta}_{\min}(\rho)$ is a positive constant. In~\eqref{op-ant} we used that for $|(1-\alpha)/z| \neq \, 1 \wedge \,\alpha \in (0,1)$ we have $(1-\alpha)/z-1 \in (-1,0)$.

For $z>1$ we use that $\rho^\beta((1-x)\tau + x \sigma)^\frac{1-\alpha}{z}\rho^\beta \geq x^\frac{1-\alpha}{z}\lambda_{\min}(\sigma)\lambda^{2\beta}_{\min}(\rho) \Pi(\rho)$. This implies that
\begin{align}
&\Tr \left( \chi_{\alpha,z}(\rho,(1-x)\tau+x\sigma) ((1-x)\tau+x\sigma)^{\frac{1-\alpha}{z}-1}\right)  \\
&\qquad \qquad \qquad \geq \left(x^\frac{1-\alpha}{z}\lambda_{\min}(\sigma)\lambda^{2\beta}_{\min}(\rho)\right)^{z-1} \Tr \left( \rho^{2\beta} (\Pi(\tau) + x^{\frac{1-\alpha}{z}-1} \Pi^c(\tau)) \right) \\
\label{second functional}
&  \qquad \qquad \qquad \geq x^{-\alpha} \Tr(\Pi(\rho)\Pi^c(\tau)) K_2 \,,
\end{align}
where $K_2:= \lambda^{z-1}_{\min}(\sigma)\lambda^{2\beta z}_{\min}(\rho)$ is a positive constant.

We now bound the second term in the sum on the r.h.s of equation~\eqref{sum derivative}. We use that $\tau \leq \tau + (x/(1-x))\sigma$ to obtain
\begin{align}
\Tr \left(\chi_{\alpha,z}(\rho,(1-x)\tau+x\sigma) \int_0^\infty \frac{\tau}{((1-x)\tau + x \sigma+t)^{2}} t^\beta \d t \right) &\leq K^{-1}_{\alpha,z} \frac{1}{1-x} Q_{\alpha,z}(\rho \| (1-x)\tau + x \sigma) \\
&  \leq  2 K^{-1}_{\alpha,z} \,,
\end{align}
where we assumed $x \leq 1/2$. Combining the above results, we get
\begin{equation}
\partial_x Q_{\alpha,z}(\rho \| (1-x)\tau+x\sigma) \geq
\begin{cases}
x^{\frac{1-\alpha}{z}-1}  \Tr(\Pi(\rho)\Pi^c(\tau)) C_1 - 2 D & z \leq 1 \\
x^{-\alpha} \Tr(\Pi(\rho)\Pi^c(\tau)) C_2 - 2 D& z>1
\end{cases} \,,
\end{equation}
where $C_i:= K_i \lambda_{\min}(\sigma)K^{-1}_{\alpha,z}$, and $D:= 1-\alpha$ are two positive constants. 
The trace $\Tr(\Pi(\rho)\Pi^c(\tau))$ is different from zero if $ \supp\!\left(\tau\right) \subset \supp(\rho)$. In this case, the above derivative diverges as $x$ vanishes for both $z \leq 1$ and $z>1$. As we discussed above for the case $z=\alpha-1$, this means that in the neighborhood of $x=0$ the function $Q_{\alpha,z}(\rho \| (1-x)\tau+x\sigma)$ is a strictly increasing (and continuous) function of $x$ and hence we could always find a sufficiently small $x$ such that $Q_{\alpha,z}(\rho \| (1-x)\tau+x\sigma) > Q_{\alpha,z}(\rho \| \tau)$. This means that, as for the range $\alpha \geq 1$, the condition $ \supp(\rho) \subseteq \supp(\tau)$ is a necessary condition for any optimizer in the range $|(1-\alpha)/z| \neq 1 \wedge \alpha \in (0,1)$.
\end{proof}

To formulate the theorem below in a compact way, we introduce the following positive operator. Let $\rho,\tau \in \qstate$. We define
\begin{align}
\label{equation problem}
&\Xi_{\alpha,z}(\rho,\tau) :=  \begin{dcases}
\chi_{\alpha,1-\alpha}(\rho,\tau) &  \textup{if} \; z=1-\alpha \\
\tau^{-1}  \chi_{\alpha,\alpha-1}(\rho, \tau) \tau^{-1}  &  \textup{if} \; z=\alpha-1 \\
K_{\alpha,z} \int_0^\infty \frac{\chi_{\alpha,z}(\rho,\tau)}{(\tau + t)^2} t^\frac{1-\alpha}{z} \d t & \textup{if} \; |(1-\alpha)/z| \neq 1
\end{dcases} \,,
\end{align}
where $K_{\alpha,z}:=\sinc\left(\pi\frac{{1-\alpha}}{z}\right)$ is a positive constant and we defined the positive operator $\chi_{\alpha,z}(\rho,\tau):=  \rho^\frac{\alpha}{2z}( \rho^{\frac{\alpha}{2z}}\tau^{\frac{1-\alpha}{z}} \rho^{\frac{\alpha}{2z}})^{z-1} \rho^\frac{\alpha}{2z}$.
Note that the two lines $z=1-\alpha$ and $z=\alpha-1$ are located at the boundaries of the DPI region for $\alpha \leq 1/2$ and $\alpha \geq 2$, respectively (see Fig.~\ref{fig: alpha-z}). We are now ready to state the necessary and sufficient conditions for the optimizer(s) of~\eqref{problem F}.

\begin{theorem}
\label{main Theorem}
Let $\rho$ be a quantum state and $(\alpha,z) \in \mathcal{D}$. Then $\tau \in \argmin_{\sigma \in \mathcal{F}} D_{\alpha,z}(\rho \| \sigma)$ if and only if $\tau \in S_{\alpha,z}(\rho)$ and $\textup{Tr}(\sigma\, \Xi_{\alpha,z}(\rho,\tau)) \leq  Q_{\alpha,z}(\rho \|\tau)$ for all $\sigma \in \mathcal{F}$.
\end{theorem}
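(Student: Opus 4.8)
The plan is to read Theorem~\ref{main Theorem} as the explicit first-order optimality condition for a convex program. Writing $D_{\alpha,z}(\rho\|\sigma)=\frac{1}{\alpha-1}\log Q_{\alpha,z}(\rho\|\sigma)$ and using that $\log$ is increasing, minimizing $D_{\alpha,z}(\rho\|\cdot)$ over the convex set $\mathcal{F}$ is equivalent to \emph{maximizing} $Q_{\alpha,z}(\rho\|\cdot)$ when $\alpha<1$ and to \emph{minimizing} it when $\alpha>1$. Since $Q_{\alpha,z}(\rho\|\cdot)$ is jointly concave for $\alpha<1$ and jointly convex for $\alpha>1$~\cite{zhang2020wigner}, this is in both cases a genuine convex optimization, so $\tau$ is globally optimal if and only if no feasible direction improves the objective. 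I would make this precise through the scalar function $g_\sigma(x):=Q_{\alpha,z}(\rho\|(1-x)\tau+x\sigma)$ on $[0,1]$, which is concave (for $\alpha<1$) or convex (for $\alpha>1$) as the restriction of $Q_{\alpha,z}$ to a segment, and whose one-sided derivative $g_\sigma'(0^+)$ equals $\partial_x Q_{\alpha,z}(\rho\|(1-x)\tau+x\sigma)\big|_{x=0}$.

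First I would invoke Lemma~\ref{necessary conditions} to dispense with the necessity of $\tau\in S_{\alpha,z}(\rho)$; this membership is also exactly what guarantees that $g_\sigma'(0^+)$ is finite and admits the closed form below. Assuming $\tau\in S_{\alpha,z}(\rho)$, I would evaluate $g_\sigma'(0^+)$ using the derivative computations already performed in the proof of Lemma~\ref{necessary conditions}, namely the direct formula~\eqref{derivative} on the boundary $|(1-\alpha)/z|=1$ and Lemma~\ref{technical} together with~\eqref{sum derivative} elsewhere. Separating the $\sigma$-linear and $\tau$-linear contributions and resolving the $\tau$-term with the operator integral $\int_0^\infty (\tau+t)^{-1}\tau(\tau+t)^{-1}\,t^{\frac{1-\alpha}{z}}\,\d t=\tau^{\frac{1-\alpha}{z}}/\sinc(\pi\tfrac{1-\alpha}{z})$, this yields
\begin{equation}
g_\sigma'(0^+)=(1-\alpha)\big(\Tr(\sigma\,\Xi_{\alpha,z}(\rho,\tau))-\Tr(\tau\,\Xi_{\alpha,z}(\rho,\tau))\big),
\end{equation}
with $\Xi_{\alpha,z}$ as in~\eqref{equation problem}. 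The key accompanying identity is $\Tr(\tau\,\Xi_{\alpha,z}(\rho,\tau))=Q_{\alpha,z}(\rho\|\tau)$, which I would obtain from the same integral and cyclicity of the trace: it collapses $\Tr(\tau\Xi)$ to $\Tr\big(\chi_{\alpha,z}(\rho,\tau)\,\tau^{\frac{1-\alpha}{z}}\big)=\Tr\big(\rho^{\frac{\alpha}{2z}}\tau^{\frac{1-\alpha}{z}}\rho^{\frac{\alpha}{2z}}\big)^{z}=Q_{\alpha,z}(\rho\|\tau)$.

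With this derivative I would close both directions using monotonicity of the chord slopes of $g_\sigma$. For $\alpha<1$, concavity gives $g_\sigma(1)-g_\sigma(0)\le g_\sigma'(0^+)$ (sufficiency) while letting $x\to0^+$ in $(g_\sigma(x)-g_\sigma(0))/x\le0$ gives necessity, so optimality is equivalent to $g_\sigma'(0^+)\le0$ for all $\sigma$; for $\alpha>1$ the inequalities reverse and the condition is $g_\sigma'(0^+)\ge0$. Substituting the formula for $g_\sigma'(0^+)$ together with $\Tr(\tau\Xi)=Q_{\alpha,z}(\rho\|\tau)$ and tracking the sign of the prefactor $1-\alpha$, both regimes collapse to the single stated inequality $\Tr(\sigma\,\Xi_{\alpha,z}(\rho,\tau))\le Q_{\alpha,z}(\rho\|\tau)$ for all $\sigma\in\mathcal{F}$. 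The line $\alpha=1$, where $Q_{\alpha,z}\equiv1$ and the $Q$-reformulation degenerates, I would treat by running the identical variational argument directly on the jointly convex Umegaki relative entropy; there the derivative reduces to $\int_0^\infty(\tau+t)^{-1}\rho(\tau+t)^{-1}\,\d t$, which is precisely $\Xi_{1,z}(\rho,\tau)$, and $\Tr(\tau\,\Xi_{1,z})=\Tr(\rho\,\Pi(\tau))=1=Q_{1,z}(\rho\|\tau)$, so the statement persists (alternatively one may pass to the limit via~\eqref{limits monotone}).

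The hard part will be the analytic control of $g_\sigma'(0^+)$ for \emph{arbitrary} $\sigma\in\mathcal{F}$ and base points $\tau$ that need not be full rank but only lie in $S_{\alpha,z}(\rho)$, where all negative powers are generalized inverses. One must check simultaneously that the integral representations of the derivative converge at $x=0$ (so that the closed form for $g_\sigma'(0^+)$ is legitimate) and that failure of the support condition genuinely destroys optimality — this is the role of Lemma~\ref{necessary conditions}, itself leaning on the support characterization of Lemma~\ref{support}, and it explains the two regimes in the definition of $S_{\alpha,z}(\rho)$ and the appearance of the boundary line $|(1-\alpha)/z|=1$, where the integral formula must be replaced by a direct computation. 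Once differentiability on $S_{\alpha,z}(\rho)$ is secured, the convexity argument and the sign bookkeeping are routine.
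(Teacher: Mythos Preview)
Your proposal is correct and follows essentially the same approach as the paper: both reduce global optimality to the first-order condition via joint concavity/convexity of $Q_{\alpha,z}$, invoke Lemma~\ref{necessary conditions} for the support condition, and compute the Fr\'echet derivative $\partial_\sigma Q_{\alpha,z}(\rho\|\tau)$ using Lemma~\ref{technical} (with direct computation on the boundary $|(1-\alpha)/z|=1$) to extract the inequality $\Tr(\sigma\,\Xi_{\alpha,z})\le Q_{\alpha,z}$. Your write-up is somewhat more explicit about the convex-analysis step (the chord-slope argument for $g_\sigma$) and about the identity $\Tr(\tau\,\Xi_{\alpha,z}(\rho,\tau))=Q_{\alpha,z}(\rho\|\tau)$, while the paper spells out the boundary case $z=\alpha-1$ separately via $\tau^{-1}\chi_{\alpha,\alpha-1}\tau^{-1}=\chi_{\alpha,1-\alpha}$; these are cosmetic differences.
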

We remark the above optimization is much easier to perform than the original optimization in~\eqref{problem F} since, by the linearity of the trace, it can be restricted to the set of pure free states. The reader may argue that the problem simplifies only once an ansatz to the problem is provided and that in general finding such an ansatz is not an easy task. However, to investigate the additivity problem, the ansatz is given `for free', being the tensor product of the optimizers of the marginal problems (see the discussion below).  

\begin{proof}
Since the function $Q_{\alpha,z}$ is jointly concave/convex in the range $(\alpha,z) \in \mathcal{D}$ where $D_{\alpha,z}$ satisfies the data-processing inequality, as we discussed above, it follows that the condition of local minimum is both sufficient and necessary. 

The condition of the theorem $\textup{Tr}(\sigma\, \Xi_{\alpha,z}(\rho,\tau)) \leq  Q_{\alpha,z}(\rho \|\tau)$ for all $\sigma \in \mathcal{F}$ follows by the condition of local minimum for $\tau$, i.e. $\partial_\sigma Q_{\alpha,z}(\rho \| \tau) \leq 0$ and $\partial_\sigma Q_{\alpha,z}(\rho \| \tau) \geq 0$ for any $\sigma \in \mathcal{F}$ for $\alpha<1$ and $\alpha >1$, respectively.

Indeed, let $\tau \in S_{\alpha,z}(\rho)$. For $z = 1-\alpha$ ($\alpha \in (0,1)$), we have
\begin{align}
\partial_{\sigma} Q_{\alpha,1-\alpha}(\rho \|\tau) = z \Tr(\chi_{\alpha,1-\alpha}(\rho,\tau) (\sigma-\tau)) = z\Tr(\chi_{\alpha,1-\alpha}(\rho,\tau)\sigma) -  z Q_{\alpha,1-\alpha}(\rho \| \tau) \,.
\end{align}

For $z=\alpha-1$ ($\alpha > 1$) we use that $\partial \sigma(x)^{-1}/\partial x = - \sigma(x)^{-1} (\partial \sigma(x)/\partial x)  \sigma(x)^{-1}$ to obtain
\begin{align}
\partial_\sigma Q_{\alpha,\alpha-1}(\rho \| \tau) &= - z \Tr\left( \chi_{\alpha,\alpha-1}(\rho, \tau) \tau^{-1}(\sigma - \tau) \tau^{-1}  \right)\\
& = - z\Tr\left(\tau^{-1}  \chi_{\alpha,\alpha-1}(\rho,\tau) \tau^{-1}\sigma\right) + zQ_{\alpha,\alpha-1}(\rho \| \tau) \,.
\end{align}

For $(1-\alpha)/z \in (-1,0) \cup (0,1)$ we use Lemma~\ref{eq lemma} to obtain
\begin{align}
\partial_\sigma Q_{\alpha,z}(\rho \| \tau) =(1-\alpha)K_{\alpha,z}  \Tr \left(\sigma \int_0^\infty \frac{\chi_{\alpha,z}(\rho,\tau) }{(\tau + t)^{2}} t^\frac{1-\alpha}{z} \d t \right) - (1-\alpha) Q_{\alpha,z}(\rho\|\tau) \,.
\end{align}
Note that in the range $(\alpha,z) \in \mathcal{D} \wedge |(1-\alpha)/z| \neq 0,1$ , the condition $(1-\alpha)/z \in (-1,0)\cup (0,1)$ is always satisfied. The theorem for $|(1-\alpha)/z| \neq 0,1$ then follows by combining the above results with the local minimum conditions above-mentioned.  For the Umegaki relative entropy, i.e. $\alpha=1$ (or $|(1-\alpha)/z| = 0$), the result has already been obtained in~\cite{vedral1998entanglement} and can be recovered by taking the limit $\alpha \rightarrow 1$ of the result for the range $|1-\alpha)/z| \neq 1$ (see the discussion below).
\end{proof}

As we show below, many states we consider commute with their optimizer. In this case, the above theorem considerably simplifies. 

\begin{corollary}
\label{commuting}
Let $\rho$ be a quantum state and $(\alpha,z) \in \mathcal{D}$. Then, a state $\tau$ satisfying $[\rho,\tau]= 0$ belongs to the set $\tau \in \argmin_{\sigma \in \mathcal{F}} D_{\alpha,z}(\rho \| \sigma)$ if and only if $\supp(\rho)\subseteq \supp(\tau)$ and $\textup{Tr}(\sigma \Xi_{\alpha}(\rho,\tau)) \leq  Q_\alpha(\rho \|\tau)$ for all $\sigma \in \mathcal{F}$ where $\Xi_{\alpha}(\rho,\tau) = \rho^\alpha \tau^{-\alpha}$ and $Q_\alpha(\rho \|\tau) = \textup{Tr}(\rho^\alpha \tau^{1-\alpha})$. 
\end{corollary}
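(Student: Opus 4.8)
The plan is to specialize Theorem~\ref{main Theorem} to the commuting case $[\rho,\tau]=0$, showing that each of its three ingredients---the membership $\tau\in S_{\alpha,z}(\rho)$, the scalar $Q_{\alpha,z}(\rho\|\tau)$, and the operator $\Xi_{\alpha,z}(\rho,\tau)$---collapses to a $z$-independent form. Throughout I would simultaneously diagonalize $\rho$ and $\tau$, which is permissible precisely because they commute, and thereby reduce every operator expression to a functional-calculus computation on common eigenvectors.

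First I would simplify the support condition in~\eqref{sets}. In the generic branch the requirement $\tau\in S_{\alpha,z}(\rho)$ already reads $\supp(\rho)\subseteq\supp(\tau)$, so there is nothing to do. In the boundary branch $(1-\alpha)/z=1$ I would use that $\Pi(\rho)$ commutes with $\tau$, so $\supp(\Pi(\rho)\tau\Pi(\rho))=\supp(\rho)\cap\supp(\tau)$; demanding that this equal $\supp(\rho)$ is then equivalent to $\supp(\rho)\subseteq\supp(\tau)$. Hence in both cases the support condition reduces to $\supp(\rho)\subseteq\supp(\tau)$, and I would assume this from here on so that $\tau^{-\alpha}$, taken as a generalized inverse, is defined on a subspace containing $\supp(\rho)$ and $\rho^\alpha\tau^{-\alpha}$ is well defined.

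Next I would treat $Q_{\alpha,z}$ and $\Xi_{\alpha,z}$. For $Q_{\alpha,z}$ the commutation collapses the nested product $\rho^{\alpha/2z}\tau^{(1-\alpha)/z}\rho^{\alpha/2z}$ to $\rho^{\alpha/z}\tau^{(1-\alpha)/z}$, and raising to the $z$-th power and tracing yields $\Tr(\rho^\alpha\tau^{1-\alpha})$, independent of $z$; this defines $Q_\alpha(\rho\|\tau)$. For $\Xi_{\alpha,z}$ I would first record that in the commuting case $\chi_{\alpha,z}(\rho,\tau)=\rho^\alpha\tau^{(1-\alpha)(z-1)/z}$. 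On the boundary lines $|(1-\alpha)/z|=1$ this is immediately $\rho^\alpha\tau^{-\alpha}$. In the generic branch I would substitute this $\chi$ into the integral in~\eqref{equation problem}, pull it out of the integral, and evaluate the scalar integral $\int_0^\infty(\tau+t)^{-2}t^{(1-\alpha)/z}\,\d t=\tau^{(1-\alpha)/z-1}/\sinc(\pi(1-\alpha)/z)$ from Lemma~\ref{technical}; the factor $\sinc(\pi(1-\alpha)/z)$ cancels against $K_{\alpha,z}$, and adding the exponents $(1-\alpha)(z-1)/z+(1-\alpha)/z-1$ telescopes to $-\alpha$. Thus $\Xi_{\alpha,z}(\rho,\tau)=\rho^\alpha\tau^{-\alpha}=:\Xi_\alpha(\rho,\tau)$ in every branch.

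Finally I would substitute these simplifications back into Theorem~\ref{main Theorem}: the trace condition $\Tr(\sigma\,\Xi_{\alpha,z}(\rho,\tau))\le Q_{\alpha,z}(\rho\|\tau)$ becomes $\Tr(\sigma\,\rho^\alpha\tau^{-\alpha})\le\Tr(\rho^\alpha\tau^{1-\alpha})$ for all $\sigma\in\mathcal{F}$, which is exactly the stated criterion. I expect no genuine obstacle, since the content is purely a reduction of Theorem~\ref{main Theorem}; the only step requiring care is the generic-branch integral, where one must keep track of the $\sinc$ normalization and verify that the exponent algebra cancels to $-\alpha$, and where one must confirm that the generalized-inverse and support conventions remain consistent throughout the simultaneous diagonalization.
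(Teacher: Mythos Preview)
Your proposal is correct and follows essentially the same route as the paper's own proof: both reduce Theorem~\ref{main Theorem} by using the commutation hypothesis to evaluate the integral $\int_0^\infty(\tau+t)^{-2}t^{(1-\alpha)/z}\,\d t=\tau^{(1-\alpha)/z-1}/\sinc(\pi(1-\alpha)/z)$, cancel the $K_{\alpha,z}$ factor, and collapse the exponents to $-\alpha$, while handling the boundary lines and the support condition separately. You are somewhat more explicit than the paper (spelling out the exponent algebra, the simplification of $Q_{\alpha,z}$, and the support identity $\supp(\Pi(\rho)\tau\Pi(\rho))=\supp(\rho)\cap\supp(\tau)$), but the strategy is identical; just be careful on the boundary line $z=\alpha-1$, where the paper's definition~\eqref{equation problem} sets $\Xi_{\alpha,z}=\chi_{\alpha,1-\alpha}$ (not $\chi_{\alpha,\alpha-1}$), so it is the exponent formula for $\chi_{\alpha,1-\alpha}$ that gives $\rho^\alpha\tau^{-\alpha}$ there.
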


\begin{proof}
If $[\rho,\tau]= 0$, we use that for a positive operator $A>0$
\begin{equation}
 \int_0^\infty \frac{1}{(A+ t)^2} t^{\beta} \d t =  \frac{A^{\beta-1} }{\sinc{(\pi \beta)}} \quad \text{for} \quad \beta \in (-1,0) \cup (0,1) \,,
\end{equation}
to further simplify the expression~\eqref{equation problem}. We obtain for $(1-\alpha)/z \in (-1, 1)$
\begin{equation}
\textup{Tr}(\sigma \Xi_{\alpha,z}(\rho,\tau)) =   \textup{Tr}(\sigma {\Xi}_{\alpha}(\rho,\tau)) \,,
\end{equation}
where $\Xi_{\alpha}(\rho,\tau) = \rho^\alpha \tau^{-\alpha}$. The cases $|(1-\alpha)/z| = 1$ follow easily by direct computation of~\eqref{equation problem} under the assumption $[\rho,\tau]= 0$. Finally, the support condition  $\supp(\rho) = \supp\!\left(\Pi(\rho)\tau \Pi(\rho)\right)$ which holds for $z=1-\alpha$ reduces to $\supp(\rho)\subseteq \supp(\tau)$ for $[\rho, \tau]=0$.
\end{proof}

\begin{remark}
The necessary and sufficient conditions for the Umegaki relative entropy case ($\alpha=1$) have already been pointed out in~\cite{vedral1998entanglement}. In this case, $\tau$ is the solution of the minimization problem~\eqref{problem F} if and only if $\textup{Tr}(\sigma\, \Xi_{1,1}(\rho,\tau)) \leq 1$ for all $\sigma \in \mathcal{F}$ where 
\begin{align}
\label{relative entropy}
\Xi_{1,1}(\rho,\tau) =  \int_0^\infty \frac{\rho}{(\tau+t)^2} \d t \,.
\end{align}
Note that we can recover this result by taking the limit $\alpha \rightarrow 1$ of the result for the range $|1-\alpha)/z| \neq 1$ in Theorem~\ref{main Theorem}. Therefore, we have that $\lim_{\alpha \rightarrow 1} \partial_\sigma D_{\alpha,z}(\rho \| \tau) = \partial_\sigma \lim_{\alpha \rightarrow 1} D_{\alpha,z}(\rho \| \tau)$ for all $z>0$, i.e. we can interchange the limit with the Frech\'et derivative. Moreover, if $[\rho,\tau]=0$, we get $\Xi_{1,1}(\rho,\tau)=\rho \tau^{-1}$. We again recover the latter result by taking the limit $\alpha \rightarrow 1$ in Corollary~\ref{commuting}.

Another interesting limit is the limit $\alpha \rightarrow 0$ with $z=1$. In this case, we obtain the min-relative entropy (see equation~\eqref{Dmin}). Using the explicit form of the min-relative entropy, it is easy to see that $\Xi_{0,1}(\rho) = \Pi(\rho)$. Note that we can recover this result by taking the limit $\alpha \rightarrow 0$ with $z=1$ of the result for the range $|(1-\alpha)/z| = 1$ in Theorem~\ref{main Theorem}.
\end{remark}

\section{Additivity of the $\alpha$-$z$ R\'enyi relative entropies of entanglement}
\label{Additivity}
In this section, we address our main results.  We prove that the monotones  $\mathfrak{D}_{\alpha,z}$ are additive whenever one state belongs to a special class. In the next section we show below that bipartite pure, maximally correlated, GHZ, Bell diagonal, generalized Dicke, isotropic, and separable states are some examples of these classes. To the best of our knowledge, such additivity results have previously only been established for the case where \textit{both} states belong to some class of states or only for some value of the parameters $\alpha$ and $z$. Some examples include bipartite pure states~\cite{vedral1998entanglement}, maximally correlated states~\cite{zhu2017coherence}, Bell diagonal~\cite{zhu2010additivity} or generalized Dicke states~\cite{zhu2010additivity}.

Before focusing on entanglement theory, we first show how the conditions that we derived in Section~\ref{condition} provide a powerful tool to investigate additivity questions in any convex resource theory by considering the resource theory of coherence. For this resource theory, the monotone based on the $\alpha$-$z$ R\'enyi relative entropy is additive for any state. This result follows immediately for the case $\alpha=z=1$ from the closed-form expression of the relative entropy of coherence~\cite[Theorem 6]{Winter} and has been extended for all $z=\alpha$ and $z=1$ in~\cite[Theorem 3]{zhu2017coherence}. We generalize this result to all $(\alpha,z) \in \mathcal{D}$ by providing a simple and independent proof that hinges on the necessary and sufficient conditions derived in the previous section.

\subsection{Additivity of the $\alpha$-$z$ R\'enyi relative entropies of coherence}
\label{Additivity coherence}
Coherence is defined with respect to a particular basis dictated by the physical problem under consideration~\cite{Winter}. If $\{|i\rangle, i=1, ...,d\} $ is such a basis, a state is called free if it is diagonal in this basis, namely if it is of the form $ \sum p_i |i\rangle \! \langle i |$ with $\sum p_i=1$. We call these states \textit{incoherent states}, and we denote this set (free set) by $\mathcal{I}$. States that are not \textit{incoherent states} are resourceful, and we refer to them as \textit{coherent states}. In this specific resource theory, the monotones based on the $\alpha$-$z$ R\'enyi relative entropy are additive. In the following, we state the result for the points $(\alpha,z) \in \mathcal{D}$. The result for the limiting cases follows by taking the corresponding limits (see Appendix~\ref{lower-semicontinuous} for a more detailed discussion).
\begin{theorem}
\label{coherence}
Let $\rho_1,\rho_2 \in \qstate$ be two states and $(\alpha,z) \in \mathcal{D}$. Then we have
\begin{equation}
\mathfrak{D}^\mathcal{I}_{\alpha,z}(\rho_1 \otimes \rho_2) = \mathfrak{D}^\mathcal{I}_{\alpha,z}(\rho_1) + \mathfrak{D}^\mathcal{I}_{\alpha,z}(\rho_2) \,.
\end{equation}
\end{theorem}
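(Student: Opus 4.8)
The plan is to prove the stronger statement that if $\tau_1$ and $\tau_2$ are optimizers of $\mathfrak{D}^{\mathcal{I}}_{\alpha,z}(\rho_1)$ and $\mathfrak{D}^{\mathcal{I}}_{\alpha,z}(\rho_2)$ respectively, then $\tau_1\otimes\tau_2$ is an optimizer of $\mathfrak{D}^{\mathcal{I}}_{\alpha,z}(\rho_1\otimes\rho_2)$. Since the product of two incoherent states is diagonal in the product basis, $\tau_1\otimes\tau_2\in\mathcal{I}$ is an admissible candidate, and once its optimality is established the theorem is immediate: by the tensor multiplicativity $Q_{\alpha,z}(\rho_1\otimes\rho_2\|\tau_1\otimes\tau_2)=Q_{\alpha,z}(\rho_1\|\tau_1)\,Q_{\alpha,z}(\rho_2\|\tau_2)$, equivalently the additivity of $D_{\alpha,z}$ on tensor products, one gets $\mathfrak{D}^{\mathcal{I}}_{\alpha,z}(\rho_1\otimes\rho_2)=D_{\alpha,z}(\rho_1\|\tau_1)+D_{\alpha,z}(\rho_2\|\tau_2)=\mathfrak{D}^{\mathcal{I}}_{\alpha,z}(\rho_1)+\mathfrak{D}^{\mathcal{I}}_{\alpha,z}(\rho_2)$. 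The whole argument thus reduces to verifying the two conditions of Theorem~\ref{main Theorem} for $\rho=\rho_1\otimes\rho_2$ and $\tau=\tau_1\otimes\tau_2$.

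The support condition is routine. First I would use that supports factorize, $\supp(\rho_1\otimes\rho_2)=\supp(\rho_1)\otimes\supp(\rho_2)$ and $\Pi(\rho_1\otimes\rho_2)=\Pi(\rho_1)\otimes\Pi(\rho_2)$, so that $\tau_1\in S_{\alpha,z}(\rho_1)$ and $\tau_2\in S_{\alpha,z}(\rho_2)$ imply $\tau_1\otimes\tau_2\in S_{\alpha,z}(\rho_1\otimes\rho_2)$ in both branches of~\eqref{sets}. For the branch $(1-\alpha)/z=1$ one uses in addition that $\Pi(\rho_1\otimes\rho_2)(\tau_1\otimes\tau_2)\Pi(\rho_1\otimes\rho_2)=(\Pi(\rho_1)\tau_1\Pi(\rho_1))\otimes(\Pi(\rho_2)\tau_2\Pi(\rho_2))$, whose support is $\supp(\rho_1)\otimes\supp(\rho_2)$.

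The hard part will be the linear condition $\Tr(\sigma\,\Xi_{\alpha,z}(\rho_1\otimes\rho_2,\tau_1\otimes\tau_2))\le Q_{\alpha,z}(\rho_1\otimes\rho_2\|\tau_1\otimes\tau_2)$ for every $\sigma\in\mathcal{I}$. By the remark following Theorem~\ref{main Theorem} it suffices to check it on the pure incoherent states $\ketbra{ij}{ij}$, i.e.\ to bound the diagonal entries of $\Xi_{\alpha,z}(\rho_1\otimes\rho_2,\tau_1\otimes\tau_2)$. The key observation is that $\chi_{\alpha,z}$ factorizes on tensor products, $\chi_{\alpha,z}(\rho_1\otimes\rho_2,\tau_1\otimes\tau_2)=\chi_{\alpha,z}(\rho_1,\tau_1)\otimes\chi_{\alpha,z}(\rho_2,\tau_2)$, because every power appearing in its definition distributes over $\otimes$. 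The genuine subtlety is that in the branch $|(1-\alpha)/z|\neq 1$ the operator $\Xi_{\alpha,z}$ itself does \emph{not} factorize, since the single integration variable $t$ couples the two factors through $(\tau_1\otimes\tau_2+t)^{-2}$; nevertheless its \emph{diagonal} entries do. Indeed, since $\tau_1\otimes\tau_2$ is diagonal in the incoherent basis with eigenvalues $\lambda^{(1)}_i\lambda^{(2)}_j$, the resolvent $(\tau_1\otimes\tau_2+t)^{-1}$ is diagonal, so that using the scalar identity $\int_0^\infty (a+t)^{-2}t^{(1-\alpha)/z}\,\d t=a^{(1-\alpha)/z-1}/\sinc(\pi(1-\alpha)/z)$ with $a=\lambda^{(1)}_i\lambda^{(2)}_j$ the constant $K_{\alpha,z}$ cancels and one finds
\begin{equation}
\langle ij|\Xi_{\alpha,z}(\rho_1\otimes\rho_2,\tau_1\otimes\tau_2)|ij\rangle=\langle i|\Xi_{\alpha,z}(\rho_1,\tau_1)|i\rangle\;\langle j|\Xi_{\alpha,z}(\rho_2,\tau_2)|j\rangle \,.
\end{equation}
The branch $|(1-\alpha)/z|= 1$ is even easier, since there $\Xi_{\alpha,z}=\chi_{\alpha,z}$ factorizes outright and the diagonal entries of a tensor product always factor.

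To conclude, I would feed in the optimality of $\tau_1$ and $\tau_2$ via Theorem~\ref{main Theorem}, namely $\langle i|\Xi_{\alpha,z}(\rho_1,\tau_1)|i\rangle\le Q_{\alpha,z}(\rho_1\|\tau_1)$ and $\langle j|\Xi_{\alpha,z}(\rho_2,\tau_2)|j\rangle\le Q_{\alpha,z}(\rho_2\|\tau_2)$ for all $i,j$. Multiplying these (both sides are nonnegative) and using tensor multiplicativity of $Q_{\alpha,z}$ yields $\langle ij|\Xi_{\alpha,z}(\rho_1\otimes\rho_2,\tau_1\otimes\tau_2)|ij\rangle\le Q_{\alpha,z}(\rho_1\otimes\rho_2\|\tau_1\otimes\tau_2)$ for all $i,j$, and averaging against an arbitrary diagonal $\sigma=\sum_{ij}p_{ij}\ketbra{ij}{ij}\in\mathcal{I}$ gives the linear condition. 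Theorem~\ref{main Theorem} then certifies that $\tau_1\otimes\tau_2$ is optimal, and the additivity follows as described in the first paragraph.
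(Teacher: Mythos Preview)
Your proposal is correct and follows essentially the same approach as the paper: both verify via Theorem~\ref{main Theorem} that $\tau_1\otimes\tau_2$ is optimal by reducing to pure incoherent states $|ij\rangle$, exploiting that $\tau_1\otimes\tau_2$ is diagonal so the resolvent integral collapses to the scalar identity and the diagonal entries of $\Xi_{\alpha,z}$ factorize, and then invoking the marginal optimality bounds. The only cosmetic difference is that you treat the boundary case $|(1-\alpha)/z|=1$ directly via the factorization of $\chi_{\alpha,z}$, whereas the paper obtains it by taking the limit $z\to|1-\alpha|$ of the generic case.
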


\begin{proof}
To prove additivity, we show that $\tau_1 \otimes \tau_2 \in \argmin_{\sigma \in \mathcal{F}} D_{\alpha,z}(\rho_1 \otimes \rho_2 \| \sigma)$ if $\tau_1$ and $\tau_2$ are some optimizers of the marginal problems, i.e. $\tau_1 \in \argmin_{\sigma \in \mathcal{F}} D_{\alpha,z}(\rho_1 \| \sigma)$ and $\tau_2 \in \argmin_{\sigma \in \mathcal{F}} D_{\alpha,z}(\rho_2 \| \sigma)$. Since the $\alpha$-$z$ R\'enyi relative entropy is additive~\cite{audenaert2015alpha}, the latter condition implies the additivity of the related monotones. Indeed, we have  $\mathfrak{D}^{\mathcal{I}}_{\alpha,z}(\rho_1 \otimes \rho_2) = D_{\alpha,z}(\rho_1 \otimes \rho_2 \| \tau_1 \otimes \tau_2) = D_{\alpha,z}(\rho_1 \|\tau_1) + D_{\alpha,z}(\rho_2 \|\tau_2) = \mathfrak{D}^\mathcal{I}_{\alpha,z}(\rho_1) + \mathfrak{D}^\mathcal{I}_{\alpha,z}(\rho_2)$. 
Therefore, to prove additivity, according to Theorem~\ref{main Theorem}, we need to show that 
\begin{equation}
\label{coherence functional}
\Tr(\sigma \Xi_{\alpha,z}(\rho_1\otimes \rho_2, \tau_1 \otimes \tau_2)) \leq Q_{\alpha,z}(\rho_1 \otimes \rho_2 \| \tau_1 \otimes \tau_2)= Q_{\alpha,z}(\rho_1 \| \tau_1) Q_{\alpha,z}(\rho_2 \| \tau_2) \; \forall \sigma \in \mathcal{I} \,.
\end{equation}
In the last equality, we used that $Q_{\alpha,z}$ is multiplicative. Note that we always have $\tau_1 \otimes \tau_2 \in S_{\alpha,z}(\rho_1 \otimes \rho_2)$ since, by assumption, the states $\tau_1$ and $\tau_2$ are optimizers of the marginal problems and hence they satisfy $\tau_1 \in S_{\alpha,z}(\rho_1)$ and $\tau_2 \in S_{\alpha,z}(\rho_2)$. 

By linearity of the trace, we can restrict $\sigma$ above to be a pure incoherent state. We can write any pure incoherent state as $\sigma= |i,j\rangle$. In the following, we use the decomposition for the optimizers $\tau_i=\sum t_{i,j}|j\rangle \! \langle j|$. We have for $|(1-\alpha)|/z \neq 1$
\begin{align}
\langle i,j| \Xi_{\alpha,z}&(\rho_1\otimes \rho_2, \tau_1 \otimes \tau_2)|i,j \rangle \\
& = K_{\alpha,z}\int_0^\infty \left(t_{1,i} t_{2,j} +k\right)^{-1} \left(t_{1,i} t_{2,j} +k\right)^{-1} k^\frac{1-\alpha}{z}\d k  \langle i | \chi_{\alpha,z}(\rho_1,\tau_1) |i \rangle \! \langle j | \chi_\alpha(\rho_2,\tau_2) |j \rangle \notag \\
&= \big(t_{1,i} t_{2,j}\big)^{\frac{1-\alpha}{z}-1} \langle i | \chi_{\alpha,z}(\rho_1,\tau_1) |i \rangle \! \langle j | \chi_{\alpha,z}(\rho_2,\tau_2) |j \rangle \leq Q_{\alpha,z}(\rho_1\|\tau_1) Q_{\alpha,z}(\rho_2\|\tau_2) \,,
\end{align}
where we used that, if $\tau = \sum t_{j}|j\rangle \! \langle j|$ is an optimizer of $\rho$, the condition $ t_l^{\frac{1-\alpha}{z}-1} \langle l | \chi_{\alpha,z}(\rho,\tau) |l \rangle \leq Q_{\alpha,z}(\rho \|\tau)$ holds for any $l$. Indeed, if we choose the pure  incoherent state $\sigma=|l\rangle$, we obtain
\begin{align}
& \Tr(\sigma \Xi_{\alpha,z}(\rho, \tau)) = \langle l|  \Xi_{\alpha,z}(\rho, \tau) | l\rangle = t_l^{\frac{1-\alpha}{z}-1} \langle l | \chi_{\alpha,z}(\rho,\tau) |l \rangle \,.
\end{align}
Since $\tau$ is an optimizer of $\rho$ by assumption, the above equality together with Theorem~\ref{main Theorem} implies that $ t_l^{\frac{1-\alpha}{z}-1} \langle l | \chi_{\alpha,z}(\rho,\tau) |l \rangle \leq Q_{\alpha,z}(\rho \|\tau)$ for any $l$. Hence, the inequality~\eqref{coherence functional} is satisfied and additivity follows. The result for the case $|(1-\alpha)/z|=1$ can be obtained by taking the limits $z \rightarrow 1-\alpha$ and $z \rightarrow \alpha -1$ (see Appendix~\ref{lower-semicontinuous} for a more detailed discussion). 
\end{proof}

\subsection{Additivity of the $\alpha$-$z$ R\'enyi entropy of entanglement: Commuting case}
\label{First class}
We now prove that in entanglement theory, the monotones $\mathfrak{D}_{\alpha,z}$ are additive when one state commutes with its optimizer and the alpha power of its product with the inverse of its optimizer has positive entries in a product basis. We remark that the latter condition is very important since, as we show in Section~\ref{Counterexample}, there exist states that commute with their optimizer but do not have positive entries in a product basis for which the $\mathfrak{D}_{\alpha,z}$ are not additive.  Separable states clearly satisfy these conditions, since the optimizer coincides with the state itself. This is not surprising since for separable states, the proof already follows from the fact that the $\mathfrak{D}_{\alpha,z}$ are subadditive, equal zero if and only if the state is separable, and they are non-increasing under partial trace.  More interestingly, as we show later, the Bell diagonal, generalized Dicke, and the isotropic states belong to this class. As before, we state the result for the points $(\alpha,z) \in \mathcal{D}$. The result for the limiting cases follows by taking the corresponding limits (see Appendix~\ref{lower-semicontinuous} for a more detailed discussion).

In~\cite{rains1999bound} (see also~\cite[Lemma 1]{audenaert2002asymptotic}) the author showed that if the optimizer $\tau$ of $\rho$ satisfies $[\rho,\tau]=0$ and $|(\rho \sigma^{-1})^{T_2}| \leq \mathds{1}$ where $T_2$ denotes the partial transpose on the second system, then the relative entropy optimized over the set of PPT states is weakly additive. Moreover, if the state satisfies the stronger condition $0 \leq (\rho \sigma^{-1})^{T_2} \leq \mathds{1}$, then the REEP is strongly additive for $\rho$, i.e. it is additive for the tensor product of $\rho$ with any other state. In the following, we establish a similar result for the relative entropy of entanglement. We remark that, although the condition $|(\rho \sigma^{-1})^{T_2}| \leq \mathds{1}$ holds for several states that commute with their optimizer (e.g. for all two-qubit states~\cite{miranowicz2008closed} or for orthogonally invariant states~\cite{audenaert2002asymptotic}), the condition for strong additivity $0 \leq (\rho \sigma^{-1})^{T_2}$ does not typically hold due to the fact that the state $\rho \sigma^{-1}$ is not PPT. However, our result for strong additivity of the relative entropy of entanglement applies to several states that do not satisfy the condition just mentioned. An explicit example is provided by the isotropic states (see Section~\ref{Analytics}). Finally, we note that the result for the geometric measure of entanglement $\alpha=z=1/2$ was not known for any specific states even for the case where both states (or, in general $n$ states) are equal.

\begin{theorem}
\label{CC}
Let $\rho_1$ be a N-partite state and $(\alpha,z) \in \mathcal{D}$ . Moreover, let  $\tau_1 \in \argmin_{\sigma \in \textup{SEP}} D_{\alpha,z}(\rho_1 \| \sigma)$. If $[\rho_1,\tau_1]=0$ and $\rho_1^\alpha\tau_1^{-\alpha}$ has positive entries in a product basis, then for any N-partite state $\rho_2$ we have that
\begin{equation}
\mathfrak{D}_{\alpha,z}(\rho_1 \otimes \rho_2) = \mathfrak{D}_{\alpha,z}(\rho_1) + \mathfrak{D}_{\alpha,z}(\rho_2)\,.
\end{equation}
\end{theorem}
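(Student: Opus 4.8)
The plan is to show that $\tau_1 \otimes \tau_2$ is an optimizer of the problem for $\rho_1 \otimes \rho_2$, where $\tau_2 \in \argmin_{\sigma \in \SEP} D_{\alpha,z}(\rho_2 \| \sigma)$ is any closest separable state of $\rho_2$. Once this is done, additivity follows exactly as in the proof of Theorem~\ref{coherence}: since $\tau_1 \otimes \tau_2 \in \SEP$ and the $\alpha$-$z$ R\'enyi relative entropy is additive, $\mathfrak{D}_{\alpha,z}(\rho_1 \otimes \rho_2) = D_{\alpha,z}(\rho_1 \otimes \rho_2 \| \tau_1 \otimes \tau_2) = D_{\alpha,z}(\rho_1 \| \tau_1) + D_{\alpha,z}(\rho_2 \| \tau_2) = \mathfrak{D}_{\alpha,z}(\rho_1) + \mathfrak{D}_{\alpha,z}(\rho_2)$. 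To establish optimality of $\tau_1 \otimes \tau_2$ I invoke Theorem~\ref{main Theorem}. The support condition $\tau_1 \otimes \tau_2 \in S_{\alpha,z}(\rho_1 \otimes \rho_2)$ is inherited from the marginal memberships $\tau_1 \in S_{\alpha,z}(\rho_1)$ and $\tau_2 \in S_{\alpha,z}(\rho_2)$, so the real content is the variational inequality $\Tr(\sigma\, \Xi_{\alpha,z}(\rho_1 \otimes \rho_2, \tau_1 \otimes \tau_2)) \le Q_{\alpha,z}(\rho_1 \otimes \rho_2 \| \tau_1 \otimes \tau_2) = Q_\alpha(\rho_1\|\tau_1)\,Q_{\alpha,z}(\rho_2 \| \tau_2)$ for all $\sigma \in \SEP$, which by linearity of the trace I restrict to pure product states $\sigma = |\phi\rangle\!\langle\phi|$ with $|\phi\rangle = \bigotimes_j |\phi_j\rangle$ and $|\phi_j\rangle \in A_j A_j'$.

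The first key step is to factorize $\Xi_{\alpha,z}$. Since powers factorize over tensor products, $\chi_{\alpha,z}(\rho_1\otimes\rho_2, \tau_1\otimes\tau_2) = \chi_{\alpha,z}(\rho_1,\tau_1)\otimes\chi_{\alpha,z}(\rho_2,\tau_2)$; inserting this into the integral representation~\eqref{equation problem} and diagonalizing the first tensor factor in the common eigenbasis of $\rho_1$ and $\tau_1$ (which exists because $[\rho_1,\tau_1]=0$), the resolvent $(\tau_1\otimes\tau_2+t)^{-1}$ separates and the remaining scalar integral evaluates in closed form via the identity $\int_0^\infty t^{\beta}/(A+t)^2\,\d t = A^{\beta-1}/\sinc(\pi\beta)$ already used in Lemma~\ref{technical}. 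Collecting terms and applying Corollary~\ref{commuting} to the first factor yields the clean factorization $\Xi_{\alpha,z}(\rho_1\otimes\rho_2,\tau_1\otimes\tau_2) = (\rho_1^\alpha\tau_1^{-\alpha})\otimes \Xi_{\alpha,z}(\rho_2,\tau_2)$. Writing $R_1 := \rho_1^\alpha \tau_1^{-\alpha}$ and $X_2 := \Xi_{\alpha,z}(\rho_2,\tau_2)\ge 0$, the boundary lines $|(1-\alpha)/z|=1$ follow at once from $\Xi_{\alpha,z}=\chi_{\alpha,1-\alpha}$, and the Umegaki case $\alpha=1$ by the limiting argument in the Remark after Theorem~\ref{main Theorem}. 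Note that this separation of the integral is exactly where the commutation hypothesis $[\rho_1,\tau_1]=0$ enters.

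The central step verifies $\langle\phi|R_1\otimes X_2|\phi\rangle \le Q_\alpha(\rho_1\|\tau_1)\,Q_{\alpha,z}(\rho_2\|\tau_2)$. Expanding $R_1$ in a product basis $\{|\vec m\rangle\}$ of $\bigotimes_j A_j$ in which its entries are nonnegative, and setting $|\eta^{\vec m}\rangle := (\langle\vec m|\otimes \mathds{1})|\phi\rangle = \bigotimes_j |\eta_j^{m_j}\rangle$ with $|\eta_j^{m_j}\rangle := \langle m_j|\phi_j\rangle \in A_j'$, a direct computation gives $\langle\phi|R_1\otimes X_2|\phi\rangle = \sum_{\vec m,\vec n}(R_1)_{\vec m\vec n}\,\langle\eta^{\vec m}|X_2|\eta^{\vec n}\rangle$. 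Each $|\eta^{\vec m}\rangle$ is a product vector on $\bigotimes_j A_j'$, so optimality of $\tau_2$ (Theorem~\ref{main Theorem}) gives $\langle\eta^{\vec m}|X_2|\eta^{\vec m}\rangle \le Q_{\alpha,z}(\rho_2\|\tau_2)\,\|\eta^{\vec m}\|^2$; Cauchy--Schwarz for the positive form $X_2$ then yields $|\langle\eta^{\vec m}|X_2|\eta^{\vec n}\rangle| \le Q_{\alpha,z}(\rho_2\|\tau_2)\prod_j \|\eta_j^{m_j}\|\,\|\eta_j^{n_j}\|$. The nonnegativity of the $(R_1)_{\vec m\vec n}$ is precisely what lets me pass to absolute values without losing the bound, so that $\langle\phi|R_1\otimes X_2|\phi\rangle \le Q_{\alpha,z}(\rho_2\|\tau_2)\,\langle E|R_1|E\rangle$ with $|E\rangle := \bigotimes_j\big(\sum_{m_j}\|\eta_j^{m_j}\|\,|m_j\rangle\big)$. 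The point is that $|E\rangle$ is a \emph{product} vector, so optimality of $\tau_1$ applies: $\langle E|R_1|E\rangle \le Q_\alpha(\rho_1\|\tau_1)\,\|E\|^2$, and $\|E\|^2 = \prod_j \sum_{m_j}\|\eta_j^{m_j}\|^2 = \prod_j \langle\phi_j|\phi_j\rangle = 1$. Combining the two estimates closes the argument.

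The main obstacle is precisely this last reduction to a single product vector on $\bigotimes_j A_j$. A naive Cauchy--Schwarz bound would produce a vector with amplitudes $\sqrt{\langle\eta^{\vec m}|X_2|\eta^{\vec m}\rangle}$, which does not factorize across parties and would force the useless operator bound $R_1 \le Q_\alpha(\rho_1\|\tau_1)\,\mathds{1}$ — false in general, since $\lambda_{\max}(R_1)$ may exceed $Q_\alpha(\rho_1\|\tau_1)$. What rescues the argument is the interplay of the positivity of the entries of $R_1$ with the product structure of the $|\eta^{\vec m}\rangle$: bounding the off-diagonal matrix elements of $X_2$ by the multiplicative quantity $\prod_j\|\eta_j^{m_j}\|\,\|\eta_j^{n_j}\|$ replaces those amplitudes by $\prod_j\|\eta_j^{m_j}\|$, which factorizes into the product vector $|E\rangle$. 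This is exactly why the positivity hypothesis cannot be dropped, consistent with the counterexample of Section~\ref{Counterexample}.
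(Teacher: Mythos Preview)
Your proof is correct and follows the same overall strategy as the paper: first establish the factorization $\Xi_{\alpha,z}(\rho_1\otimes\rho_2,\tau_1\otimes\tau_2)=\rho_1^\alpha\tau_1^{-\alpha}\otimes\Xi_{\alpha,z}(\rho_2,\tau_2)$ via the commutation hypothesis, then use the positive-entries assumption to conclude that the separable maximum over the tensor product splits into the product of the marginal maxima.

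The one genuine difference is in how the splitting step is handled. The paper outsources it, invoking the multiplicativity result \cite[Theorem~5]{zhu2010additivity} for $\Lambda^2(A\otimes B)=\Lambda^2(A)\Lambda^2(B)$ when $A$ has nonnegative entries in a product basis, and then bounds each factor by the marginal optimality conditions. You instead prove this multiplicativity in place: expand in the nonnegative basis, apply Cauchy--Schwarz for the positive form $X_2$, and observe that the resulting amplitudes $\prod_j\|\eta_j^{m_j}\|$ assemble into a single \emph{product} vector $|E\rangle$ of unit norm, to which Corollary~\ref{commuting} applies directly. Your route is self-contained and makes explicit exactly where each hypothesis bites (commutation for the factorization, positivity for passing to absolute values, product structure of $|\eta^{\vec m}\rangle$ for building $|E\rangle$); the paper's route is shorter but relies on an external lemma whose proof is essentially the argument you wrote out.
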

\begin{proof}
Similarly to the proof of Theorem~\ref{coherence}, to prove additivity we show that if $\tau_1$ is an optimizer of $\rho_1$ such that $[\rho_1,\tau_1]=0$, and $\rho_1^\alpha\tau_1^{-\alpha}$ has positive entries in a product basis, then an optimizer of $\rho_1 \otimes \rho_2$ is $\tau_1 \otimes \tau_2$ where $\tau_2$ is any optimizer of $\rho_2$. 
According to Theorem~\ref{main Theorem}, we want to show that 
\begin{equation}
\label{functional}
\Tr(\sigma \Xi_{\alpha,z}(\rho_1\otimes \rho_2, \tau_1 \otimes \tau_2)) \leq  Q_{\alpha}(\rho_1 \| \tau_1) Q_{\alpha,z}(\rho_2 \| \tau_2) \quad \forall \sigma \in \SEP\,,
\end{equation}
where we denoted with $\tau_1$ and $\tau_2$ some optimizers of $\rho_1$ and $\rho_2$, respectively. We use the spectral decomposition $\tau_1 = \sum_l t_{1,l}|\psi_l\rangle \! \langle \psi_l|$ and $\tau_2 = \sum_r t_{2,r}|\xi_r\rangle \! \langle \xi_r|$. Here, $|\psi_l\rangle$ is the common eigenbasis of $\rho_1$ and $\tau_1$.   In the range $|(1-\alpha)/z|\neq 1$ we have
\begin{align}
 & \Xi_{\alpha,z}(\rho_1 \otimes \rho_2, \tau_1 \otimes \tau_2) \\
&\qquad = \, K_{\alpha,z} \sum \limits_{l, r_1,r_2}  \int_0^\infty \left(t_{1,l} t_{2,r_1} +k\right)^{-1} \left(t_{1,l} t_{2,r_2} +k\right)^{-1} k^\frac{1-\alpha}{z}\d k \;   \;| \psi_{l} \rangle \! \langle \psi_{l}| \chi_{\alpha,z}(\rho_1,\tau_1) |\psi_{l} \rangle \! \langle \psi_{l}| \;  \notag\\
& \qquad \quad  \otimes | \xi_{r_1} \rangle \! \langle \xi_{r_1}| \chi_{\alpha,z}(\rho_2,\tau_2) |\xi_{r_2} \rangle \! \langle \xi_{r_2}|  \\
& \qquad = \, K_{\alpha,z} \sum \limits_{l} t_{1,l}^{\frac{1-\alpha}{z}-1} | \psi_{l} \rangle \! \langle \psi_{l}| \chi_{\alpha,z}(\rho_1,\tau_1) |\psi_{l} \rangle \! \langle \psi_{l}| \;  \notag \\
& \qquad \quad \otimes  \sum_{r_1,r_2} \int_0^\infty \left( t_{2,r_1} +k\right)^{-1} \left(t_{2,r_2} +k\right)^{-1} k^\frac{1-\alpha}{z}\d k |\xi_{r_1} \rangle \! \langle \xi_{r_1}| \chi_{\alpha,z}(\rho_2,\tau_2) |\xi_{r_2} \rangle \! \langle \xi_{r_2}| \\
\label{penultima}
&\qquad = \; \rho_1^\alpha\tau_1^{-\alpha} \otimes \Xi_{\alpha,z}( \rho_2, \tau_2)\\
& \qquad  =\;\Xi_{\alpha}( \rho_1, \tau_1) \otimes \Xi_{\alpha,z}( \rho_2, \tau_2) \,,
\end{align}
where in the second equality we changed the measure $k \rightarrow t_{1,l} k$. Moreover, in~\eqref{penultima} we used that $ \sum_{k} t_{1,l}^{\frac{1-\alpha}{z}-1} | \psi_{l} \rangle \! \langle \psi_{l}| \chi_{\alpha,z}(\rho_1,\tau_1) |\psi_{l} \rangle \! \langle \psi_{l}| = \rho_1^\alpha\tau_1^{-\alpha}$. In the last equality, we used the definition $\Xi_{\alpha}( \rho_1, \tau_1):=\rho_1^\alpha\tau_1^{-\alpha}$ introduced in Corollary~\ref{commuting}.
We then have, for any $\sigma \in \SEP$ and $(\alpha,z) \in \mathcal{D}$,
\begin{align}
 \Tr(\sigma \Xi_{\alpha,z}(\rho_1 \otimes \rho_2, \tau_1 \otimes \tau_2)) & \leq  \max \limits_{\sigma \in \SEP}\Tr(\sigma \Xi_{\alpha,z}(\rho_1\otimes \rho_2, \tau_1 \otimes \tau_2))\\
 &  =  \max \limits_{\sigma \in \SEP}\Tr(\sigma\Xi_{\alpha}( \rho_1, \tau_1) \otimes \Xi_{\alpha,z}( \rho_2, \tau_2)) \\
&  =  C_{1,\alpha} C_{2,\alpha,z} \max \limits_{\sigma \in \SEP}\text{Tr}(\sigma\hat{\Xi}_{\alpha}( \rho_1, \tau_1) \otimes \hat{\Xi}_{\alpha,z}( \rho_2, \tau_2)) \\
 \label{additivity}
& =  C_{1,\alpha} C_{2,\alpha,z} \max \limits_{\sigma \in \SEP}\text{Tr}(\sigma\hat{\Xi}_{\alpha}( \rho_1, \tau_1)) \max \limits_{\sigma \in \SEP}\text{Tr}(\sigma\hat{\Xi}_{\alpha,z}( \rho_2, \tau_2)) \\
& =  \max \limits_{\sigma \in \SEP}\Tr(\sigma \Xi_{\alpha}( \rho_1, \tau_1)) \max \limits_{\sigma \in \SEP}\Tr(\sigma \Xi_{\alpha,z}( \rho_2, \tau_2)) \\
& = \; Q_{\alpha}(\rho_1\| \tau_1) Q_{\alpha,z}(\rho_2\| \tau_2) \,,
\end{align} 
where we defined the positive constants $C_{i,\alpha}:= \text{Tr} (\Xi_{\alpha}( \rho_i, \tau_i) )$ for $i=1,2$, and for a positive operator $A$ we defined the quantum state $\hat{A}:= A/\Tr(A)$.
In~\eqref{additivity} we used the multiplicativity result in~\cite[Theorem 5]{zhu2010additivity} together with the assumption that $\Xi_{\alpha}( \rho_1, \tau_1) = \rho_1^\alpha\tau_1^{-\alpha}$ has positive entries in a product basis. The last equality follows from the fact that $\tau_1$ and $\tau_2$ are, by assumptions, optimizers of the marginal problems. Indeed, from Theorem~\ref{main Theorem}, if $\tau \in \argmin_{\sigma \in \SEP} D_{\alpha,z}(\rho \| \sigma)$, we have that $\Tr(\sigma\Xi_{\alpha,z}( \rho, \tau)) \leq Q_{\alpha,z}(\rho\| \tau)$ for any $\sigma \in \SEP$. Moreover, from Corollary~\ref{commuting},  if $\tau \in \argmin_{\sigma \in \SEP} D_{\alpha,z}(\rho \| \sigma)$, we have that $\Tr(\sigma\Xi_{\alpha}( \rho, \tau)) \leq Q_{\alpha}(\rho\| \tau)$ for any $\sigma \in \SEP$ if $\rho$ and $\tau$ commute. It is easy to see that the latter inequalities are saturated for $\sigma = \tau$.
The latter chain of inequalities proves the inequality~\eqref{functional}. The result for the case $|(1-\alpha)/z|=1$ can be obtained by taking the limits $z \rightarrow 1-\alpha$ and $z \rightarrow \alpha -1$ (see Appendix~\ref{lower-semicontinuous} for a more detailed discussion)
\end{proof}

\subsection{Additivity of the $\alpha$-$z$ R\'enyi entropy of entanglement: Maximally correlated states}
\label{Second class}
In this section, we prove that the monotones $\mathfrak{D}_{\alpha,z}$  are additive when one state is a maximally correlated state. Moreover, we generalize the latter result to include the GHZ state. 

A \textit{maximally correlated state} is a bipartite state of the form~\cite{rains1999bound}
\begin{equation}
\label{form}
\rho = \sum_{jk}\rho_{jk}|j,j\rangle \! \langle k,k| \,.
\end{equation}
Since pure bipartite states are maximally correlated states, this means that the monotone is also additive whenever one state is a pure bipartite state. 

We first derive necessary and sufficient conditions for an optimizer of a maximally correlated state. The following proposition is a fundamental ingredient to derive the additivity of the $\mathfrak{D}_{\alpha,z}$ when one state is maximally correlated. Moreover, in Section~\ref{Bipartite pure section} we show how it allows computing analytically the $\mathfrak{D}_{\alpha,z}$ for bipartite states. 

Let $\rho$ be a maximally correlated state. We denote with $|i,i\rangle$ is the basis such that $\rho$ has the form~\eqref{form}. We define $\mathcal{T}_{\rho}$ as the set of all separable states of the form $\sigma = \sum_i s_i |i,i\rangle \! \langle i,i|$. 

\begin{proposition}
\label{MC marginal}
Let $\rho$ be a maximally correlated state and $(\alpha,z) \in \mathcal{D}$. If $\tau= \sum_i t_{i} |i,i\rangle \! \langle i,i| \in \mathcal{T}_{\rho}$, then $\tau \in \argmin_{\sigma \in \textup{SEP}} D_{\alpha,z}(\rho \| \sigma)$ if and only if $\tau \in S_{\alpha,z}(\rho)$ and
\begin{equation}
\label{inequality}
\max_l t^{\frac{1-\alpha}{z}-1}_{l} \langle l,l|\chi_{\alpha,z}(\rho,\tau) |l,l\rangle \leq  Q_{\alpha,z}(\rho \|\tau) \, .
\end{equation}
\end{proposition}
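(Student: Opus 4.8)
The plan is to apply Theorem~\ref{main Theorem} directly. Since $\tau \in \mathcal{T}_{\rho}$ and we assume $\tau \in S_{\alpha,z}(\rho)$, it remains only to show that the global optimality condition $\Tr(\sigma\,\Xi_{\alpha,z}(\rho,\tau)) \leq Q_{\alpha,z}(\rho\|\tau)$ for all $\sigma \in \SEP$ is equivalent, for a maximally correlated $\rho$ and a diagonal candidate $\tau$, to the single condition~\eqref{inequality}. By linearity of the trace the left-hand side is maximized over $\SEP$ at a pure product state, so the condition reads $\max_{|\phi\rangle \in \text{PRO}} \langle\phi|\Xi_{\alpha,z}(\rho,\tau)|\phi\rangle \leq Q_{\alpha,z}(\rho\|\tau)$. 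The easy direction is immediate: choosing the separable (indeed product) test states $\sigma = |l,l\rangle\!\langle l,l|$ in Theorem~\ref{main Theorem} forces $\langle l,l|\Xi_{\alpha,z}(\rho,\tau)|l,l\rangle \leq Q_{\alpha,z}(\rho\|\tau)$ for every $l$, which I will identify with~\eqref{inequality}. The real content is that these diagonal inequalities are also \emph{sufficient}.

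Next I would make the structure of $\Xi_{\alpha,z}(\rho,\tau)$ explicit. Because $\rho$ has the form~\eqref{form} it is supported on the ``diagonal'' subspace $V := \operatorname{span}\{|i,i\rangle\}$; hence so is every operator function of $\rho$, and in particular $\chi_{\alpha,z}(\rho,\tau) = \sum_{lm}\chi_{lm}\,|l,l\rangle\!\langle m,m|$ is supported on $V$. Since $\tau = \sum_l t_l |l,l\rangle\!\langle l,l|$ acts as multiplication by $t_l$ on $V$ and $\supp(\chi_{\alpha,z}(\rho,\tau))\subseteq\supp(\rho)\subseteq\supp(\tau)$, I would insert these spectral decompositions into~\eqref{equation problem} and carry out the resolvent integral entry by entry, obtaining for $|(1-\alpha)/z|\neq 1$
\begin{equation}
\Xi_{\alpha,z}(\rho,\tau) = K_{\alpha,z}\sum_{lm}\chi_{lm}\left(\int_0^\infty \frac{t^{\frac{1-\alpha}{z}}}{(t_l+t)(t_m+t)}\,\d t\right)|l,l\rangle\!\langle m,m| \,.
\end{equation}
Using $\int_0^\infty t^{\beta}(t_l+t)^{-2}\,\d t = t_l^{\beta-1}/\sinc(\pi\beta)$ with $\beta=(1-\alpha)/z$ on the diagonal gives $\langle l,l|\Xi_{\alpha,z}(\rho,\tau)|l,l\rangle = t_l^{\frac{1-\alpha}{z}-1}\langle l,l|\chi_{\alpha,z}(\rho,\tau)|l,l\rangle$, identifying the diagonal entries of $\Xi_{\alpha,z}(\rho,\tau)$ with the quantities in~\eqref{inequality}. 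I also record that $\Xi_{\alpha,z}(\rho,\tau)\geq 0$, so the matrix $M=(\langle l,l|\Xi_{\alpha,z}(\rho,\tau)|m,m\rangle)_{lm}$ on $V$ is positive semidefinite.

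The key step is then to prove $\max_{|\phi\rangle\in\text{PRO}}\langle\phi|\Xi_{\alpha,z}(\rho,\tau)|\phi\rangle = \max_l \langle l,l|\Xi_{\alpha,z}(\rho,\tau)|l,l\rangle$. For a pure product state $|\phi\rangle = |a\rangle\otimes|b\rangle$ with $|a\rangle=\sum_l a_l|l\rangle$ and $|b\rangle=\sum_l b_l|l\rangle$, only the $V$-support of $\Xi_{\alpha,z}$ contributes, and setting $v_l := a_l b_l$ gives $\langle\phi|\Xi_{\alpha,z}(\rho,\tau)|\phi\rangle = \sum_{lm}M_{lm}\,\bar v_l v_m = \langle v|M|v\rangle$. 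By Cauchy--Schwarz, $\sum_l |v_l| = \sum_l |a_l||b_l|\leq \sqrt{\sum_l |a_l|^2}\,\sqrt{\sum_l |b_l|^2} = 1$, so every admissible $v$ lies in the complex $\ell_1$-ball $B_1=\{v:\sum_l|v_l|\leq 1\}$. Since $M\succeq 0$, the map $v\mapsto \langle v|M|v\rangle = \|M^{1/2}v\|^2$ is convex and hence attains its maximum over the compact convex set $B_1$ at an extreme point; the extreme points of $B_1$ are $\{e^{i\theta}e_l\}$, at which $\langle v|M|v\rangle = M_{ll}$. Therefore $\langle\phi|\Xi_{\alpha,z}(\rho,\tau)|\phi\rangle \leq \max_l M_{ll}$, while the reverse inequality is trivial since each $|l,l\rangle$ is itself a product state. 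I expect this extreme-point argument, together with the substitution $v_l=a_l b_l$ and the verification that $v$ ranges within $B_1$, to be the crux of the proof.

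Finally, combining this equality with Theorem~\ref{main Theorem} yields the proposition for $|(1-\alpha)/z|\neq 1$. On the boundary lines $z=1-\alpha$ and $z=\alpha-1$ one has $|(1-\alpha)/z|=1$ and $\Xi_{\alpha,z}(\rho,\tau)=\chi_{\alpha,1-\alpha}(\rho,\tau)$; the same entrywise evaluation applies, its diagonal entries again equalling $t_l^{\frac{1-\alpha}{z}-1}\langle l,l|\chi_{\alpha,z}(\rho,\tau)|l,l\rangle$ once one uses the identity $\tau^{-1}\chi_{\alpha,\alpha-1}(\rho,\tau)\tau^{-1}=\chi_{\alpha,1-\alpha}(\rho,\tau)$ noted in the proof of Theorem~\ref{main Theorem}, and the convexity argument goes through verbatim (alternatively one passes to the limits $z\to 1-\alpha$, $z\to\alpha-1$ as in Appendix~\ref{lower-semicontinuous}). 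The one hypothesis to be used with care is $\tau\in S_{\alpha,z}(\rho)$: it guarantees that the generalized inverses appearing in $\chi_{\alpha,z}$ and in the resolvent integral are taken consistently on $\supp(\tau)\supseteq\supp(\rho)$, so that the entrywise computation above is legitimate.
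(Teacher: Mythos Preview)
Your proposal is correct and follows essentially the same route as the paper: reduce the condition of Theorem~\ref{main Theorem} to pure product states, use that $\Xi_{\alpha,z}(\rho,\tau)$ is a positive operator supported on $\operatorname{span}\{|l,l\rangle\}$, and then bound $\langle\phi|\Xi_{\alpha,z}|\phi\rangle$ by the maximal diagonal entry via Cauchy--Schwarz on the coefficients $v_l=a_lb_l$. The only cosmetic difference is that the paper carries out the bound elementwise (using $|M_{lm}|\leq\sqrt{M_{ll}M_{mm}}$ and then $(\sum_l|a_l||b_l|)^2\leq 1$), whereas you package the same two ingredients as ``$\langle v|M|v\rangle$ is convex and attains its max over the $\ell_1$-ball at an extreme point''; the content is identical.
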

\begin{proof}
We first prove that if $\tau =\sum_i t_{i} |i,i\rangle \! \langle i,i| \in \mathcal{T}_{\rho}$ satisfies the inequality~\eqref{inequality}, then $\tau$ satisfies the conditions of Theorem~\ref{main Theorem}, i.e. $\tau \in \argmin_{\sigma \in \textup{SEP}} D_{\alpha,z}(\rho \| \sigma)$. As we already discussed, since the trace function is linear, we can restrict the optimization problem of Theorem~\ref{main Theorem} to pure product states. Let us denote with $|i,j\rangle$ the basis such that $\rho$ has the form~\eqref{form}. We can expand any pure product state as $\sigma = \sum_{ij}a_i b_j |i,j\rangle$ with complex normalized coefficients $\sum_i |a_i|^2 =1$ and $\sum_i |b_i|^2 =1$. We obtain
\begin{align}
\label{MC single problem}
 \Tr(\sigma \Xi_{\alpha,z}(\rho, \tau)) 
 &\leq K_{\alpha,z} \sum_{i,j} |a_i||b_i||a_j||b_j| \left|  \int_0^\infty \langle i,i| \frac{\chi_{\alpha,z}(\rho ,\tau )}{(\tau + t)^{2}}|j,j\rangle t^\frac{1-\alpha}{z} \d t \right|\\
 &\leq K_{\alpha,z} \sum_{i,j} |a_i||b_i||a_j||b_j| \bigg(  \int_0^\infty \langle i,i|\frac{\chi_{\alpha,z}(\rho ,\tau )}{(\tau + t)^{2}}|i,i\rangle t^\frac{1-\alpha}{z} \d t  \\
& \hskip120pt \times   \int_0^\infty \langle j,j|\frac{\chi_{\alpha,z}(\rho ,\tau )}{(\tau + t)^{2}}|j,j\rangle t^\frac{1-\alpha}{z} \d t \bigg)^\frac{1}{2}\\
& \leq K_{\alpha,z} \max_k  \int_0^\infty \langle k,k|\frac{\chi_{\alpha,z}(\rho ,\tau )}{(\tau + t)^{2}}|k,k\rangle t^\frac{1-\alpha}{z} \d t  \sum_{i,j} |a_i||b_i||a_j||b_j| \\
& \leq \max_l t^{\frac{1-\alpha}{z}-1}_l \langle l,l|\chi_{\alpha,z}(\rho ,\tau ) |l,l\rangle \,,
\end{align}
In the first inequality we used that $\tau$ is of the form $ \tau = \sum t_i |i,i\rangle \! \langle i, i|$. In the second inequality we used that for a positive operator $A$ and vectors $\ket{v},\ket{w}$, the Cauchy-Schwarz inequality implies that $ |\langle v| A | w\rangle| =  |\langle v| \sqrt{A} \, \sqrt{A} | w\rangle| \leq \sqrt{\langle v| A | v\rangle \! \langle w| A | w\rangle}$. In the last inequality, we used the Cauchy-Schwarz inequality. The previous inequalities together with inequality~\eqref{inequality} and Theorem~\ref{main Theorem} implies $\tau \in \argmin_{\sigma \in \textup{SEP}} D_{\alpha,z}(\rho \| \sigma)$.

We now prove that any $\tau \in \mathcal{T}_{\rho}$ such that $\tau \in \argmin_{\sigma \in \textup{SEP}} D_{\alpha,z}(\rho \| \sigma)$ must satisfy the inequality~\eqref{inequality}. We choose the pure product state $\sigma = |l,l\rangle$. We have for any $l$
\begin{align}
 \Tr(\sigma \Xi_{\alpha,z}(\rho , \tau )) &= K_{\alpha,z} \int_0^\infty  \langle l,l|\frac{\chi_{\alpha,z}(\rho ,\tau )}{(\tau + t)^{2}}|l,l\rangle t^\frac{1-\alpha}{z} \d t \\
& = t^{\frac{1-\alpha}{z}-1}_{l} \langle l,l|\chi_{\alpha,z}(\rho ,\tau ) |l,l\rangle \,.
\end{align}
Since $\tau $ is an optimizer, the above equalities together with Theorem~\ref{main Theorem} imply that 
\begin{equation}
\label{marginal}
\max_l t^{\frac{1-\alpha}{z}-1}_{l} \langle l,l|\chi_{\alpha,z}(\rho ,\tau ) |l,l\rangle \leq  Q_{\alpha,z}(\rho  \|\tau ) \,.
\end{equation} 
\end{proof}
In Appendix~\ref{relations} we show that for any maximally correlated state $\rho$, there always exists an optimizer $\tau \in \argmin_{\sigma \in \textup{SEP}} D_{\alpha,z}(\rho \| \sigma)$ such that $\tau \in \mathcal{T}_{\rho}$.

\begin{theorem}
\label{TMC}
Let $\rho_1$ be a maximally correlated state and $(\alpha,z) \in \mathcal{D}$. Then, for any state $\rho_2$, we have
\begin{equation}
\mathfrak{D}_{\alpha,z}(\rho_1 \otimes \rho_2) = \mathfrak{D}_{\alpha,z}(\rho_1) + \mathfrak{D}_{\alpha,z}(\rho_2) \,.
\end{equation}
\end{theorem}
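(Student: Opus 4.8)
The plan is to mirror the strategy of Theorems~\ref{coherence} and~\ref{CC}: exhibit an explicit factorizing optimizer of the tensor product. Concretely, I would invoke Proposition~\ref{MC marginal} together with the existence result of Appendix~\ref{relations} to pick an optimizer $\tau_1=\sum_i t_{1,i}|i,i\rangle\!\langle i,i|\in\mathcal{T}_{\rho_1}$ of the maximally correlated state $\rho_1$, diagonal in the correlated basis, and let $\tau_2$ be any optimizer of $\rho_2$. Since $D_{\alpha,z}$ is additive under tensor products, it suffices to show that $\tau_1\otimes\tau_2\in\argmin_{\sigma\in\SEP}D_{\alpha,z}(\rho_1\otimes\rho_2\|\sigma)$, for then $\mathfrak{D}_{\alpha,z}(\rho_1\otimes\rho_2)=D_{\alpha,z}(\rho_1\|\tau_1)+D_{\alpha,z}(\rho_2\|\tau_2)=\mathfrak{D}_{\alpha,z}(\rho_1)+\mathfrak{D}_{\alpha,z}(\rho_2)$. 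Optimality of $\tau_1\otimes\tau_2$ is checked through Theorem~\ref{main Theorem}. The support membership $\tau_1\otimes\tau_2\in S_{\alpha,z}(\rho_1\otimes\rho_2)$ is immediate because $S_{\alpha,z}$ factorizes over tensor products and $\tau_i\in S_{\alpha,z}(\rho_i)$; the substance is the inequality $\Tr(\sigma\,\Xi_{\alpha,z}(\rho_1\otimes\rho_2,\tau_1\otimes\tau_2))\le Q_{\alpha,z}(\rho_1\|\tau_1)Q_{\alpha,z}(\rho_2\|\tau_2)$, which by linearity I only need to verify on pure product states $\sigma=|\phi\rangle\!\langle\phi|$ with $|\phi\rangle=|u\rangle_{A_1A'_1}\otimes|v\rangle_{A_2A'_2}$ across the $A_1A'_1:A_2A'_2$ cut.

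I would next exploit the maximally correlated structure to collapse the problem onto the correlated subspace. Writing $|u\rangle=\sum_i|i\rangle_{A_1}\!\otimes|u_i\rangle_{A'_1}$ and $|v\rangle=\sum_j|j\rangle_{A_2}\!\otimes|v_j\rangle_{A'_2}$, and using that $\chi_{\alpha,z}(\rho_1\otimes\rho_2,\tau_1\otimes\tau_2)=\chi_{\alpha,z}(\rho_1,\tau_1)\otimes\chi_{\alpha,z}(\rho_2,\tau_2)$ with $\chi_{\alpha,z}(\rho_1,\tau_1)$ supported on $\mathrm{span}\{|i,i\rangle\}$, only the diagonal components $i=j=k$ of $|\phi\rangle$ survive, carrying the product vectors $|w_k\rangle:=|u_k\rangle_{A'_1}\!\otimes|v_k\rangle_{A'_2}$. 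Since $\tau_1$ is diagonal, the resolvent $(\tau_1\otimes\tau_2+t)^{-1}$ acts on the $k$-block as $(t_{1,k}\tau_2+t)^{-1}$, so with $G_t:=(\tau_1\otimes\tau_2+t)^{-1}(\chi_{\alpha,z}(\rho_1,\tau_1)\otimes\chi_{\alpha,z}(\rho_2,\tau_2))(\tau_1\otimes\tau_2+t)^{-1}\ge0$ one finds $\langle\phi|G_t|\phi\rangle=\sum_{k,l}(\chi_1)_{kl}\,\langle w_k|(t_{1,k}\tau_2+t)^{-1}\chi_{\alpha,z}(\rho_2,\tau_2)(t_{1,l}\tau_2+t)^{-1}|w_l\rangle$, where $(\chi_1)_{kl}:=\langle k,k|\chi_{\alpha,z}(\rho_1,\tau_1)|l,l\rangle$.

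The heart of the argument, and the place I expect the main difficulty, is controlling the off-diagonal ($k\ne l$) coherences inherited from the maximally correlated state. I would apply the Cauchy--Schwarz inequality for the positive operator $G_t$ exactly as in the proof of Proposition~\ref{MC marginal}, bounding $|\langle k,k;w_k|G_t|l,l;w_l\rangle|$ by the geometric mean of the diagonal terms $(\chi_1)_{kk}\,h_k(t)$, with $h_k(t):=\langle w_k|(t_{1,k}\tau_2+t)^{-1}\chi_{\alpha,z}(\rho_2,\tau_2)(t_{1,k}\tau_2+t)^{-1}|w_k\rangle$, giving $\langle\phi|G_t|\phi\rangle\le(\sum_k\sqrt{(\chi_1)_{kk}\,h_k(t)})^2$. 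A second Cauchy--Schwarz, now against the positive measure $t^{(1-\alpha)/z}\d t$ (legitimate since $K_{\alpha,z}>0$ in this range), lets me pull the $t$-integral inside and, after the substitution $t\mapsto t_{1,k}t$, identify $K_{\alpha,z}\int_0^\infty h_k(t)\,t^{(1-\alpha)/z}\d t=t_{1,k}^{(1-\alpha)/z-1}\langle w_k|\Xi_{\alpha,z}(\rho_2,\tau_2)|w_k\rangle$. Altogether, after the constants $K_{\alpha,z}$ cancel, this yields $\Tr(\sigma\,\Xi_{\alpha,z}(\rho_1\otimes\rho_2,\tau_1\otimes\tau_2))\le\big(\sum_k\sqrt{t_{1,k}^{(1-\alpha)/z-1}(\chi_1)_{kk}\,\langle w_k|\Xi_{\alpha,z}(\rho_2,\tau_2)|w_k\rangle}\big)^2$.

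Finally I would feed in the two marginal optimality conditions. Proposition~\ref{MC marginal}, inequality~\eqref{inequality}, gives $t_{1,k}^{(1-\alpha)/z-1}(\chi_1)_{kk}\le Q_{\alpha,z}(\rho_1\|\tau_1)$, while Theorem~\ref{main Theorem} applied to $\rho_2$ (with $|w_k\rangle/\|w_k\|$ a pure product state across $A'_1:A'_2$) gives $\langle w_k|\Xi_{\alpha,z}(\rho_2,\tau_2)|w_k\rangle\le\|w_k\|^2\,Q_{\alpha,z}(\rho_2\|\tau_2)$. Each summand is therefore at most $\|w_k\|\sqrt{Q_{\alpha,z}(\rho_1\|\tau_1)Q_{\alpha,z}(\rho_2\|\tau_2)}$, and the crucial normalization $\sum_k\|w_k\|=\sum_k\|u_k\|\,\|v_k\|\le(\sum_k\|u_k\|^2)^{1/2}(\sum_k\|v_k\|^2)^{1/2}=1$ closes the bound at $Q_{\alpha,z}(\rho_1\|\tau_1)Q_{\alpha,z}(\rho_2\|\tau_2)$, verifying the condition of Theorem~\ref{main Theorem} and hence the additivity. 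The boundary lines $|(1-\alpha)/z|=1$, where $\Xi_{\alpha,z}$ is not given by the integral formula, I would treat by the limiting argument $z\to1-\alpha$ and $z\to\alpha-1$ used elsewhere (Appendix~\ref{lower-semicontinuous}).
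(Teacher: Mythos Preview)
Your proposal is correct and follows essentially the same route as the paper's proof: choose $\tau_1\in\mathcal T_{\rho_1}$ via Appendix~\ref{relations}, reduce the condition of Theorem~\ref{main Theorem} to pure product states, kill the off-diagonal coherences with Cauchy--Schwarz, use the change of variable $t\mapsto t_{1,k}t$ to extract $\Xi_{\alpha,z}(\rho_2,\tau_2)$, and close with Proposition~\ref{MC marginal} together with the normalization bound $\sum_k\|u_k\|\|v_k\|\le 1$. The only cosmetic difference is that the paper applies Cauchy--Schwarz once to the already integrated positive operator $\Xi_{\alpha,z}$ (so the diagonal terms are directly $t_{1,k}^{(1-\alpha)/z-1}(\chi_1)_{kk}\langle w_k|\Xi_{\alpha,z}(\rho_2,\tau_2)|w_k\rangle$), whereas you apply it pointwise to $G_t$ and then need the $L^2$ triangle inequality over the measure $t^{(1-\alpha)/z}\d t$ to separate the integrals; this extra step is harmless but unnecessary.
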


\begin{proof}
Similarly to the proof of Theorem~\ref{coherence}, according to Theorem~\ref{main Theorem}, to prove additivity it is enough to show that 
\begin{equation}
\label{functional MC}
\Tr(\sigma \Xi_{\alpha,z}(\rho_1\otimes \rho_2, \tau_1 \otimes \tau_2)) \leq  Q_{\alpha,z}(\rho_1 \| \tau_1) Q_{\alpha,z}(\rho_2 \| \tau_2) \qquad \forall \sigma \in \SEP \,,
\end{equation}
where $\tau_1$ and $\tau_2$ are optimizers of $\rho_1$ and $\rho_2$, respectively. 

We now prove that, if we choose an optimizer $\tau_1$ of $\rho_1$ in the set $ \mathcal{T}_{\rho_1}$, then the above inequality holds for any optimizer $\tau_2 $ of $\rho_2$. Since the trace function is linear, we can restrict the above optimization to pure product (separable) states. We can write any pure product state in the partition $AA'\!:\!BB'$ as $\sigma = \sum_{i}a_{i}|i , \phi_i\rangle_{AA'} \otimes \sum_{j} b_{j}|j , \psi_j  \rangle_{BB'} $ with positive normalized coefficients $a_i$ and $b_j$ and normalized $|\phi_i\rangle$ and $|\psi_i\rangle$. Here, we choose $|i,j\rangle_{AB}$ as the product basis such that $\rho_{\textup{MC}} = \sum_{jk}\rho_{jk}|j,j\rangle \! \langle k,k|_{AB} $.
We take the modulus and use the Cauchy-Schwarz inequality to get
\begin{align}
\label{MC}
 \Tr(\sigma \Xi_{\alpha,z}(\rho_1\otimes \rho_2, \tau_1 \otimes \tau_2)) &\leq \sum \limits_{i,j,i',j'} a_i b_j a_{i'} b_{j'}  |\langle i, j| \langle \phi_i ,\psi_j | \Xi_{\alpha,z}(\rho_1 \otimes \rho_2, \tau_1 \otimes \tau_2) |i' , j' \rangle |\phi_{i'} ,\psi_{j'}\rangle| \\
& \leq \sum \limits_{i,j,i',j'} a_i b_j a_{i'} b_{j'}  ( \langle i , j |\langle \phi_i ,\psi_j | \,\Xi_{\alpha,z}(\rho_1 \otimes \rho_2, \tau_1 \otimes \tau_2)\, |i , j  \rangle| \phi_{i} ,\psi_{j}\rangle \\
& \qquad \qquad  \times \langle i' , j' |\langle \phi_{i'} ,\psi_{j'} |\, \Xi_{\alpha,z}(\rho_1 \otimes \rho_2, \tau_1 \otimes \tau_2)\, |i' , j'  \rangle |\phi_{i'} ,\psi_{j'}\rangle)^\frac{1}{2} \,.
\end{align}
We now use the spectral decomposition $\tau_2 = \sum_{r}t_{2,r}|\xi_r \rangle \! \langle \xi_r|$. We have for $|(1-\alpha)/z| \neq 1$
\begin{align}
\label{hard}
& \langle i , j |\langle \phi_i ,\psi_j | \Xi_{\alpha,z}(\rho_1 \otimes \rho_2, \tau_1 \otimes \tau_2) |i , j  \rangle| \phi_{i} ,\psi_{j}\rangle \\
& \quad \quad = K_{\alpha,z} \delta_{ij}   \sum \limits_{r_1,r_2}  \int_0^\infty \left(t_{1,i}t_{2,r_1} +k\right)^{-1} \left(t_{1,i} t_{2,r_2} +k\right)^{-1} k^\frac{1-\alpha}{z}\d k  \\
& \qquad \qquad \qquad \qquad  \qquad \times \langle i,i| \chi_{\alpha,z}(\rho_1,\tau_1)|i,i\rangle  \langle  \phi_i ,\psi_i | \xi_{r_1} \rangle \! \langle \xi_{r_1}| \chi_{\alpha,z}(\rho_2,\tau_2) |\xi_{r_2} \rangle \! \langle \xi_{r_2}|\phi_{i} , \psi_{i}\rangle \notag \\
\label{equality}
& \quad \quad= K_{\alpha,z} \delta_{ij} t_{1,i}^{\frac{1-\alpha}{z}-1}  \sum \limits_{r_1,r_2}  \int_0^\infty \left(t_{2,r_1} +k\right)^{-1} \left(t_{2,r_2} +k\right)^{-1} k^\frac{1-\alpha}{z}\d k  \\
& \qquad \qquad \qquad \qquad  \qquad \times \langle i,i| \chi_{\alpha,z}(\rho_1,\tau_1)|i,i\rangle  \langle  \phi_i ,\psi_i | \xi_{r_1} \rangle \! \langle \xi_{r_1}| \chi_{\alpha,z}(\rho_2,\tau_2) |\xi_{r_2} \rangle \! \langle \xi_{r_2}|\phi_{i} , \psi_{i}\rangle \notag \\
& \quad \quad  =\,K_{\alpha,z} \delta_{i,j} t_{1,i}^{\frac{1-\alpha}{z}-1}  \langle i,i|\chi_{\alpha,z}(\rho_1,\tau_1) |i,i\rangle \!  \int_0^\infty \langle \phi_i \psi_i|(\tau_2 +k)^{-1} \chi_{\alpha,z}(\rho_2,\tau_2) (\tau_2 + k)^{-1}|\phi_i \psi_i\rangle k^\frac{1-\alpha}{z} \d k  \,  \\
& \quad \quad \leq  \,\delta_{i,j} t_{1,i}^{\frac{1-\alpha}{z}-1}  \langle i,i|\chi_{\alpha,z}(\rho_1,\tau_1) |i,i\rangle Q_{\alpha,z}(\rho_2\| \tau_2) \,.
\end{align}
In the first equality, we used the above-mentioned structure of the optimizer of $\tau_1$. In the second equality, we used the change of measure $k \rightarrow t_{1,i} k$. The last inequality follows from the fact that since by assumption $\tau_2$ is an optimizer of $\rho_2$, we have $ \Tr(\sigma \Xi_{\alpha,z}(\rho_2,\tau_2)) \leq  Q_{\alpha,z}(\rho_2\| \tau_2)$  for any $\sigma \in \SEP$. In addition, we used that any diagonal matrix elements of the positive operator  $\chi_{\alpha,z}(\rho_1,\tau_1)$ is positive. Therefore, for any $\sigma \in \SEP$, we have
\begin{align}
& \Tr(\sigma \Xi_{\alpha,z}(\rho_1\otimes \rho_2, \tau_1 \otimes \tau_2))\\
& \qquad \leq \sum \limits_{i,i'} a_i b_i a_{i'} b_{i'}   Q_{\alpha,z}(\rho_2\| \tau_2) \bigg((t_{1,i}t_{1,i'})^{\frac{1-\alpha}{z}-1} \langle i,i|\chi_{\alpha,z}(\rho_1,\tau_1) |i,i\rangle \! \langle i',i'|\chi_{\alpha,z}(\rho_1,\tau_1) |i',i'\rangle\bigg)^\frac{1}{2} \\
& \qquad \leq  Q_{\alpha,z}(\rho_1\| \tau_1) Q_{\alpha,z}(\rho_2\| \tau_2) \,,
\end{align}
where the last inequality follows from the Cauchy-Schwarz inequality and the marginal problem condition derived in Proposition~\ref{MC marginal}. 
Hence, the inequality~\eqref{functional MC} is satisfied and additivity holds. The result for the case $|(1-\alpha)/z|=1$ can be obtained by taking the limits $z \rightarrow 1-\alpha$ and $z \rightarrow \alpha -1$ (see Appendix~\ref{lower-semicontinuous} for a more detailed discussion)
\end{proof}

The above theorem can be easily generalized to include all the states of the form $|\rho\rangle = \sum_i \sqrt{p_i}|i,\dots,i\rangle$.  In particular, this class contains the GHZ state (see also Section~\ref{Bipartite pure section}).
The proof is similar, where now $\tau = \frac{1}{C}\sum_i p^{\beta}_i \ketbra{i, \dots,i}{i,\dots,i}$ (see Section~\ref{Bipartite pure section}). 

\begin{remark} 
Note that the previous results imply that whenever a state $\rho$ satisfies the conditions of Theorem~\ref{CC} or Theorem~\ref{TMC}, we have $\mathfrak{D}_{\alpha,z}(\rho)=\mathfrak{D}^\infty_{\alpha,z}(\rho)$, where the regularized or asymptotic monotone $\mathfrak{D}^\infty_{\alpha,z} $ is defined as $\mathfrak{D}^\infty_{\alpha,z}(\rho) = \lim_{n \rightarrow \infty}\frac{1}{n}\mathfrak{D}_{\alpha,z}(\rho^{\otimes n})$.
\end{remark}

\section{Counterexample to the additivity of the monotones based on a quantum relative entropy}
\label{Counterexample}
In this section, we prove that we prove that any monotone based on a quantum relative entropy is not additive for general states. In particular, additivity generally does not hold also when the two states are equal, which means that the latter monotones are also not weakly additive. To the best of our knowledge, this result is only known for $\mathfrak{D}_{\alpha,z}$ in the points $z=\alpha=1$ and $z=\alpha = \infty$. In particular, for these two points, in~\cite{vollbrecht2001entanglement} and~\cite{zhu2010additivity} the authors showed that the tensor product of two bipartite antisymmetric states provides a counterexample to the additivity of these monotones for general states. Moreover, in these two points, the value of the monotones for bipartite antisymmetric states and tensor product of bipartite antisymmetric does not depend on the value of $z$ and $\alpha$. Since the monotones $\mathfrak{D}_{\alpha,\alpha}$ are increasing functions of $\alpha$, the latter results already imply that the $\mathfrak{D}_{\alpha,\alpha}$ are not additive for the range $\alpha \in [1,\infty]$. In the following, we show that our framework allows extending this result to any monotone based on a quantum relative entropy. We remark that the non-additivity of $\mathfrak{D}_{1/2,1/2}$ implies that the fidelity of separability is not multiplicative for general states, which, to the best of our knowledge, was not known previously.

It is easy to prove that the $\mathfrak{D}_{\alpha,z}$ are sub-additive, i.e. $\mathfrak{D}_{\alpha,z}(\rho_1 \otimes \rho_2) \leq \mathfrak{D}_{\alpha,z}(\rho_1) + \mathfrak{D}_{\alpha,z}(\rho_1) $. We now prove that when $\rho_1=\rho_2=\rho_-$ where $\rho_-$ is the bipartite antisymmetric (Werner) state~\cite{vollbrecht2001entanglement,zhu2010additivity}, the previous inequality is strict. 
From Section~\ref{Werner states} we have that $\mathfrak{D}_{\alpha,z}(\rho_-) = 1$. We now consider the tensor product $\rho_- \otimes \rho_-$ of two antisymmetric states. We choose the (separable) ansatz~\cite{vollbrecht2001entanglement} $\tau _{-,-}= \frac{d+1}{2d} \rho_+ \otimes \rho_+ + \frac{d-1}{2d} \rho_- \otimes \rho_-$ which commutes with $\rho_- \otimes \rho_-$. We have
\begin{equation}
\Tr((\rho_- \otimes \rho_-)^\alpha \tau^{-\alpha}_{-,-}\sigma) \leq \left( \frac{d-1}{2d}\right)^{1-\alpha} = \Tr((\rho_-\otimes \rho_-)^\alpha \tau^{1-\alpha}_{-,-}) \,,
\end{equation}
for any $\sigma \in \SEP$. Here, we used that $\Lambda^2(\rho_-\otimes \rho_-) = \frac{d-1}{2d} \left(\frac{2}{d(d-1)}\right)^2$~\cite{zhu2010additivity}. Hence, the condition of Corollary~\ref{commuting} is satisfied and the ansatz is a solution. We then have for $(\alpha,z) \in \mathcal{D}$
\begin{align}
\label{single}
&\mathfrak{D}_{\alpha,z}(\rho_-)   = 1 \, , \; \text{and} \\
\label{tensor product}
& \mathfrak{D}_{\alpha,z}(\rho_- \otimes \rho_-)  = 1 - \log{\frac{(d-1)}{d}} \,.
\end{align}
Hence, additivity is violated for $d>2$. Note that, for $d\gg1$, we have  $\mathfrak{D}_{\alpha,z}(\rho_-) \sim \mathfrak{D}_{\alpha,z}(\rho_- \otimes \rho_-)$. For $d=2$, the additivity property is not violated since the state $\rho_-$ is a Bell diagonal state and, for what we discussed above, in this case, the monotones are additive. 
From Theorem 3 in~\cite{gour2020optimal} it follows that any monotone defined as $\min_{\sigma \in \SEP} \mathbb{D}(\rho \| \sigma)$, where $\mathbb{D}$ is a quantum relative entropy, satisfies $\mathfrak{D}_{\min}(\rho) \leq \min_{\sigma \in \SEP} \mathbb{D}(\rho \| \sigma) \leq \mathfrak{D}_{\max}(\rho)$. We recall that we have the limits~\eqref{limits monotone}. Hence, since the results in equation~\eqref{single} and ~\eqref{tensor product} do not depend on $\alpha$ and $z$, we have that additivity is violated for any monotone based on a quantum relative entropy. 

\begin{remark} 
In the literature, the minimization in~\eqref{problem} over the separable states is sometimes replaced by the minimization over the PPT states~\cite{rains1999bound,audenaert2001asymptotic}. The latter set is usually easier to characterize. We define
\begin{equation}
\label{problem sandwiched PPT}
\mathfrak{D}^{\PPT}_{\alpha,z}(\rho) := \min_{\sigma \in \text{PPT}} D_{\alpha,z}(\rho \| \sigma) \,.
\end{equation} 
We have for $(\alpha,z) \in \mathcal{D}$
\begin{align}
\label{PPT1}  
&\mathfrak{D}^\PPT_{\alpha,z}(\rho_-) =1\, , \; \text{and}\\
\label{PPT2}
& \mathfrak{D}^\PPT_{\alpha,z}(\rho_- \otimes \rho_-)  \leq 1 - \log{\frac{(d-1)}{d}} \,.
\end{align}
Indeed, following Section~\ref{Werner states}, to calculate the value of $\mathfrak{D}^ {\PPT}_{\alpha,z}$ for Werner states (and in particular for $\rho_-$), we can restrict the optimization to $\text{PPT}\cap W$ states. Since all entangled Werner states have negative partial transpose, the optimizations over PPT and SEP give the same value and hence~\eqref{PPT1} holds. The inequality in~\eqref{PPT2} follows from SEP $\subseteq$ PPT. Hence, also in this case, the monotones $\mathfrak{D}^\PPT_{\alpha,z}$ for $(\alpha,z) \in \mathcal{D}$  are not additive for $d>2$. As we discussed above,  this means that also any monotone $\min_{\sigma \in \text{PPT}} \mathbb{D}(\rho \| \sigma)$, where $\mathbb{D}$ is a quantum relative entropy, is not additive for general states. This result has already been established for $\alpha=z=1$ in~\cite{rains1998improved} where a different counterexample is provided. 
\end{remark}

\section{Some examples of states belonging to the additivity classes and their computation}
\label{Analytics}

In this section, we provide several examples of states that belong to the additivity classes that we discussed in the previous section. We also give a closed-form expression for the monotones $\mathfrak{D}_{\alpha,z}$ by extending some already known results. We summarize our results and our contribution in Table~\ref{states}. Our strategy consists of the following two steps. We first guess the optimizer, and then we check analytically if the conditions of Theorem~\ref{main Theorem} (or Corollary~\ref{commuting}) are satisfied. We remark that it is in general a hard problem to find the right guess for $\tau$ for general states. However, as we show later, the optimizer can be easily guessed in many situations as it usually lies at the boundary of the allowed separable region. In addition, numerical simulations could give insight on the structure of the optimizer.  We then need to solve the optimization problem $\Lambda^2(\Xi):=\max_{\sigma \in \SEP}\Tr(\sigma \Xi)$ where $\Xi$ is a positive operator. This optimization, as we see later, is usually an easier problem than the original one in Eq.~\eqref{problem}.  Indeed, because of the linearity of the trace, the optimization in $\Lambda^2(\Xi)$ can be restricted to the set of set pure product states and can be computed analytically for many states~\cite{zhu2010additivity}. Note that the optimization  $\Lambda^2(\Xi)$ could also be performed numerically by optimizing over the set of states with positive partial transpose instead of the separable states. Indeed, since the set of PPT states is larger than the set of separable ones~\cite{peres1996separability}, if the conditions of Theorem~\ref{main Theorem} are satisfied for any $\sigma \in \text{PPT}$, we can conclude that our (separable) guess is the right optimizer.

\setlength{\tabcolsep}{1.2pt}
\renewcommand{\arraystretch}{1.1}
\begin{table}
\begin{tabular}{l  l} 
\hline
 \multirow{2}{*}{\textbf{STATES}} &   \textbf{VALUE OF} $\mathfrak{D}_{\alpha,z}$ \\ \cmidrule{2-2}
& \textbf{REFERENCES} \\ 
\hline
\hline
\textbf{Bipartite pure}  & $  H_{\beta}(\vec{p}) \;, \quad \text{where}\;\; (1-\alpha)/z + 1/\beta = 1 $\\ \cmidrule{2-2}
$ |\rho(\vec{p}) \rangle = \sum_i \sqrt{p_i} |i,i \rangle$  & $z=\alpha=1$~\cite{vedral1998entanglement}, $z=\alpha, z=1$~\cite[Corollary 2]{zhu2017coherence} \, \\
& Other $(\alpha,z)$: Proposition~\ref{Bipartite pure}  \\
\hline
\hline
\textbf{Bell diagonal} &   $1-H_\alpha(\lambda_{\max},1-\lambda_{\max}) \;\;,   \text{if} \;\;  \lambda_{\max} \in \left[\frac{1}{2},1\right]$   \\ \cmidrule{2-2}
$\rho_{\text{BD}}(\vec{\lambda}) = \sum_{j=1}^{4}\lambda_j |\psi_j \rangle \! \langle \psi_j|$ & \multirow{1}{*}{$\alpha = z =1$~\cite{vedral1997quantifying,rains1998improved},$\,\infty$~\cite{zhu2010additivity}}\\
where $\{|\psi_j \rangle\}_{j=1}^4$ are the Bell states  & Other $(\alpha,z)$: Proposition \ref{pBD} \\
\addlinespace[0.1em]
\hline
\hline
\textbf{Generalized Werner}  &
 \multirow{2}{*}{ $                     1-H_\alpha(p,1-p) \, \,,\quad  p \in [0,1/2]  
                        $}\\
$ \rho_W(p)= p \frac{2}{d(d+1)}P^{\text{SYM}}_{AB} + (1-p)\frac{2}{d(d-1)}P^{\text{AS}}_{AB}$ & \\ \cmidrule{2-2}
$P^{\text{SYM}}_{AB}$ 
projector onto the symmetric subspace & $\alpha = z =1/2$~\cite{wei2003geometric},$\,1$~\cite{audenaert2001asymptotic,zhu2010additivity}\\
$P^{\text{AS}}_{AB}$ 
projector onto the antisymmetric subspace & Other $(\alpha,z)$: Proposition~\ref{pW}  \\
\addlinespace[0.1em]
\hline
\hline
\textbf{Isotropic} &  \multirow{2}{*}{$    \log{d}-H_\alpha\Bigg(\frac{1-F}{(d-1)^\frac{\alpha-1}{\alpha}},F\Bigg) \, \, F \in \left[\frac{1}{d},1 \right] $}\\
\vspace*{-\baselineskip} $ \rho_{iso}(F)=\frac{1-F}{d^2-1}(\mathds{1} -|\Phi^+\rangle \!\langle\Phi^+|) + F |\Phi^+\rangle \! \langle\Phi^+|$ & \\ \cmidrule{2-2}
where $  |\Phi^+ \rangle = \frac{1}{\sqrt{d}} \sum_i |ii\rangle$ & $\alpha = z= 1/2$~\cite{wei2003geometric},$\,1$~\cite{rains1998improved},$\,\infty$\cite{zhu2010additivity} \\ 
 & Other $(\alpha,z)$: Proposition~\ref{piso}\\
\hline
\hline
\textbf{Generalized Dicke} &   $-\log{\left(\frac{N!}{\prod_{j=0}^{d-1}k_j!}\prod_{j=0}^{d-1}\left(\frac{k_j}{N}\right)^{k_j} \right)}$   \\ \cmidrule{2-2}
$|S(N,\vec{k})\rangle = \frac{1}{\sqrt{C_{n,\vec{k}}}} \sum \limits_{ \text{P}} P |\underbrace{0, ..., 0}_{k_0},\underbrace{1, ..., 1}_{k_1}, ..., \underbrace{d-1, ..., d-1}_{k_d-1} \rangle$ & \multirow{2}{*}{$\alpha=z$~\cite{wei2003geometric,wei2008relative,hayashi2008entanglement}}\\
$\vec{k}=(k_0,...,k_{d-1}),\;  \sum_{j=0}^{d-1}k_j=N$ & Other $(\alpha,z)$: Proposition~\ref{gD}\\
\addlinespace[0.2em]
\hline
\hline
\textbf{Maximally correlated Bell diagonal} &   $\log{d}-H_\alpha(\vec{p})$\\ \cmidrule{2-2}
$\rho_{\text{MCBD}}(\vec{p}) = \sum_{k=0}^{d-1}p_k |\psi_k \rangle \! \langle \psi_k| $  & $\alpha = z = 1$~\cite{zhu2010additivity,rains1998improved}  \\
where $ |\psi_k \rangle = \frac{1}{\sqrt{d}} \sum_{j=0}^{d-1}e^{\frac{2 \pi i k}{d}j}|jj\rangle$ & Other $(\alpha,z)$: Proposition~\ref{pMCBD}\\
\addlinespace[0.3em]
\hline
\end{tabular}
\caption{The table contains the value of the monotones $\mathfrak{D}_{\alpha,z}$  for some states.  We denoted with $H_\alpha(\vec{p}) = \frac{1}{1-\alpha} \log{\left(\sum_i p_i^\alpha\right)}$ the $\alpha$-R\'enyi entropy of a vector $\vec{p}$. For Bell diagonal, Werner and isotropic states, we write only the range where the monotones are different from zero. For more details, see Section~\ref{Analytics}. The monotones $\mathfrak{D}_{\alpha,z}$ are additive when one state is among the ones listed in the table; the only exception is for Werner states, which provide a counterexample to the additivity of any entanglement monotone based on a quantum relative entropy (see Section~\ref{Counterexample}).}
\label{states}
\end{table}

\subsection{Bell Diagonal states (BD)}
We consider the Bell diagonal states 
\begin{equation}
\label{BD}
\rho_{\text{BD}}(\vec{\lambda})=\sum_{j=1}^{4}\lambda_j |\psi_j \rangle \! \langle \psi_j| \,,
\end{equation}
where $|\psi_j \rangle$ are the four Bell states. A Bell diagonal state is separable if $\lambda_i \in [0,1/2]$ for all $i$~\cite{horodecki1996information}.

\begin{proposition}
\label{pBD}
Let $(\alpha,z) \in \mathcal{D}$. For a Bell Diagonal state~\eqref{BD} we have \begin{align}
&\mathfrak{D}_{\alpha,z}(\rho_{\textup{BD}}(\vec{\lambda})) = \begin{cases}
0 & \text{if} \; \lambda_{\max} \in \left[0,\frac{1}{2}\right] \\
1-H_\alpha(\lambda_{\max},1-\lambda_{\max})  & \text{if} \;  \lambda_{\max} \in \left[\frac{1}{2},1\right]
\end{cases} \,.
\end{align}
\end{proposition}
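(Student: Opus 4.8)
The plan is to follow the two-step strategy laid out at the start of Section~\ref{Analytics}: guess the closest separable state and then certify it with the commuting-case conditions of Corollary~\ref{commuting}, which apply because every Bell-diagonal state and its natural optimizer are simultaneously diagonal in the Bell basis. I treat the two regimes of $\lambda_{\max}$ separately. In the regime $\lambda_{\max}\in[0,\tfrac12]$ the state $\rho_{\textup{BD}}(\vec\lambda)$ is itself separable (a Bell-diagonal state is separable iff all its Bell weights are at most $\tfrac12$), so taking $\sigma=\rho_{\textup{BD}}(\vec\lambda)$ in the infimum gives $D_{\alpha,z}(\rho\|\rho)=0$; together with nonnegativity of $D_{\alpha,z}$ on states for $(\alpha,z)\in\mathcal{D}$, this yields $\mathfrak{D}_{\alpha,z}=0$ and settles the first line.

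For the regime $\lambda_{\max}\in[\tfrac12,1]$ I first use local-unitary invariance of both $\mathfrak{D}_{\alpha,z}$ and $\SEP$: since the four Bell states are interchanged by local Pauli unitaries, I may assume without loss of generality that $\lambda_{\max}=\lambda_1$ is the weight of $\ket{\psi_1}=\ket{\Phi^+}$. I then propose the optimizer
\begin{equation}
\tau=\tfrac12\,\ketbra{\psi_1}{\psi_1}+\frac{1}{2(1-\lambda_1)}\sum_{j=2}^{4}\lambda_j\,\ketbra{\psi_j}{\psi_j}\,,
\end{equation}
which is a separable Bell-diagonal state (its largest weight is exactly $\tfrac12$, the others are bounded by $\tfrac12$) and which commutes with $\rho$ and satisfies $\supp(\rho)\subseteq\supp(\tau)$ by construction. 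Corollary~\ref{commuting} then reduces optimality to checking $\Tr(\sigma\,\Xi_\alpha(\rho,\tau))\le Q_\alpha(\rho\|\tau)$ for all $\sigma\in\SEP$, with $\Xi_\alpha(\rho,\tau)=\rho^\alpha\tau^{-\alpha}$. A direct diagonal computation gives $\Xi_\alpha(\rho,\tau)=b\,\mathds{1}+(a-b)\,\ketbra{\psi_1}{\psi_1}$, where $a=(2\lambda_{\max})^\alpha$ and $b=(2(1-\lambda_{\max}))^\alpha$, and $a>b>0$ precisely because $\lambda_{\max}>\tfrac12$.

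The heart of the argument is then evaluating $\Lambda^2(\Xi_\alpha):=\max_{\sigma\in\SEP}\Tr(\sigma\,\Xi_\alpha)$. By linearity of the trace this reduces to $\max_{\phi\in\text{PRO}}\bra{\phi}\Xi_\alpha\ket{\phi}=b+(a-b)\max_{\phi\in\text{PRO}}|\langle\phi|\psi_1\rangle|^2$, and since $\ket{\psi_1}$ is a two-qubit maximally entangled state its maximal squared overlap with a product vector is $\tfrac12$, giving $\Lambda^2(\Xi_\alpha)=\tfrac{a+b}{2}=2^{\alpha-1}\big(\lambda_{\max}^\alpha+(1-\lambda_{\max})^\alpha\big)$. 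A parallel diagonal computation shows $Q_\alpha(\rho\|\tau)=\Tr(\rho^\alpha\tau^{1-\alpha})=2^{\alpha-1}\big(\lambda_{\max}^\alpha+(1-\lambda_{\max})^\alpha\big)$, so the certification condition holds with equality and $\tau$ is a genuine optimizer. Finally, since $[\rho,\tau]=0$ the divergence is $z$-independent and equals $\tfrac{1}{\alpha-1}\log Q_\alpha(\rho\|\tau)=\log 2-H_\alpha(\lambda_{\max},1-\lambda_{\max})=1-H_\alpha(\lambda_{\max},1-\lambda_{\max})$, which is the claimed value. I expect the main obstacle to be this exact matching $\Lambda^2(\Xi_\alpha)=Q_\alpha(\rho\|\tau)$, which rests entirely on the maximal product overlap $\tfrac12$; a secondary point requiring care is the degenerate boundary $\lambda_{\max}=1$ (where $\tau$ as written degenerates and the state is a pure maximally entangled state), which I would dispatch by continuity/lower-semicontinuity or by invoking the bipartite-pure result of Proposition~\ref{Bipartite pure}.
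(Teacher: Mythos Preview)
Your proof is correct and follows essentially the same approach as the paper: the same separable ansatz $\tau$, the same commuting-case certificate via Corollary~\ref{commuting}, and the same final value. The only cosmetic difference is that you compute $\Lambda^2(\Xi_\alpha)$ directly from the maximal product overlap $\tfrac12$ with $\ket{\Phi^+}$, whereas the paper quotes $\Lambda^2(\rho_{\text{BD}})=(\lambda_1+\lambda_2)/2$ from~\cite{zhu2010additivity}; you are also a bit more careful in flagging the degenerate endpoint $\lambda_{\max}=1$.
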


\begin{proof}
A Bell diagonal state where all $\lambda_i \in [0,1/2]$  is separable and hence, in this range, we have $\min_{\sigma \in \SEP}D_{\alpha,z}(\rho_{\text{BD}}(\vec{\lambda}) \| \sigma ) = 0$. Without loss of generality, we assume that $\lambda_1 \geq 1/2$ and $\lambda_1 \geq \lambda_2 \geq \lambda_3 \geq \lambda_4$.  We take as ansatz $\tau_{\text{BD}} = \sum_{i=1}^{4} p_i |\psi_i \rangle \! \langle \psi_i|$ with $p_1=1/2$ and $p_i = \lambda_i/(2(1-\lambda_1)) \,\, \text{for} \,\, i \in \{2,3,4\}$. Using that $\Lambda^2(\rho_{\text{BD}}(\vec{\lambda})) = (\lambda_1+\lambda_2)/2$~\cite{zhu2010additivity} it is easy to check that $\Tr(\rho^\alpha_\text{BD}(\vec{\lambda}) \tau^{-\alpha}_\text{BD} \sigma) \leq 2^{\alpha -1} (\lambda_1^{\alpha} + (1-\lambda_1)^{\alpha}) = \Tr(\rho^\alpha_\text{BD}(\vec{\lambda}) \tau^{1-\alpha}_\text{BD})$ and the condition of Corollary~\ref{commuting} is satisfied.
Hence, our ansatz is an optimizer, and by explicit calculation, we obtain Proposition~\eqref{pBD}.
\end{proof}

The Bell diagonal states commute with their optimizer, and the product $\rho^\alpha_{\text{BD}}(\vec{\lambda}) \tau^{-\alpha}_{\text{BD}}$ is a (not normalized) Bell diagonal state. Bell diagonal states have positive entries in a product basis~\cite{zhu2010additivity}. Therefore, from Theorem~\ref{CC} it follows that the $\mathfrak{D}_{\alpha,z}$ is additive when one state is a Bell diagonal state.

\subsection{Werner states}
\label{Werner states}
A state $\rho \in \mathcal{S}(A_1 \otimes A_2)$ is called (generalized) Werner state if it is invariant under the projection~\cite{werner1989quantum,bennett1996purification,vollbrecht2001entanglement} ($UU$ group)
\begin{equation}
P: \rho \rightarrow \int dU (U \otimes U) \rho (U^\dagger \otimes U^\dagger) \,,
\end{equation} 
where $\d U$ is the invariant Haar measure. 
A Werner state (bipartite state in $d$ dimensions) can be written as
\begin{equation}
\label{W}
\rho_W(p) = p \frac{2}{d(d+1)}P^{\text{SYM}}_{AB} + (1-p)\frac{2}{d(d-1)}P^{\text{AS}}_{AB} \,,
\end{equation}
where $P^{\text{SYM}}_{AB}$ and $P^{\text{AS}}_{AB}$ are the projectors into the symmetric and antisymmetric subspace, respectively. The Werner state is separable for $p \geq 1/2$ and entangled for $p<1/2$~\cite{vollbrecht2001entanglement}. We define $\rho_+ := \rho_W(1)$ and $\rho_- := \rho_W(0)$. 

\begin{proposition}
\label{pW}
Let $(\alpha,z) \in \mathcal{D}$. For a generalized Werner state~\eqref{W} we have\begin{align}
&\mathfrak{D}_{\alpha,z}(\rho_{\text{W}}(p))  =
 \begin{cases}
 1-H_\alpha(p,1-p)  & \text{if} \;\;  p \in \left[0,\frac{1}{2}\right]\\
0 & \text{if} \; \; p \in \left[\frac{1}{2},1\right] 
\end{cases} \,.
\end{align}
\end{proposition}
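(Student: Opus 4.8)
The plan is to split the parameter range into the separable regime $p\in[1/2,1]$, where the statement is immediate, and the entangled regime $p\in[0,1/2)$, where I would guess the closest separable state explicitly and certify it through the commuting-case criterion of Corollary~\ref{commuting}. For $p\in[1/2,1]$ the Werner state is separable, so it coincides with its own optimizer and $\mathfrak{D}_{\alpha,z}(\rho_W(p))=0$. For $p\in[0,1/2)$ I would take as ansatz the separable Werner state lying on the boundary of the separable region, namely $\tau=\rho_W(1/2)$. Since both $\rho_W(p)$ and $\tau$ are $UU$-invariant, they are simultaneously diagonalized by the split into the symmetric and antisymmetric subspaces; hence $[\rho_W(p),\tau]=0$ and $\supp(\rho_W(p))\subseteq\supp(\tau)$ (the latter has full support). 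This places us in the setting of Corollary~\ref{commuting}, so it suffices to verify $\max_{\sigma\in\SEP}\Tr(\sigma\,\rho_W(p)^\alpha\tau^{-\alpha})\leq Q_\alpha(\rho_W(p)\|\tau)$ for the single operator $\rho_W(p)^\alpha\tau^{-\alpha}$, a condition that is manifestly independent of $z$.

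Next I would read off the eigenvalues on the two blocks. On the symmetric (resp. antisymmetric) subspace $\rho_W(p)$ has eigenvalue $\tfrac{2p}{d(d+1)}$ (resp. $\tfrac{2(1-p)}{d(d-1)}$), while $\tau$ has eigenvalue $\tfrac{1}{d(d+1)}$ (resp. $\tfrac{1}{d(d-1)}$). This yields
\begin{equation}
\rho_W(p)^\alpha\tau^{-\alpha} = (2p)^\alpha P^{\text{SYM}}_{AB} + (2(1-p))^\alpha P^{\text{AS}}_{AB}, \qquad Q_\alpha(\rho_W(p)\|\tau) = \tfrac{1}{2}\big((2p)^\alpha + (2(1-p))^\alpha\big).
\end{equation}

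To evaluate $\Lambda^2\big(\rho_W(p)^\alpha\tau^{-\alpha}\big)=\max_{\sigma\in\SEP}\Tr(\sigma\,\rho_W(p)^\alpha\tau^{-\alpha})$, I would restrict to pure product states $|ab\rangle$ (the extreme points of $\SEP$), use the identity $\langle ab|P^{\text{AS}}_{AB}|ab\rangle=\tfrac{1}{2}(1-|\langle a|b\rangle|^2)$ together with its symmetric counterpart $\langle ab|P^{\text{SYM}}_{AB}|ab\rangle=\tfrac{1}{2}(1+|\langle a|b\rangle|^2)$, and optimize over $c:=|\langle a|b\rangle|^2\in[0,1]$. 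For $p<1/2$ the coefficient $(2(1-p))^\alpha$ dominates $(2p)^\alpha$, so the objective is affine and decreasing in $c$, and the maximum is attained at $c=0$ (orthogonal $a,b$), giving exactly $Q_\alpha(\rho_W(p)\|\tau)$. Hence the criterion of Corollary~\ref{commuting} holds with equality and $\tau=\rho_W(1/2)$ is an optimizer.

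Finally, since $\rho_W(p)$ and $\tau$ commute the value is $z$-independent and equals $D_{\alpha,z}(\rho_W(p)\|\tau)=\tfrac{1}{\alpha-1}\log Q_\alpha(\rho_W(p)\|\tau)=\log 2 + \tfrac{1}{\alpha-1}\log\big(p^\alpha+(1-p)^\alpha\big)=1-H_\alpha(p,1-p)$ in the base-two convention $\log 2=1$, with the $\alpha=1$ Umegaki case recovered by continuity. The main obstacle is guessing the correct optimizer and confirming tightness of the separable maximization; here the latter is clean because $\tau$ is itself separable and saturates the inequality, $\Tr(\tau\,\rho_W(p)^\alpha\tau^{-\alpha})=Q_\alpha(\rho_W(p)\|\tau)$, so only the reverse bound $\Lambda^2\leq Q_\alpha$ over product states requires genuine work, and that reduces to the one-variable optimization over $c$ above.
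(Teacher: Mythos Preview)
Your proof is correct and follows essentially the same strategy as the paper: both take the ansatz $\tau=\rho_W(1/2)$, note that it commutes with $\rho_W(p)$, and certify optimality via Corollary~\ref{commuting}. The only difference is in how the maximization $\Lambda^2(\rho_W(p)^\alpha\tau^{-\alpha})$ is carried out. The paper first derives a general formula $\Lambda^2(\rho_W(p'))=\tfrac{p'}{d(d+1)}+\tfrac{1-p'}{d(d-1)}$ for $p'\leq\tfrac{d+1}{2d}$ by twirling the test state $\sigma$ with the $UU$-group (reducing the maximization to separable Werner states), then recognizes $\rho_W(p)^\alpha\tau^{-\alpha}$ as an unnormalized Werner state with an explicit parameter $p'$ and checks $p'<\tfrac{d+1}{2d}$. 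You instead optimize directly over pure product states using the swap identity $\langle ab|P^{\text{AS}}_{AB}|ab\rangle=\tfrac12(1-|\langle a|b\rangle|^2)$, which collapses the problem to a one-variable affine maximization in $c=|\langle a|b\rangle|^2$. Your route is a bit more elementary and self-contained; the paper's twirl argument generalizes more readily to other $UU$-invariant operators, but for this specific proposition both compute the same quantity with the same result.
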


\begin{proof}
We first calculate the quantity $\Lambda^2(\rho_W(p))=\max_{\sigma \in \SEP} \Tr(\sigma \rho_W(p))$. 
We first note that the optimization $\Lambda^2(\rho_W(p))$ can be restricted to separable Werner states. Indeed,
\begin{align}
\max \limits_{\sigma \in \SEP} \Tr(\sigma \rho_W(p)) & = \max \limits_{\sigma \in \SEP}  \Tr(\sigma \int dU (U \otimes U)\rho_W(p)(U^\dagger \otimes U^\dagger)) \\
& = \max \limits_{\sigma \in \SEP}  \Tr(\int dU(U^\dagger \otimes U^\dagger)\sigma (U \otimes U)\rho_W(p)) \\
& = \max \limits_{\sigma \in \SEP \cap \text{W}} \Tr(\sigma \rho_W(p)) \,,
\end{align}
where in the last step, we used that $P(\sigma) \in \SEP$ for any $\sigma \in \SEP$.
Then, by direct calculation 
\begin{equation}
\Tr(\rho_W(p')\rho_W(p'')) = p'p''\frac{2}{d(d+1)} + (1-p')(1-p'')\frac{2}{d(d-1)} \,,
\end{equation}
where we used that $\Tr(P^{SYM}_{AB})=d(d+1)/2$ and  $\Tr(P^{AS}_{AB})=d(d-1)/2$.
It is easy to see that the previous expression is maximized for $p'' = 1/2$ if $p' \leq \frac{d+1}{2d}$. We therefore obtain 
\begin{equation}
\Lambda^2(\rho_W(p'))= \frac{p'}{d(d+1)} + \frac{1-p'}{d(d-1)} \qquad \text{for} \; \;  p' \leq \frac{d+1}{2d} \,.
\end{equation}
We now calculate the values of $\mathfrak{D}_{\alpha,z}$. We choose as ansatz $\tau_W:=\rho_W (1/2)$ which commutes with any Werner state. Then $\rho_W(p)^\alpha \tau_W^{-\alpha}$ is (apart from a normalization factor) a Werner state with 
\begin{equation}
 p'= \left(1+\left(\frac{1-p}{p}\right)^\alpha \frac{d-1}{d+1}\right)^{-1} < (d+1)/(2d) \,,
\end{equation} 
where in the last inequality we assumed that the input state $\rho_W(p)$ is entangled, i.e. $p<1/2$. Then from the above, $\Tr(\sigma\rho_W(p)^\alpha \tau_W^{-\alpha}) \leq \Tr(\rho_W(p)^\alpha \tau_W^{1-\alpha})$ for $p<1/2$ for any $\sigma \in \SEP$. Hence, from Corollary~\ref{commuting} it follows that $\tau_W$ is an optimizer, and by explicit calculation, we obtain Proposition~\eqref{pW}.
\end{proof}

The Werner states do not satisfy the conditions for $\rho_1$ of Theorem~\ref{CC}, and they are not maximally correlated states. In Section~\ref{Counterexample} we show that they provide a counterexample to the additivity of the $\mathfrak{D}_{\alpha,z}$.

\subsection{Isotropic states}
\label{Isotropic states}
A state $\rho \in \mathcal{S}(A_1 \otimes A_2)$ is called an isotropic state if it is invariant under the projection ($UU^*$ group)~\cite{terhal2000entanglement,horodecki1999reduction}
\begin{equation}
P: \rho \rightarrow \int dU (U \otimes U^*) \rho (U^\dagger \otimes U^{*\dagger}) \,,
\end{equation} 
where $\d U$ is the invariant Haar measure. 
Isotropic states (bipartite states in $d$ dimensions) can be written as
\begin{equation}
\label{iso}
\rho_{iso}(F) = \frac{1-F}{d^2-1}(\mathds{1} -|\Phi^+\rangle\langle\Phi^+|) + F |\Phi^+\rangle\langle\Phi^+| \,,
\end{equation}
where $|\Phi^+ \rangle = 1/\sqrt{d} \sum_i |ii\rangle$ and $F \in [0,1]$. The isotropic state is separable for $F \leq 1/d$ and entangled for $F> 1/d$~\cite{horodecki1999reduction}. 

\begin{proposition}
\label{piso}
Let $(\alpha,z) \in \mathcal{D}$ . For an isotropic state~\eqref{iso} we have \begin{align}
& \mathfrak{D}_{\alpha,z}(\rho_{iso}(F) ) =
 \begin{cases}
0 & \text{if} \;\;  F \in \left[0,\frac{1}{d}\right]\\
\log{d}-H_\alpha\left(\frac{1-F}{(d-1)^\frac{\alpha-1}{\alpha}},F\right)  & \text{if} \; \; F \in \left[\frac{1}{d},1\right] 
\end{cases} \,.
\end{align}
\end{proposition}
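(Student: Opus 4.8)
The plan is to handle the two regimes separately. For $F\in[0,1/d]$ the isotropic state is separable, so $\mathfrak{D}_{\alpha,z}(\rho_{iso}(F))=0$ with the optimizer being the state itself. For the entangled regime $F\in(1/d,1]$, I would guess that the optimal separable state is the isotropic state sitting exactly on the separability boundary, $\tau:=\rho_{iso}(1/d)$. All isotropic states are simultaneously diagonalized by the orthogonal splitting of the Hilbert space into $\mathrm{span}\{|\Phi^+\rangle\}$ and its complement, so $[\rho_{iso}(F),\tau]=0$; moreover $\tau$ has full rank (its eigenvalues $1/d$ and $1/(d(d+1))$ are both strictly positive), so the support conditions are automatic. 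It therefore suffices to verify the commuting optimality criterion of Corollary~\ref{commuting}, namely $\Tr(\sigma\,\rho^\alpha\tau^{-\alpha})\le Q_\alpha(\rho\|\tau)$ for all $\sigma\in\SEP$, with $\rho:=\rho_{iso}(F)$ and $Q_\alpha(\rho\|\tau)=\Tr(\rho^\alpha\tau^{1-\alpha})$.

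Next I would compute $\Xi_\alpha(\rho,\tau)=\rho^\alpha\tau^{-\alpha}$ explicitly. Since $\rho$ acts as the scalar $F$ on $|\Phi^+\rangle$ and $\tfrac{1-F}{d^2-1}$ on the complement (and similarly for $\tau$ with $F\mapsto 1/d$), the operator takes the form $\rho^\alpha\tau^{-\alpha}=c_1\mathds{1}+c_2\,|\Phi^+\rangle\!\langle\Phi^+|$ with $c_1=\big(\tfrac{d(1-F)}{d-1}\big)^\alpha$ and $c_2=(dF)^\alpha-c_1$. For $F>1/d$ one checks $dF>1>\tfrac{d(1-F)}{d-1}$, so $c_2>0$. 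Consequently $\max_{\sigma\in\SEP}\Tr(\sigma\,\rho^\alpha\tau^{-\alpha})=c_1+c_2\,\Lambda^2(|\Phi^+\rangle\!\langle\Phi^+|)$, and the remaining maximization is the known quantity $\Lambda^2(|\Phi^+\rangle\!\langle\Phi^+|)=\max_{\sigma\in\SEP}\langle\Phi^+|\sigma|\Phi^+\rangle=1/d$, obtained by restricting to pure product states $|\phi\rangle\otimes|\chi\rangle$ and using $|\langle\Phi^+|\phi,\chi\rangle|^2=\tfrac1d|\langle\phi^*|\chi\rangle|^2\le 1/d$. Substituting gives $\max_{\sigma\in\SEP}\Tr(\sigma\,\rho^\alpha\tau^{-\alpha})=c_1(1-\tfrac1d)+\tfrac1d(dF)^\alpha=d^{\alpha-1}\big(F^\alpha+(1-F)^\alpha(d-1)^{1-\alpha}\big)$, which equals $Q_\alpha(\rho\|\tau)$; i.e. the criterion holds with equality, so $\tau=\rho_{iso}(1/d)$ is indeed an optimizer.

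Finally, because $\rho$ and $\tau$ commute, $D_{\alpha,z}(\rho\|\tau)$ is independent of $z$ and equals $\tfrac{1}{\alpha-1}\log Q_\alpha(\rho\|\tau)$ with $Q_\alpha(\rho\|\tau)=F^\alpha(1/d)^{1-\alpha}+(1-F)^\alpha(1-1/d)^{1-\alpha}=d^{\alpha-1}\big(F^\alpha+(1-F)^\alpha(d-1)^{1-\alpha}\big)$. Taking $\tfrac{1}{\alpha-1}\log$ yields $\log d+\tfrac{1}{\alpha-1}\log\big(F^\alpha+(1-F)^\alpha(d-1)^{1-\alpha}\big)$, which is precisely $\log d-H_\alpha\big(\tfrac{1-F}{(d-1)^{(\alpha-1)/\alpha}},F\big)$ by the definition $H_\alpha(\vec p)=\tfrac{1}{1-\alpha}\log\sum_i p_i^\alpha$, as claimed. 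The limiting values $\alpha\in\{0,1,\infty\}$ then follow by continuity. The main obstacle is not the algebra but pinning down the correct boundary guess $\tau=\rho_{iso}(1/d)$ and verifying the sign $c_2>0$ that makes the separable maximization equal $\Lambda^2(|\Phi^+\rangle\!\langle\Phi^+|)=1/d$ rather than $0$; once the exact saturation $\max_{\sigma\in\SEP}\Tr(\sigma\,\rho^\alpha\tau^{-\alpha})=Q_\alpha(\rho\|\tau)$ is observed, optimality via Corollary~\ref{commuting} is immediate.
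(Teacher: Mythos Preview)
Your proof is correct and follows essentially the same route as the paper: guess the boundary optimizer $\tau=\rho_{iso}(1/d)$, note that it commutes with $\rho_{iso}(F)$, and verify the criterion of Corollary~\ref{commuting} by computing $\max_{\sigma\in\SEP}\Tr(\sigma\,\rho^\alpha\tau^{-\alpha})$. The only cosmetic difference is that the paper recognizes $\rho^\alpha\tau^{-\alpha}$ as a rescaled isotropic state with parameter $F'$ and invokes the formula $\Lambda^2(\rho_{iso}(F'))=\tfrac{F'd+1}{d(d+1)}$ (valid for $F'\ge 1/d^2$) from~\cite{zhu2010additivity}, whereas you write $\rho^\alpha\tau^{-\alpha}=c_1\mathds{1}+c_2|\Phi^+\rangle\!\langle\Phi^+|$ and reduce directly to $\Lambda^2(|\Phi^+\rangle\!\langle\Phi^+|)=1/d$; the two conditions $F'>1/d^2$ and $c_2>0$ are equivalent, so the arguments coincide.
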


\begin{proof}
For $F> 1/d$, we choose as ansatz $\tau_{iso}:=\rho_{iso} (1/d)$ which commutes with any isotropic state. Then $\rho_{iso}(F)^\alpha \tau_{iso}^{-\alpha}$ is (apart from a normalization factor) an isotropic state with \begin{equation}
F'= \left(1+\left(\frac{1-F}{F}\right)^\alpha \frac{d^2-1}{(d-1)^\alpha}\right)^{-1} > 1/d^2 \, ,
\end{equation} 
where in the last inequality we assumed that the input state $\rho_{iso}(F)$ is entangled, i.e. $F>1/d$.  We then use that $\Lambda^2(\rho_{iso}(F')) =  \frac{F'd+1}{d(d+1)}$ if $F'\geq 1/d^2$~\cite{zhu2010additivity}. Hence, $\Tr(\sigma \rho_{iso}(F)^\alpha \tau_{iso}^{-\alpha}) \leq  Q_\alpha(\rho_{iso}(F) \|\tau_{iso})$ for $F>1/d$ and any $\sigma \in \SEP$. From Corollary~\ref{commuting} it follows that $\tau_{iso}$ is an optimizer and by explicit calculation, we obtain Proposition~\eqref{piso}.
\end{proof}

The isotropic states commute with their optimizer, and the product $\rho^\alpha_{iso}(F) \tau^{-\alpha}_{iso}$ is a (not normalized) isotropic state. Isotropic states have positive entries in a product basis~\cite{zhu2010additivity}. Therefore, from Theorem~\ref{CC} it follows that the $\mathfrak{D}_{\alpha,z}$ are additive when one state is an isotropic state. 

\subsection{Bipartite pure states}
\label{Bipartite pure section}
The exact computation of the monotones $\mathfrak{D}_{\alpha,z}(|\rho\rangle)$ was already solved for bipartite pure states in~\cite{zhu2017coherence} for $\alpha=z$ and $z=1$. We extend this proof outside the former ranges and to a more general class of multipartite states, which includes the GHZ state.
\begin{proposition}
\label{Bipartite pure}
Let $(\alpha,z) \in \mathcal{D}$. For  a bipartite pure state $|\rho(\vec{p})\rangle = \sum \sqrt{p_i} |i,i \rangle$  we have
\begin{align}
& \mathfrak{D}_{\alpha,z}(|\rho(\vec{p})\rangle) = H_{\beta}(\vec{p}) \;, \quad \text{where}\;\; \frac{1-\alpha}{z} +  \frac{1}{\beta} = 1 \,.
\end{align}
\end{proposition}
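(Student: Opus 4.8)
The plan is to apply the marginal characterization of Proposition~\ref{MC marginal}, since any bipartite pure state, written in its Schmidt basis as $|\rho(\vec p)\rangle = \sum_i \sqrt{p_i}\,|i,i\rangle$, is a maximally correlated state with $\rho_{jk} = \sqrt{p_j p_k}$. Following the recipe of Section~\ref{Analytics}, I would first guess the optimizer and then verify the inequality~\eqref{inequality}. Writing $\mu := (1-\alpha)/z$, the natural ansatz is the diagonal separable state
\[
\tau = \frac{1}{C}\sum_i p_i^{\beta}\,|i,i\rangle\!\langle i,i| \in \mathcal{T}_\rho, \qquad C := \sum_i p_i^{\beta}, \qquad \beta := \frac{1}{1-\mu},
\]
so that $\beta$ is exactly the exponent fixed by $\frac{1-\alpha}{z} + \frac1\beta = 1$. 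It then remains to check $\tau \in S_{\alpha,z}(\rho)$ and the optimality inequality, and to evaluate $D_{\alpha,z}(\rho\|\tau)$.

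The key simplification, which makes the whole computation tractable, is that $\rho = |\rho\rangle\!\langle\rho|$ is rank one, hence $\rho^s = |\rho\rangle\!\langle\rho|$ for every $s>0$. Therefore, writing $t_i := p_i^\beta/C$ and using $\langle\rho|\tau^\mu|\rho\rangle = \sum_i p_i t_i^\mu$, one finds
\[
\rho^{\frac{\alpha}{2z}}\tau^{\mu}\rho^{\frac{\alpha}{2z}} = \Big(\sum_i p_i t_i^{\mu}\Big)\,|\rho\rangle\!\langle\rho|,
\]
so that $\chi_{\alpha,z}(\rho,\tau) = s^{z-1}|\rho\rangle\!\langle\rho|$ and $Q_{\alpha,z}(\rho\|\tau) = s^{z}$, where $s := \sum_i p_i t_i^\mu$. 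The choice of $\beta$ is engineered precisely so that $1 + \beta\mu = \beta$, which collapses the scalar to $s = C^{-\mu}\sum_i p_i^{1+\beta\mu} = C^{-\mu}\sum_i p_i^\beta = C^{1-\mu}$.

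Next I would verify optimality. Since $\langle l,l|\chi_{\alpha,z}(\rho,\tau)|l,l\rangle = s^{z-1}p_l$ and $t_l^{\mu-1}p_l = C^{1-\mu}p_l^{\beta(\mu-1)+1}$, the defining relation $\beta(\mu-1)+1 = 0$ makes $t_l^{\mu-1}p_l = C^{1-\mu}$ independent of $l$. Hence the left-hand side of~\eqref{inequality} equals $s^{z-1}C^{1-\mu} = s^{z-1}s = s^z = Q_{\alpha,z}(\rho\|\tau)$, so the inequality is in fact saturated. The support condition $\tau\in S_{\alpha,z}(\rho)$ is immediate for $|(1-\alpha)/z|\neq 1$, since $\supp(\rho) = \mathrm{span}\{|\rho\rangle\} \subseteq \supp(\tau)$. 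By Proposition~\ref{MC marginal}, $\tau$ is therefore an optimizer.

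Finally I would compute the value: $D_{\alpha,z}(\rho\|\tau) = \frac{1}{\alpha-1}\log Q_{\alpha,z}(\rho\|\tau) = \frac{z(1-\mu)}{\alpha-1}\log C$. Using $\alpha-1 = -z\mu$ and $1-\beta = -\mu/(1-\mu)$, a short manipulation gives $\frac{z(1-\mu)}{\alpha-1} = -\frac{1-\mu}{\mu} = \frac{1}{1-\beta}$, whence $\mathfrak{D}_{\alpha,z}(|\rho(\vec p)\rangle) = \frac{1}{1-\beta}\log C = H_\beta(\vec p)$, as claimed. The one delicate point, and the place I would be most careful, is the boundary line $z = 1-\alpha$ (where $\mu = 1$, $\beta = \infty$, the ansatz degenerates to the projector onto the largest Schmidt coefficient, and the formula becomes the min-entropy $-\log p_{\max}$), together with the stricter support set $S_{\alpha,z}$ appearing there; this case I would settle by the same limiting argument $z\to 1-\alpha$ used throughout the paper, also checking that restricting to $\mathcal{T}_\rho$ is without loss of generality (Appendix~\ref{relations}). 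The identical computation, with $\tau \propto \sum_i p_i^\beta |i,\dots,i\rangle\!\langle i,\dots,i|$, extends verbatim to the multipartite states $|\rho\rangle = \sum_i\sqrt{p_i}\,|i,\dots,i\rangle$ and hence covers the GHZ state.
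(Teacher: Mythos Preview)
Your proof is correct and follows essentially the same approach as the paper: the same diagonal ansatz $\tau \propto \sum_i p_i^\beta|i,i\rangle\!\langle i,i|$, the rank-one simplification $\rho^s=\rho$ yielding $\chi_{\alpha,z}(\rho,\tau)=s^{z-1}\rho$ with $s=C^{1/\beta}$, and verification of the optimality condition of Proposition~\ref{MC marginal}, including the same limiting treatment of the boundary $z=1-\alpha$. Your write-up is in fact more explicit than the paper's (which simply asserts that the check is easy), and your observation that inequality~\eqref{inequality} is saturated for every $l$ matches the paper's remark that the maximum is achieved by $\sigma=|l,l\rangle$ for any $l$.
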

\begin{proof}
Our ansatz for the optimzer is $\tau = \frac{1}{C}\sum_i p^{\beta}_i |i,i \rangle \! \langle i,i|$ where $C = \sum_i p_i^\beta$ is the normalization constant and $ \frac{1-\alpha}{z} + \frac{1}{\beta} = 1$. Here, $|i,i\rangle$ is the Schmidt basis of the pure state, i.e. the basis for which $|\rho(\vec{p})\rangle = \sum \sqrt{p_i} |i,i \rangle$. According to Proposition~\ref{MC marginal} we need to check the condition
\begin{equation}
\max_l \left(\frac{p_l^\beta}{C}\right)^{\frac{1-\alpha}{z}-1} \langle l,l|\chi_{\alpha,z}(\rho,\tau) |l,l\rangle \leq  Q_{\alpha,z}(\rho \|\tau) \,.
\end{equation} 
 We have that $\chi_{\alpha,z}(\rho,\tau) = \rho \langle \rho|\sigma^\frac{1-\alpha}{z}|\rho\rangle^{z-1}$, $Q_{\alpha,z}(\rho\|\tau) =  \langle \rho|\sigma^\frac{1-\alpha}{z}|\rho\rangle^{z}$ and $\langle \rho|\sigma^\frac{1-\alpha}{z}|\rho\rangle = C^\frac{1}{\beta}$. It is easy to check that the ansatz satisfies the above condition. Note that the maximum of the above quantity is achieved by $\sigma = |l,l \rangle$ for any $l$. Therefore, Proposition~\ref{Bipartite pure} holds. 
\end{proof}
 
Note that, in contrast with the previous cases, for bipartite pure states, the closest separable state depends on $\alpha$ and $z$. The bipartite pure states are maximally correlated states. Therefore, Theorem~\ref{TMC} implies that the $\mathfrak{D}_{\alpha,z}$ are additive when one state is a bipartite pure state. 

The same results can be extended to multipartite states of the form $|\rho\rangle = \sum_i \sqrt{p_i}|i,\dots,i\rangle$.  In particular, this class contains the GHZ state.

\begin{corollary}
Let $(\alpha,z) \in \mathcal{D}$ and $|\rho(\vec{p})\rangle = \sum \sqrt{p_i} |i,\dots,i \rangle \in \mathcal{S}(A_1 \otimes \cdots \otimes A_N)$. Then, we have
\begin{align}
& \mathfrak{D}_{\alpha,z}(|\rho(\vec{p})\rangle) = H_{\beta}(\vec{p}) \;, \quad \text{where}\;\;   \frac{1-\alpha}{z} + \frac{1}{\beta} = 1 \,.
\end{align}
\end{corollary}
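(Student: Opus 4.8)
The plan is to mirror the proof of Proposition~\ref{Bipartite pure} almost verbatim, the only genuinely new ingredient being a multipartite replacement of the bipartite Cauchy--Schwarz step. Write $\rho=|\rho\rangle\!\langle\rho|$ with $|\rho\rangle=\sum_i\sqrt{p_i}\,|i,\dots,i\rangle$, and take as ansatz the diagonal state $\tau=\frac1C\sum_i p_i^{\beta}\,|i,\dots,i\rangle\!\langle i,\dots,i|$ with $C=\sum_i p_i^{\beta}$ and $\tfrac1\beta+\tfrac{1-\alpha}{z}=1$. This $\tau$ is separable since each $|i,\dots,i\rangle=|i\rangle^{\otimes m}$ is a pure product state, and it lies in $S_{\alpha,z}(\rho)$: indeed $\supp(\rho)=\mathrm{span}(|\rho\rangle)\subseteq\supp(\tau)$, and in the $z=1-\alpha$ case $\Pi(\rho)\tau\Pi(\rho)=\langle\rho|\tau|\rho\rangle\,|\rho\rangle\!\langle\rho|$ has support equal to $\supp(\rho)$.

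First I would establish the multipartite analogue of Proposition~\ref{MC marginal}: for $\tau$ of the above diagonal form, $\tau\in\argmin_{\sigma\in\SEP}D_{\alpha,z}(\rho\|\sigma)$ iff $\tau\in S_{\alpha,z}(\rho)$ and $\max_l t_l^{\frac{1-\alpha}{z}-1}\langle l,\dots,l|\chi_{\alpha,z}(\rho,\tau)|l,\dots,l\rangle\le Q_{\alpha,z}(\rho\|\tau)$. Necessity follows exactly as in Proposition~\ref{MC marginal} by testing Theorem~\ref{main Theorem} against $\sigma=|l,\dots,l\rangle$. For sufficiency I would expand an arbitrary pure product state as $\sigma=\bigotimes_{s=1}^m|\phi^{(s)}\rangle$ with $|\phi^{(s)}\rangle=\sum_i a_i^{(s)}|i\rangle$, and set $c_j:=\prod_{s=1}^m a_j^{(s)}=\langle j,\dots,j|\phi\rangle$. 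Since $\chi_{\alpha,z}(\rho,\tau)\propto\rho$ is supported on the diagonal subspace spanned by $\{|i,\dots,i\rangle\}$ while $\tau$ is diagonal there, $\Xi_{\alpha,z}(\rho,\tau)$ is supported on the same subspace, and the two successive Cauchy--Schwarz estimates of Proposition~\ref{MC marginal} (using $|\langle j,\dots,j|\Xi_{\alpha,z}|k,\dots,k\rangle|\le\sqrt{\Xi_{jj}\Xi_{kk}}$) yield $\Tr(\sigma\,\Xi_{\alpha,z}(\rho,\tau))\le\max_l\Xi_{ll}\,\big(\sum_j|c_j|\big)^2$, where $\Xi_{ll}=t_l^{\frac{1-\alpha}{z}-1}\langle l,\dots,l|\chi_{\alpha,z}(\rho,\tau)|l,\dots,l\rangle$ after performing the $t$-integral with the identity $K_{\alpha,z}\int_0^\infty(t_l+t)^{-2}t^{(1-\alpha)/z}\,\d t=t_l^{\frac{1-\alpha}{z}-1}$.

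The new step, which I expect to be the only real obstacle, is to show $\sum_j|c_j|=\sum_j\prod_{s=1}^m|a_j^{(s)}|\le1$. This is where the bipartite Cauchy--Schwarz is replaced by Hölder's inequality with $m$ equal exponents: $\sum_j\prod_{s=1}^m|a_j^{(s)}|\le\prod_{s=1}^m\big(\sum_j|a_j^{(s)}|^m\big)^{1/m}$, and since $m\ge2$ and each $|a_j^{(s)}|\le1$ (because $\sum_j|a_j^{(s)}|^2=1$) we have $|a_j^{(s)}|^m\le|a_j^{(s)}|^2$, so every factor $\sum_j|a_j^{(s)}|^m\le1$. Hence $\Tr(\sigma\,\Xi_{\alpha,z}(\rho,\tau))\le\max_l t_l^{\frac{1-\alpha}{z}-1}\langle l,\dots,l|\chi_{\alpha,z}(\rho,\tau)|l,\dots,l\rangle$, and the marginal condition together with Theorem~\ref{main Theorem} certifies $\tau$ as an optimizer; everything else is bookkeeping identical to the bipartite case.

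Finally I would verify the marginal condition and read off the value exactly as in Proposition~\ref{Bipartite pure}. Using $\rho^s=\rho$ for the rank-one projector $\rho$, one gets $\chi_{\alpha,z}(\rho,\tau)=\langle\rho|\tau^{\frac{1-\alpha}{z}}|\rho\rangle^{z-1}\rho$ and $Q_{\alpha,z}(\rho\|\tau)=\langle\rho|\tau^{\frac{1-\alpha}{z}}|\rho\rangle^{z}$, while the identity $1+\beta\tfrac{1-\alpha}{z}=\beta$ (equivalent to $\tfrac1\beta+\tfrac{1-\alpha}{z}=1$) gives $\langle\rho|\tau^{\frac{1-\alpha}{z}}|\rho\rangle=C^{1/\beta}$. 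Since $\langle l,\dots,l|\rho|l,\dots,l\rangle=p_l$ and $\tfrac{1-\alpha}{z}-1=-\tfrac1\beta$, the left-hand side of the marginal condition equals $C^{z/\beta}$ for every $l$, matching $Q_{\alpha,z}(\rho\|\tau)=C^{z/\beta}$, so the condition holds with equality. Therefore $\mathfrak{D}_{\alpha,z}(|\rho\rangle)=\tfrac{1}{\alpha-1}\log C^{z/\beta}$, and substituting $\tfrac{z}{\beta(\alpha-1)}=\tfrac{1}{1-\beta}$ yields $\mathfrak{D}_{\alpha,z}(|\rho\rangle)=\tfrac{1}{1-\beta}\log\sum_i p_i^{\beta}=H_{\beta}(\vec{p})$.
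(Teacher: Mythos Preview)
Your proof is correct and follows essentially the same route as the paper: same diagonal ansatz $\tau=\tfrac1C\sum_i p_i^{\beta}|i,\dots,i\rangle\!\langle i,\dots,i|$, same reduction via the multipartite analogue of Proposition~\ref{MC marginal}, and the same verification that the scalar marginal condition holds with equality. The only cosmetic difference is in the single new step, bounding $\sum_j\prod_{s=1}^m|a_j^{(s)}|\le 1$: you invoke H\"older with equal exponents $m$ together with $|a_j^{(s)}|^m\le|a_j^{(s)}|^2$, whereas the paper's (somewhat garbled) hint points to Cauchy--Schwarz, which also suffices since $\sum_j\prod_{s=1}^m|a_j^{(s)}|\le\sum_j|a_j^{(1)}||a_j^{(2)}|\le 1$ once one bounds the remaining factors by $1$.
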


\begin{proof}
The proof is analogous to the bipartite case.  In this case, $\tau = \frac{1}{C}\sum_i p^{\beta}_i \ketbra{i, \dots,i}{i,\dots,i}$ and we optimize over the pure separable states $\sigma = \sum_{k_1,\dots,k_N}c^{(1)}_{k_1}\dotsi c^{(N)}_{k_N} |k_1,\dots, k_N\rangle$. 
\end{proof}
The N-partite GHZ state where each subsystem has   dimension $d$ reads $|\text{GHZ} \rangle = \frac{1}{\sqrt{d}}\sum_{i=0}^{d-1}|i\rangle^{\otimes N}$. 
The corollary above implies that for the GHZ state $ \mathfrak{D}_{\alpha,z}(|\text{GHZ}\rangle) = \log{d}$. The latter result was already obtained for the cases $\alpha=z=1/2$~\cite{wei2003geometric},$1$~\cite{wei2004connections}, $\infty$~\cite{Regula}, and $\alpha=0,z=1$~\cite{Regula}. Since the result does not depend on $\alpha$ and $z$, the results for $\alpha=0,z=1$ and $\alpha=z=\infty$ already implies that $\min_{\sigma \in \SEP} \mathbb{D}(|\text{GHZ}\rangle \| \sigma) = \log{d}$ where $\mathbb{D}$ is any quantum relative entropy (see Theorem 3 in~\cite{gour2020optimal}).

\begin{remark}
The above computation together with the additivity result obtain in Section~\ref{Second class} immediately implies that for catalytic transformation of pure states $|\psi\rangle,|\phi\rangle$ with mixed catalysis $\nu$,  $|\psi\rangle \otimes \nu \fo |\phi\rangle \otimes \nu$, we must have $H_\beta(\vec{p}) \geq H_\beta(\vec{q})$ for all $\beta \in [1/2,\infty]$. Here, $\vec{p}$ and $\vec{q}$ are the Schmidt vector coefficients of $|\psi\rangle$ and $|\phi\rangle$, respectively.
This set of necessary conditions is a proper subset of the one for pure state catalysts~\cite{Klimesh}.
\end{remark}

\subsection{Generalized Dicke states}
Generalized Dicke states (or symmetric base states) are defined as~\cite{wei2003geometric,dicke1954coherence}
\begin{align}
\label{D}
&|S(N,\vec{k})\rangle = \frac{1}{\sqrt{C_{N,\vec{k}}}} \sum \limits_{\{P\}} P |\overbrace{0, \dots, 0}^{k_0},\overbrace{1, \dots, 1}^{k_1}, \dots, \overbrace{d-1, \dots, d-1}^{k_d-1} \rangle \,, \\
& \vec{k}=(k_0,\dots,k_{d-1}) \;, \quad  \text{and} \;\; \sum_{j=0}^{d-1}k_j=N \,.
\end{align}
Here $\{P\}$ denotes the set of all permutations and $C_{N,\vec{k}}=\frac{N!}{\prod_{j=0}^{d-1}k_j!}$ is the normalization factor.
\begin{proposition}
\label{gD}
Let $(\alpha,z) \in \mathcal{D}$. For a generalized Dicke state~\eqref{D} we have 
\begin{align}
\mathfrak{D}_{\alpha,z}(|S(N,\vec{k})\rangle) = -\log{\left(C_{N,\vec{k}}\prod_{j=0}^{d-1}\left(\frac{k_j}{N}\right)^{k_j} \right)} \,.
\end{align}
\end{proposition}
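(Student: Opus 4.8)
The plan is to exploit that the generalized Dicke state $\rho:=|\psi\rangle\langle\psi|$, with $|\psi\rangle := S(|N,\vec k\rangle)$, is \emph{pure}, and to guess a separable optimizer that commutes with it, so that Corollary~\ref{commuting} applies verbatim. Writing $q_j:=k_j/N$ and $|\phi\rangle:=\sum_{j=0}^{d-1}\sqrt{q_j}\,|j\rangle$ for the single-particle state underlying the closest product state, I would take as ansatz the diagonal-phase twirl of $|\phi\rangle^{\otimes N}$,
\[
\tau := \int \big(U_\theta|\phi\rangle\langle\phi|U_\theta^\dagger\big)^{\otimes N}\,\d\theta, \qquad U_\theta=\mathrm{diag}(e^{i\theta_0},\dots,e^{i\theta_{d-1}}),
\]
with $\d\theta=\prod_j \d\theta_j/2\pi$. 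As an average of pure product states, $\tau\in\SEP$. Performing the phase integral annihilates every off-type coherence and leaves the explicit mixture of Dicke states $\tau=\sum_{\vec m}C_{N,\vec m}\big(\prod_j q_j^{m_j}\big)\,|D_{\vec m}\rangle\langle D_{\vec m}|$, summed over all compositions $\vec m$ of $N$ into $d$ parts, where $|D_{\vec m}\rangle$ is the normalized type-$\vec m$ Dicke vector; the weights sum to $(\sum_j q_j)^N=1$ by the multinomial theorem, so $\tau$ is a state.

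The key point is that $|\psi\rangle=|D_{\vec k}\rangle$ is itself one of the eigenvectors of $\tau$, with eigenvalue $\lambda:=C_{N,\vec k}\prod_{j}q_j^{k_j}=C_{N,\vec k}\prod_j (k_j/N)^{k_j}$. Hence $[\rho,\tau]=0$ and $\supp(\rho)\subseteq\supp(\tau)$, and I may apply Corollary~\ref{commuting}. Since $\rho$ is a rank-one projector, $\rho^\alpha=\rho$; and since $|D_{\vec k}\rangle$ is an eigenvector of $\tau$, I get $Q_\alpha(\rho\|\tau)=\Tr(\rho^\alpha\tau^{1-\alpha})=\lambda^{1-\alpha}$ and $\Xi_\alpha(\rho,\tau)=\rho^\alpha\tau^{-\alpha}=\lambda^{-\alpha}\rho$. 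The optimality condition $\Tr(\sigma\,\Xi_\alpha(\rho,\tau))\le Q_\alpha(\rho\|\tau)$ for all $\sigma\in\SEP$ then collapses to $\lambda^{-\alpha}\langle\psi|\sigma|\psi\rangle\le\lambda^{1-\alpha}$, i.e.\ to $\max_{\sigma\in\SEP}\Tr(\sigma\rho)=\Lambda^2(\rho)\le\lambda$. Granting this, $\tau$ is optimal, and evaluating the divergence on it gives $\mathfrak D_{\alpha,z}(\rho)=D_{\alpha,z}(\rho\|\tau)=\tfrac{1}{\alpha-1}\log\lambda^{1-\alpha}=-\log\lambda$, which is exactly the claimed value and, notably, is independent of $(\alpha,z)$.

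What remains, and what I expect to be the only genuine obstacle, is the sharp overlap bound $\Lambda^2(\rho)\le\lambda$ (equality is immediate since $\tau$ attains $\langle\psi|\tau|\psi\rangle=\lambda$, equivalently the product state $|\phi\rangle^{\otimes N}$ attains $|\langle\phi^{\otimes N}|\psi\rangle|^2=\lambda$). By linearity of the trace the maximization defining $\Lambda^2(\rho)$ runs over pure product states; using the permutation symmetry of $|\psi\rangle$ one first reduces to symmetric product states $|\chi\rangle^{\otimes N}$, and then maximizing $|\langle\chi^{\otimes N}|\psi\rangle|^2$ over $|\chi\rangle=\sum_j c_j|j\rangle$ is an AM--GM (equivalently, Lagrange-multiplier) problem whose optimum is $c_j=\sqrt{q_j}$, returning precisely $\lambda=C_{N,\vec k}\prod_j(k_j/N)^{k_j}$. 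This maximal-overlap computation is the Wei--Goldbart result~\cite{wei2003geometric} (see also the $\Lambda^2$ formulas in~\cite{zhu2010additivity}), which I would simply cite. Finally, the boundary cases $|(1-\alpha)/z|=1$ and the limit $\alpha\to1$ are dispatched exactly as in the companion propositions, by taking the limits $z\to1-\alpha$ and $z\to\alpha-1$ as discussed after Corollary~\ref{commuting} and in Appendix~\ref{lower-semicontinuous}.
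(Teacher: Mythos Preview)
Your proposal is correct and follows essentially the same route as the paper: the same separable ansatz (the diagonal-phase twirl of $|\phi\rangle^{\otimes N}$ with $|\phi\rangle=\sum_j\sqrt{k_j/N}\,|j\rangle$), the same observation that it commutes with the pure Dicke state so that Corollary~\ref{commuting} applies, and the same reduction of the optimality condition to the known overlap bound $\Lambda^2(S(|N,\vec k\rangle))=C_{N,\vec k}\prod_j(k_j/N)^{k_j}$ from~\cite{wei2003geometric,zhu2010additivity}. Your write-up simply makes explicit some intermediate steps (the eigenvalue $\lambda$, the forms of $Q_\alpha$ and $\Xi_\alpha$) that the paper leaves to the reader; note also that your closing remark about the boundary cases $|(1-\alpha)/z|=1$ is unnecessary here, since Corollary~\ref{commuting} already covers all of $\mathcal{D}$ uniformly in the commuting case.
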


\begin{proof}
The separable ansatz of the problem is 
\begin{align}
\tau_{\text{GD}}=\int \limits_{[0,2\pi]^d} |\xi(\vec{\phi})\rangle \!\langle\xi(\vec{\phi})|\frac{\d \vec{\phi}}{(2\pi)^d} \; , \quad  \text{where} \quad  |\xi(\vec{\phi})\rangle = \left( \sum_{j=0}^{d-1} e^{i \phi_j}\sqrt{\frac{k_j}{N}}|j\rangle\right)^{\otimes n} \,.
\end{align}
Here, $\vec{\phi}=(\phi_0,\dots,\phi_{d-1})$ is a $d$-dimensional vector. The above state is a mixture of pure product states, and hence it is clearly separable. By explicit computation, we obtain
\begin{align}
\tau_{\text{GD}}=\sum_{\{k_j\}} C_{N,\vec{k}}\prod_{j=0}^{d-1}\left(\frac{k_j}{N}\right)^{k_j} |S(N,\vec{k})\rangle \langle S(N,\vec{k})| \,,
\end{align}
where the sum is taken over all combinations of nonnegative integer indices $k_0$ through $k_{d-1}$ such that $\sum_{j=0}^{d-1}k_j=N$.
The above ansatz commutes with the input state. Using that $\Lambda^2_{\max}(|S(N,\vec{k})\rangle) = C_{N,\vec{k}}\prod_{j=0}^{d-1}\left(\frac{k_j}{N}\right)^{k_j}$~\cite{wei2003geometric,wei2008relative,zhu2010additivity} and Corollary~\ref{commuting} we get Proposition~\ref{gD}.
\end{proof}

The generalized Dicke states commute with their optimizer, and the product $ (|S(N,\vec{k})\rangle \langle S(N,\vec{k})|)^\alpha $ $\times \tau^{-\alpha}_{\text{GD}} \propto |S(N,\vec{k})\rangle \langle S(N,\vec{k})|$ is a (not normalized) generalized Dicke state. The latter states have positive entries in a product basis~\cite{zhu2010additivity}. Therefore, from Theorem~\ref{CC} it follows that the $\mathfrak{D}_{\alpha,z}$ are additive when one state is a generalized Dicke state. 

\subsection{Maximally correlated Bell diagonal states (MCBD)}
A closed form for the maximally correlated states is known only for the entanglement monotone based on the Petz divergence~\cite{zhu2017coherence}.  We could not find a closed form for the $\alpha$-$z$ R\'enyi divergences of entanglement for maximally correlated states for all $(\alpha,z) \in \mathcal{D}$. Nevertheless, in Section~\ref{Second class}, we could show that the $\mathfrak{D}_{\alpha,z}$ are additive whenever one state is maximally correlated. In the following, to obtain a closed form, we add another constraint, i.e. we require the Bell diagonal condition. We consider the maximally correlated Bell diagonal states (MCBD)~\cite{zhu2010additivity}
\begin{equation}
\label{MCBD}
\rho_{\text{MCBD}}(\vec{p})=\sum_{k=0}^{d-1}p_k \ketbra{\psi_k}{\psi_k}  \;, \quad \text{where}\;\;  |\psi_k \rangle = \frac{1}{\sqrt{d}} \sum_{j=0}^{d-1}e^{\frac{2 \pi i k}{d}j}|jj\rangle \,.
\end{equation}
\begin{proposition}
\label{pMCBD}
Let $(\alpha,z) \in \mathcal{D}$. For a maximally correlated Bell diagonal state~\eqref{MCBD} we have 
\begin{align}
\mathfrak{D}_{\alpha,z}(\rho_{\textup{MCBD}}(\vec{p}))  = \log{d}-H_\alpha(\vec{p}) \,.
\end{align}
\end{proposition}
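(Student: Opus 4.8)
The plan is to recognize $\rho_{\mathrm{MCBD}}(\vec p)$ as a maximally correlated state supported on the ``diagonal'' subspace $\mathrm{span}\{|jj\rangle\}_{j=0}^{d-1}$, and to exploit the fact that its diagonal in this basis is uniform. First I would record the spectral data: since the vectors $|\psi_k\rangle=\frac{1}{\sqrt d}\sum_j e^{2\pi i kj/d}|jj\rangle$ are orthonormal, the eigenvalues of $\rho_{\mathrm{MCBD}}$ are exactly $\{p_k\}$ with eigenvectors $|\psi_k\rangle$. My ansatz for the optimizer is the maximally mixed state on the diagonal subspace, $\tau=\frac{1}{d}\sum_{j=0}^{d-1}|jj\rangle\!\langle jj|$. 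Because $\supp(\rho_{\mathrm{MCBD}})$ lies inside the diagonal subspace, one immediately has $\rho_{\mathrm{MCBD}}\tau=\tau\rho_{\mathrm{MCBD}}=\tfrac1d\rho_{\mathrm{MCBD}}$, so $[\rho_{\mathrm{MCBD}},\tau]=0$ and $\supp(\rho_{\mathrm{MCBD}})\subseteq\supp(\tau)$. This lets me bypass the general Theorem~\ref{main Theorem} and instead verify the much simpler commuting criterion of Corollary~\ref{commuting}.

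Next I would compute the two objects appearing in Corollary~\ref{commuting}. Writing $\Pi_{\mathrm{diag}}=\sum_j|jj\rangle\!\langle jj|$, the generalized inverse gives $\tau^{-\alpha}=d^{\alpha}\Pi_{\mathrm{diag}}$, and since $\rho_{\mathrm{MCBD}}^\alpha=\sum_k p_k^\alpha|\psi_k\rangle\!\langle\psi_k|$ is supported on the diagonal subspace, $\Xi_\alpha(\rho_{\mathrm{MCBD}},\tau)=\rho_{\mathrm{MCBD}}^\alpha\tau^{-\alpha}=d^\alpha\rho_{\mathrm{MCBD}}^\alpha$. Likewise $Q_\alpha(\rho_{\mathrm{MCBD}}\|\tau)=\Tr(\rho_{\mathrm{MCBD}}^\alpha\tau^{1-\alpha})=d^{\alpha-1}\sum_k p_k^\alpha$. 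Thus the condition $\Tr(\sigma\,\Xi_\alpha)\le Q_\alpha$ for all $\sigma\in\SEP$ reduces, after dividing by $d^\alpha$, to the single inequality $\max_{\sigma\in\SEP}\Tr(\sigma\,\rho_{\mathrm{MCBD}}^\alpha)\le\frac1d\sum_k p_k^\alpha$.

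The hard part is this last overlap estimate. The key observation is that $\rho_{\mathrm{MCBD}}^\alpha$ is again an (unnormalized) maximally correlated Bell diagonal operator, with the same eigenvectors $|\psi_k\rangle$ but weights $p_k^\alpha$, so I can evaluate $\Lambda^2(\rho_{\mathrm{MCBD}}^\alpha):=\max_{\sigma\in\SEP}\Tr(\sigma\,\rho_{\mathrm{MCBD}}^\alpha)$ directly. By linearity of the trace the maximum is attained on a pure product state $|a\rangle\otimes|b\rangle$, and then
\begin{equation}
\langle a,b|\rho_{\mathrm{MCBD}}^\alpha|a,b\rangle=\sum_k p_k^\alpha\,|\langle a,b|\psi_k\rangle|^2,\qquad \langle a,b|\psi_k\rangle=\frac{1}{\sqrt d}\sum_j e^{2\pi i kj/d}\bar a_j\bar b_j\,.
\end{equation}
Applying Cauchy--Schwarz with $\|a\|=\|b\|=1$ bounds each Fourier coefficient by $|\langle a,b|\psi_k\rangle|^2\le\frac1d\left(\sum_j|a_j||b_j|\right)^2\le\frac1d$, simultaneously for every $k$, whence $\langle a,b|\rho_{\mathrm{MCBD}}^\alpha|a,b\rangle\le\frac1d\sum_k p_k^\alpha$, with equality for $|a\rangle=|b\rangle=|0\rangle$. (Equivalently this is the known value of $\Lambda^2$ for MCBD operators from~\cite{zhu2010additivity}, the uniform diagonal making $\Lambda^2$ insensitive to the weights beyond their sum.) This confirms the hypothesis of Corollary~\ref{commuting}, so $\tau$ is an optimizer.

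Finally, since $\rho_{\mathrm{MCBD}}$ and $\tau$ commute the divergence is $z$-independent, and I would close by evaluating
\begin{equation}
\mathfrak{D}_{\alpha,z}(\rho_{\mathrm{MCBD}}(\vec p))=D_{\alpha,z}(\rho_{\mathrm{MCBD}}\|\tau)=\frac{1}{\alpha-1}\log Q_\alpha(\rho_{\mathrm{MCBD}}\|\tau)=\log d+\frac{1}{\alpha-1}\log\sum_k p_k^\alpha=\log d-H_\alpha(\vec p)\,,
\end{equation}
where I used $H_\alpha(\vec p)=\frac{1}{1-\alpha}\log\sum_k p_k^\alpha$. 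This is the claimed value, and the $\alpha\to1$ endpoint follows by continuity of the monotone.
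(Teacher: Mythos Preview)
Your proof is correct and follows essentially the same route as the paper: the ansatz $\tau=\tfrac1d\sum_j|jj\rangle\!\langle jj|$ equals the paper's $\tau_{\text{MCBD}}=\tfrac1d\sum_k|\psi_k\rangle\!\langle\psi_k|$ (both are the normalized projector onto the diagonal subspace), and both arguments reduce via Corollary~\ref{commuting} to the overlap bound $\Lambda^2(\rho_{\text{MCBD}}^\alpha)\le\tfrac1d\sum_kp_k^\alpha$. The only difference is that you supply a short self-contained Cauchy--Schwarz proof of this bound, whereas the paper cites the result $\Lambda^2(\rho_{\text{MCBD}})=1/d$ from~\cite{zhu2010additivity}.
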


\begin{proof}
We take as ansatz $\tau_{\text{MCBD}} = \sum_{k=0}^{d-1} \frac{1}{d} \ketbra{\psi_k}{ \psi_k}$ which satisfies $[\rho_{\text{MCBD}}(\vec{p}),\tau_{\text{MCBD}}]=0$. Using that $\Lambda^2(\rho_{\text{MCBD}}(\vec{p}))=1/d$~\cite{zhu2010additivity} we have that $\Tr(\rho^\alpha_{\text{MCBD}}(\vec{p}) \tau^{-\alpha}_{\text{MCBD}} \sigma) \leq \sum p_k^{\alpha} d^{\alpha-1} $ = $ \Tr(\rho^\alpha_{\text{MCBD}}(\vec{p}) \tau^{1-\alpha}_{\text{MCBD}})$ for any $\sigma \in \SEP$ and the condition of Corollary~\ref{commuting} is satisfied. Hence, our ansatz is an optimizer, and by explicit calculation, we obtain Proposition~\eqref{pMCBD}.
\end{proof}

The maximally correlated Bell diagonal states are maximally correlated states. Therefore, Theorem~\ref{TMC} implies that the $\mathfrak{D}_{\alpha,z}$ are additive when one state is a maximally correlated Bell diagonal state.

\section*{ACKNOWLEDGMENTS}
We thank Paolo Perinotti, Erkka Happasalo, and Ryuji Takagi for discussion. Part of the work was conducted while R.R was hosted by the QUIT Group at the University of Pavia and M.T. was visiting the Pauli Centre for Theoretical studies at ETH Zurich. This research is supported by the National Research Foundation, Prime Minister’s
Office, Singapore and the Ministry of Education, Singapore under the Research Centres of Excellence programme.
MT is also supported in part by NUS startup grants (R-263-000-E32-133 and R-263-000-E32-731).

\section*{Statements and Declarations}
The authors declare no competing interests. 

\section*{Data availability}
No data sets were generated during this study. 

\bibliographystyle{ultimate}
\bibliography{my}

\begin{thebibliography}{10}

\bibitem{aberg2006quantifying}
J.~Aberg.
\newblock {\em ``Quantifying superposition''}.
\newblock arXiv preprint quant-ph/0612146 , (2006).

\bibitem{audenaert2002asymptotic}
K.~Audenaert, B.~De~Moor, K.~G.~H. Vollbrecht, and R.~F. Werner.
\newblock {\em ``Asymptotic relative entropy of entanglement for orthogonally
  invariant states''}.
\newblock \href{http://dx.doi.org/10.1103/PhysRevA.66.032310}{Physical Review A
  {\bf 66}(3):\,032310} (2002).

\bibitem{audenaert2001asymptotic}
K.~Audenaert, J.~Eisert, E.~Jane, M.~B. Plenio, S.~Virmani, and B.~De~Moor.
\newblock {\em ``Asymptotic relative entropy of entanglement''}.
\newblock \href{http://dx.doi.org/10.1103/PhysRevLett.87.217902}{Phys. Rev.
  Lett. {\bf 87}(21):\,217902} (2001).

\bibitem{audenaert2015alpha}
K.~M. Audenaert and N.~Datta.
\newblock {\em ``$\alpha$-z-R{\'e}nyi relative entropies''}.
\newblock \href{http://dx.doi.org/10.1063/1.4906367}{J. Math. Phys. {\bf
  56}(2):\,022202} (2015).

\bibitem{barnum2001monotones}
H.~Barnum and N.~Linden.
\newblock {\em ``Monotones and invariants for multi-particle quantum states''}.
\newblock \href{http://dx.doi.org/10.1088/0305-4470/34/35/305}{J. Phys. A:
  Mathematical and General {\bf 34}(35):\,6787} (2001).

\bibitem{baumgratz2014quantifying}
T.~Baumgratz, M.~Cramer, and M.~B. Plenio.
\newblock {\em ``Quantifying coherence''}.
\newblock \href{http://dx.doi.org/10.1103/PhysRevLett.113.140401}{Phys. Rev.
  Lett. {\bf 113}(14):\,140401} (2014).

\bibitem{bennett1996purification}
C.~H. Bennett, G.~Brassard, S.~Popescu, B.~Schumacher, J.~A. Smolin, and W.~K.
  Wootters.
\newblock {\em ``Purification of noisy entanglement and faithful teleportation
  via noisy channels''}.
\newblock \href{http://dx.doi.org/10.1103/PhysRevLett.76.722}{Phys. Rev. Lett.
  {\bf 76}(5):\,722} (1996).

\bibitem{bhatia1997graduate}
R.~Bhatia.
\newblock {\em ``Graduate Texts in Mathematics''}, (1997).

\bibitem{Brandao2}
F.~G. Brandao and G.~Gour.
\newblock {\em ``Reversible framework for quantum resource theories''}.
\newblock \href{http://dx.doi.org/10.1103/PhysRevLett.115.070503}{Phys. Rev.
  Lett. {\bf 115}(7):\,070503} (2015).

\bibitem{Brandao3}
F.~G. Brandao, M.~Horodecki, J.~Oppenheim, J.~M. Renes, and R.~W. Spekkens.
\newblock {\em ``Resource theory of quantum states out of thermal
  equilibrium''}.
\newblock \href{http://dx.doi.org/10.1103/PhysRevLett.111.250404}{Phys. Rev.
  Lett. {\bf 111}(25):\,250404} (2013).

\bibitem{carlen2010trace}
E.~Carlen.
\newblock {\em ``Trace inequalities and quantum entropy: an introductory
  course''}.
\newblock Entropy and the quantum {\bf 529}:\,73--140, (2010).

\bibitem{chen2010computation}
L.~Chen, A.~Xu, and H.~Zhu.
\newblock {\em ``Computation of the geometric measure of entanglement for pure
  multiqubit states''}.
\newblock \href{http://dx.doi.org/10.1103/PhysRevA.82.032301}{Phys. Rev. A {\bf
  82}(3):\,032301} (2010).

\bibitem{Gour}
E.~Chitambar and G.~Gour.
\newblock {\em ``Quantum resource theories''}.
\newblock \href{http://dx.doi.org/10.1103/RevModPhys.91.025001}{Rev. Mod. Phys.
  {\bf 91}(2):\,025001} (2019).

\bibitem{datta2022catalysis}
C.~Datta, T.~V. Kondra, M.~Miller, and A.~Streltsov.
\newblock {\em ``Catalysis of entanglement and other quantum resources''}.
\newblock arXiv preprint 2207.05694 , (2022).

\bibitem{Datta_rob2}
N.~Datta.
\newblock {\em ``Min-and max-relative entropies and a new entanglement
  monotone''}.
\newblock \href{http://dx.doi.org/10.1109/TIT.2009.2018325}{IEEE Trans. Inf.
  Theory {\bf 55}(6):\,2816--2826} (2009).

\bibitem{dicke1954coherence}
R.~H. Dicke.
\newblock {\em ``Coherence in spontaneous radiation processes''}.
\newblock \href{http://dx.doi.org/10.1103/PhysRev.93.99}{Physical review {\bf
  93}(1):\,99} (1954).

\bibitem{faist2015gibbs}
P.~Faist, J.~Oppenheim, and R.~Renner.
\newblock {\em ``Gibbs-preserving maps outperform thermal operations in the
  quantum regime''}.
\newblock \href{http://dx.doi.org/10.1088/1367-2630/17/4/043003}{New. J. Phys.
  {\bf 17}(4):\,043003} (2015).

\bibitem{fawzi2018efficient}
H.~Fawzi and O.~Fawzi.
\newblock {\em ``Efficient optimization of the quantum relative entropy''}.
\newblock \href{http://dx.doi.org/10.1088/1751-8121/aab285}{J. Phys. A:
  Mathematical and Theoretical {\bf 51}(15):\,154003} (2018).

\bibitem{friedland2011explicit}
S.~Friedland and G.~Gour.
\newblock {\em ``An explicit expression for the relative entropy of
  entanglement in all dimensions''}.
\newblock \href{http://dx.doi.org/10.1063/1.3591132}{Journal of mathematical
  physics {\bf 52}(5):\,052201} (2011).

\bibitem{girard2014convex}
M.~W. Girard, G.~Gour, and S.~Friedland.
\newblock {\em ``On convex optimization problems in quantum information
  theory''}.
\newblock Journal of Physics A: Mathematical and Theoretical {\bf
  47}(50):\,505302, (2014).

\bibitem{gour2020optimal}
G.~Gour and M.~Tomamichel.
\newblock {\em ``Optimal extensions of resource measures and their
  applications''}.
\newblock \href{http://dx.doi.org/10.1103/PhysRevA.102.062401}{Phys. Rev. A
  {\bf 102}(6):\,062401} (2020).

\bibitem{harrow2003robustness}
A.~W. Harrow and M.~A. Nielsen.
\newblock {\em ``Robustness of quantum gates in the presence of noise''}.
\newblock \href{http://dx.doi.org/10.1103/PhysRevA.68.012308}{Phys. Rev. A {\bf
  68}(1):\,012308} (2003).

\bibitem{hayashi2008entanglement}
M.~Hayashi, D.~Markham, M.~Murao, M.~Owari, and S.~Virmani.
\newblock {\em ``Entanglement of multiparty-stabilizer, symmetric, and
  antisymmetric states''}.
\newblock \href{http://dx.doi.org/0.1103/PhysRevA.77.012104}{Phys. Rev. A {\bf
  77}(1):\,012104} (2008).

\bibitem{horodecki2001entanglement}
M.~Horodecki.
\newblock {\em ``Entanglement measures.''}.
\newblock Quantum Inf. Comput. {\bf 1}(1):\,3--26, (2001).

\bibitem{horodecki1999reduction}
M.~Horodecki and P.~Horodecki.
\newblock {\em ``Reduction criterion of separability and limits for a class of
  distillation protocols''}.
\newblock \href{http://dx.doi.org/10.1103/PhysRevA.59.4206}{Phys. Rev. A {\bf
  59}(6):\,4206} (1999).

\bibitem{Oppenheim}
M.~Horodecki and J.~Oppenheim.
\newblock {\em ``Fundamental limitations for quantum and nanoscale
  thermodynamics''}.
\newblock \href{http://dx.doi.org/10.1038/ncomms3059}{Nat. Commun. {\bf
  4}(1):\,1--6} (2013).

\bibitem{horodecki2013quantumness}
M.~Horodecki and J.~Oppenheim.
\newblock {\em ``(Quantumness in the context of) resource theories''}.
\newblock \href{http://dx.doi.org/10.1142/S0217979213450197}{International
  Journal of Modern Physics B {\bf 27}(01n03):\,1345019} (2013).

\bibitem{horodecki1996information}
R.~Horodecki et~al.
\newblock {\em ``Information-theoretic aspects of inseparability of mixed
  states''}.
\newblock \href{http://dx.doi.org/10.1103/PhysRevA.54.1838}{Phys. Rev. A {\bf
  54}(3):\,1838} (1996).

\bibitem{horodecki2009quantum}
R.~Horodecki, P.~Horodecki, M.~Horodecki, and K.~Horodecki.
\newblock {\em ``Quantum entanglement''}.
\newblock \href{http://dx.doi.org/10.1103/RevModPhys.81.865}{Rev. Mod. Phys.
  {\bf 81}(2):\,865} (2009).

\bibitem{hubener2009geometric}
R.~H{\"u}bener, M.~Kleinmann, T.-C. Wei, C.~Gonz{\'a}lez-Guill{\'e}n, and
  O.~G{\"u}hne.
\newblock {\em ``Geometric measure of entanglement for symmetric states''}.
\newblock \href{http://dx.doi.org/10.1103/PhysRevA.80.032324}{Phys. Rev. A {\bf
  80}(3):\,032324} (2009).

\bibitem{Klimesh}
M.~Klimesh.
\newblock {\em ``Inequalities that collectively completely characterize the
  catalytic majorization relation''}.
\newblock arXiv:0709.3680 , (2007).

\bibitem{Kondra}
T.~V. Kondra, C.~Datta, and A.~Streltsov.
\newblock {\em ``Catalytic entanglement''}.
\newblock
  \href{http://dx.doi.org/10.1103/PhysRevLett.127.150503}{arXiv:2102.11136 }
  (2021).

\bibitem{lami2021attainability}
L.~Lami and M.~E. Shirokov.
\newblock {\em ``Attainability and lower semi-continuity of the relative
  entropy of entanglement, and variations on the theme''}.
\newblock arXiv:2105.08091 , (2021).

\bibitem{lin2015investigating}
S.~M. Lin and M.~Tomamichel.
\newblock {\em ``Investigating properties of a family of quantum R{\'e}nyi
  divergences''}.
\newblock \href{http://dx.doi.org/10.1007/s11128-015-0935-y}{Quantum
  Information Processing {\bf 14}(4):\,1501--1512} (2015).

\bibitem{Lipka}
P.~Lipka-Bartosik and P.~Skrzypczyk.
\newblock {\em ``Catalytic quantum teleportation''}.
\newblock
  \href{http://dx.doi.org/10.1103/PhysRevLett.127.080502}{arXiv:2102.11846 }
  (2021).

\bibitem{miranowicz2008closed}
A.~Miranowicz and S.~Ishizaka.
\newblock {\em ``Closed formula for the relative entropy of entanglement''}.
\newblock \href{http://dx.doi.org/10.1063/1.3591132}{Physical Review A {\bf
  78}(3):\,032310} (2008).

\bibitem{mosonyi2022some}
M.~Mosonyi.
\newblock {\em ``Some continuity properties of quantum R\'enyi divergences''}.
\newblock arXiv preprint arXiv:2209.00646 , (2022).

\bibitem{mosonyi2011quantum}
M.~Mosonyi and F.~Hiai.
\newblock {\em ``On the quantum R{\'e}nyi relative entropies and related
  capacity formulas''}.
\newblock \href{http://dx.doi.org/10.1109/TIT.2011.2110050}{IEEE Trans. Inf.
  Theory {\bf 57}(4):\,2474--2487} (2011).

\bibitem{Muller}
M.~M{\"u}ller-Lennert, F.~Dupuis, O.~Szehr, S.~Fehr, and M.~Tomamichel.
\newblock {\em ``On quantum R{\'e}nyi entropies: A new generalization and some
  properties''}.
\newblock \href{http://dx.doi.org/10.1063/1.4838856}{J. Math. Phys. {\bf
  54}(12):\,122203} (2013).

\bibitem{peres1996separability}
A.~Peres.
\newblock {\em ``Separability criterion for density matrices''}.
\newblock \href{http://dx.doi.org/10.1103/PhysRevLett.77.1413}{Phys. Rev. Lett.
  {\bf 77}(8):\,1413} (1996).

\bibitem{petz1986quasi}
D.~Petz.
\newblock {\em ``Quasi-entropies for finite quantum systems''}.
\newblock \href{http://dx.doi.org/10.1016/0034-4877(86)90067-4}{Rep. Math.
  Phys. {\bf 23}(1):\,57--65} (1986).

\bibitem{Plenio}
M.~B. Plenio and S.~S. Virmani.
\newblock {\em ``Entanglement measures''}.
\newblock \href{http://dx.doi.org/10.1002/9783527805785.ch11}{Quantum
  Information: From Foundations to Quantum Technology Applications pages
  215--230} (2016).

\bibitem{rains1998improved}
E.~Rains.
\newblock {\em ``An improved bound on distillable entanglement''}.
\newblock arXiv preprint quant-ph/9809082 , (1998).

\bibitem{rains1999bound}
E.~M. Rains.
\newblock {\em ``Bound on distillable entanglement''}.
\newblock \href{http://dx.doi.org/10.1103/PhysRevA.60.179}{Phys. Rev. A {\bf
  60}(1):\,179} (1999).

\bibitem{Regula}
B.~Regula, K.~Bu, R.~Takagi, and Z.-W. Liu.
\newblock {\em ``Benchmarking one-shot distillation in general quantum resource
  theories''}.
\newblock \href{http://dx.doi.org/10.1103/PhysRevA.101.062315}{Phys. Rev. A
  {\bf 101}(6):\,062315} (2020).

\bibitem{Renner}
R.~Renner.
\newblock {\em ``Security of quantum key distribution''}.
\newblock International Journal of Quantum Information {\bf 6}(01):\,1--127,
  (2008).

\bibitem{rubboli2021fundamental}
R.~Rubboli and M.~Tomamichel.
\newblock {\em ``Fundamental Limits on Correlated Catalytic State
  Transformations''}.
\newblock
  \href{http://dx.doi.org/0.1103/PhysRevLett.129.120506}{arXiv:2111.13356 }
  (2021).

\bibitem{Steiner}
M.~Steiner.
\newblock {\em ``Generalized robustness of entanglement''}.
\newblock \href{http://dx.doi.org/10.1103/PhysRevA.67.054305}{Phys. Rev. A {\bf
  67}(5):\,054305} (2003).

\bibitem{streltsov2010linking}
A.~Streltsov, H.~Kampermann, and D.~Bru{\ss}.
\newblock {\em ``Linking a distance measure of entanglement to its convex
  roof''}.
\newblock \href{http://dx.doi.org/10.1088/1367-2630/12/12/123004}{New. J. Phys.
  {\bf 12}(12):\,123004} (2010).

\bibitem{terhal2000entanglement}
B.~M. Terhal and K.~G.~H. Vollbrecht.
\newblock {\em ``Entanglement of formation for isotropic states''}.
\newblock \href{http://dx.doi.org/10.1103/PhysRevLett.85.2625}{Phys. Rev. Lett.
  {\bf 85}(12):\,2625} (2000).

\bibitem{Tomamichel}
M.~Tomamichel.
\newblock {\em Quantum information processing with finite resources:
  Mathematical foundations}.
\newblock volume~5, Springer (2015).

\bibitem{vedral1998entanglement}
V.~Vedral and M.~B. Plenio.
\newblock {\em ``Entanglement measures and purification procedures''}.
\newblock \href{http://dx.doi.org/10.1103/PhysRevA.57.1619}{Phys. Rev. A {\bf
  57}(3):\,1619} (1998).

\bibitem{vedral1997quantifying}
V.~Vedral, M.~B. Plenio, M.~A. Rippin, and P.~L. Knight.
\newblock {\em ``Quantifying entanglement''}.
\newblock \href{http://dx.doi.org/0.1103/PhysRevLett.78.2275}{Phys. Rev. Lett.
  {\bf 78}(12):\,2275} (1997).

\bibitem{vidal2000entanglement}
G.~Vidal.
\newblock {\em ``Entanglement monotones''}.
\newblock \href{http://dx.doi.org/10.1080/09500340008244048}{J. Mod. Opt. {\bf
  47}(2-3):\,355--376} (2000).

\bibitem{vidal1999robustness}
G.~Vidal and R.~Tarrach.
\newblock {\em ``Robustness of entanglement''}.
\newblock Phys. Rev. A {\bf 59}(1):\,141, (1999).

\bibitem{vollbrecht2001entanglement}
K.~G.~H. Vollbrecht and R.~F. Werner.
\newblock {\em ``Entanglement measures under symmetry''}.
\newblock \href{http://dx.doi.org/10.1103/PhysRevA.64.062307}{Phys. Rev. A {\bf
  64}(6):\,062307} (2001).

\bibitem{wei2008relative}
T.-C. Wei.
\newblock {\em ``Relative entropy of entanglement for multipartite mixed
  states: Permutation-invariant states and D{\"u}r states''}.
\newblock \href{http://dx.doi.org/10.1103/PhysRevA.78.012327}{Phys. Rev. A {\bf
  78}(1):\,012327} (2008).

\bibitem{wei2004connections}
T.-C. Wei, M.~Ericsson, P.~M. Goldbart, and W.~J. Munro.
\newblock {\em ``Connections between relative entropy of entanglement and
  geometric measure of entanglement''}.
\newblock arXiv preprint quant-ph/0405002 , (2004).

\bibitem{wei2003geometric}
T.-C. Wei and P.~M. Goldbart.
\newblock {\em ``Geometric measure of entanglement and applications to
  bipartite and multipartite quantum states''}.
\newblock \href{http://dx.doi.org/10.1103/PhysRevA.68.042307}{Phys. Rev. A {\bf
  68}(4):\,042307} (2003).

\bibitem{werner1989quantum}
R.~F. Werner.
\newblock {\em ``Quantum states with Einstein-Podolsky-Rosen correlations
  admitting a hidden-variable model''}.
\newblock \href{http://dx.doi.org/10.1103/PhysRevA.40.4277}{Phys. Rev. A {\bf
  40}(8):\,4277} (1989).

\bibitem{Wilde3}
M.~M. Wilde, A.~Winter, and D.~Yang.
\newblock {\em ``Strong converse for the classical capacity of
  entanglement-breaking and Hadamard channels via a sandwiched R{\'e}nyi
  relative entropy''}.
\newblock \href{http://dx.doi.org/10.1007/s00220-014-2122-x}{Cummun. Math.
  Phys. {\bf 331}(2):\,593--622} (2014).

\bibitem{Winter}
A.~Winter and D.~Yang.
\newblock {\em ``Operational resource theory of coherence''}.
\newblock \href{http://dx.doi.org/10.1103/PhysRevLett.116.120404}{Phys. Rev.
  Lett. {\bf 116}(12):\,120404} (2016).

\bibitem{zhang2020wigner}
H.~Zhang.
\newblock {\em ``From Wigner-Yanase-Dyson conjecture to Carlen-Frank-Lieb
  conjecture''}.
\newblock \href{http://dx.doi.org/10.1016/j.aim.2020.107053}{Adv. Math. {\bf
  365}:\,107053} (2020).

\bibitem{zhu2010additivity}
H.~Zhu, L.~Chen, and M.~Hayashi.
\newblock {\em ``Additivity and non-additivity of multipartite entanglement
  measures''}.
\newblock \href{http://dx.doi.org/0.1088/1367-2630/12/8/083002}{New. J. Phys.
  {\bf 12}(8):\,083002} (2010).

\bibitem{zhu2017coherence}
H.~Zhu, M.~Hayashi, and L.~Chen.
\newblock {\em ``Coherence and entanglement measures based on R{\'e}nyi
  relative entropies''}.
\newblock \href{http://dx.doi.org/10.1088/1751-8121/aa8ffc}{J. Phys. A:
  Mathematical and Theoretical {\bf 50}(47):\,475303} (2017).

\end{thebibliography}

\newpage
 
\appendix
\section*{Appendices}
\addcontentsline{toc}{section}{Appendices}
\renewcommand{\thesubsection}{\Alph{subsection}}

\subsection{Lower semicontinuity and its consequences}
\label{lower-semicontinuous}

In this appendix, we show that the $\alpha$-$z$ R\'enyi entropy of entanglement are lower semicontinuous and hence the infimum of $\mathfrak{D}_{\alpha,z}$ is always achieved in the set of separable states. Moreover, we show that the monotones $\mathfrak{D}$ and $\mathfrak{D}_{\max}$ can be obtained by taking the limits $\alpha \rightarrow 1$ ($z \neq 0$) and $\alpha=z \rightarrow \infty$ of $\mathfrak{D}_{\alpha,z}$, respectively. More in general, we prove that the latter property holds also when the set of separable states is replaced by any set of free states. All our considerations are in finite dimensions.  We follow similar arguments to the ones given in~\cite{mosonyi2011quantum} and~\cite{lami2021attainability}. 

It is a straightforward fact that the $D_{\alpha,z}$ are monotonically decreasing in the second argument in the range where they satisfy the data-processing inequality.
\begin{lemma}
\label{decreasing}
Let $(\alpha,z) \in \mathcal{D}$, $\rho\in \qstate $, and $\sigma,\sigma' \in \mathcal{P}(A)$. Then if $\sigma \leq \sigma' $ we have
\begin{align}
&D_{\alpha,z}(\rho \| \sigma) \geq D_{\alpha,z}(\rho \| \sigma')  \,.
\end{align}
\end{lemma}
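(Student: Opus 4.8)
The plan is to read the monotonicity off directly from the data-processing inequality (DPI), which by \cite{zhang2020wigner} holds throughout $\mathcal{D}$, by means of a standard block-diagonal embedding. First I would dispose of the trivial case: if $D_{\alpha,z}(\rho\|\sigma)=+\infty$ the claimed inequality is automatic, so I may assume $D_{\alpha,z}(\rho\|\sigma)<\infty$, which forces the relevant support condition relating $\rho$ and $\sigma$ (for $\alpha<1$ this is $\rho\not\perp\sigma$, for $\alpha>1$ it is $\supp\rho\subseteq\supp\sigma$). Since $\sigma\leq\sigma'$, the same support condition then holds for $\sigma'$ as well, so no finiteness issue is introduced on the right-hand side.

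Next I would enlarge the Hilbert space by a qubit ancilla and set $\Delta:=\sigma'-\sigma\geq 0$. Define the block-diagonal operators
\[
\tilde\rho := \rho\otimes\ketbra{0}{0}, \qquad \tilde\sigma := \sigma\otimes\ketbra{0}{0}+\Delta\otimes\ketbra{1}{1}.
\]
The partial trace $\Tr_2$ over the ancilla is a quantum channel with $\Tr_2(\tilde\rho)=\rho$ and $\Tr_2(\tilde\sigma)=\sigma+\Delta=\sigma'$, so the DPI yields $D_{\alpha,z}(\rho\|\sigma')=D_{\alpha,z}\!\left(\Tr_2\tilde\rho\,\middle\|\,\Tr_2\tilde\sigma\right)\leq D_{\alpha,z}(\tilde\rho\|\tilde\sigma)$. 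The remaining key step is to show that the embedding does not change the value, i.e.\ $D_{\alpha,z}(\tilde\rho\|\tilde\sigma)=D_{\alpha,z}(\rho\|\sigma)$. Because $\tilde\sigma$ is block diagonal, $\tilde\sigma^{(1-\alpha)/z}=\sigma^{(1-\alpha)/z}\otimes\ketbra{0}{0}+\Delta^{(1-\alpha)/z}\otimes\ketbra{1}{1}$, while $\tilde\rho^{\alpha/2z}=\rho^{\alpha/2z}\otimes\ketbra{0}{0}$ is supported entirely in the $\ket{0}$ block. Sandwiching therefore annihilates the $\ket{1}$ block, giving $\tilde\rho^{\alpha/2z}\tilde\sigma^{(1-\alpha)/z}\tilde\rho^{\alpha/2z}=\big(\rho^{\alpha/2z}\sigma^{(1-\alpha)/z}\rho^{\alpha/2z}\big)\otimes\ketbra{0}{0}$, whose $z$-th power is $\big(\rho^{\alpha/2z}\sigma^{(1-\alpha)/z}\rho^{\alpha/2z}\big)^{z}\otimes\ketbra{0}{0}$. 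Taking the trace gives $Q_{\alpha,z}(\tilde\rho\|\tilde\sigma)=Q_{\alpha,z}(\rho\|\sigma)$ and hence $D_{\alpha,z}(\tilde\rho\|\tilde\sigma)=D_{\alpha,z}(\rho\|\sigma)$. Combining with the DPI bound produces $D_{\alpha,z}(\rho\|\sigma')\leq D_{\alpha,z}(\rho\|\sigma)$, which is the assertion.

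The main obstacle — really the only point requiring care — is to confirm that the DPI is applied legitimately when the second argument is a general positive operator rather than a normalized state: both $\tilde\sigma$ and $\sigma'$ lie in $\mathcal{P}(A)$ but need not have unit trace. This is fine because $Q_{\alpha,z}(\rho\|\cdot)$ is homogeneous in its second argument and $\Tr_2$ is trace preserving, so the DPI of \cite{zhang2020wigner} extends verbatim to unnormalized $\sigma$; one should also check that the support hypotheses guaranteeing finiteness are correctly inherited by the block-diagonal extension, which follows since the $\ket{1}$ block of $\tilde\rho$ vanishes. Everything else is the routine block-diagonal computation above, which is why the statement is labelled straightforward.
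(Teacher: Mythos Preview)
Your argument is correct, but it takes a different route from the paper. The paper proceeds by direct computation: it first observes that for $(\alpha,z)\in\mathcal{D}$ one always has $(1-\alpha)/z\in[-1,1]\setminus\{0\}$, so the map $t\mapsto t^{(1-\alpha)/z}$ is operator monotone (for $\alpha<1$) or operator antimonotone (for $\alpha>1$). From $\sigma\leq\sigma'$ it then gets the corresponding ordering of $\sigma^{(1-\alpha)/z}$ and $\sigma'^{(1-\alpha)/z}$, sandwiches by $\rho^{\alpha/2z}$, and uses that $M\mapsto\Tr(M^z)$ is monotone in the Loewner order for $z>0$. No appeal to the DPI is made.

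Your block-diagonal embedding plus partial trace reduces the statement to the DPI itself, which is more conceptual: it shows that dominated monotonicity in the second argument is a formal consequence of data processing for any divergence that behaves well on direct sums and scales correctly. The paper's approach is more elementary in that it avoids invoking the DPI (a nontrivial theorem) and instead exhibits explicitly where the constraint $(\alpha,z)\in\mathcal{D}$ enters, namely through $|(1-\alpha)/z|\leq 1$. Your approach, on the other hand, makes clear that the lemma holds for \emph{any} quantum divergence satisfying DPI, not just $D_{\alpha,z}$, and the homogeneity remark you make to extend DPI to unnormalized second arguments is the right way to handle that point.
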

\begin{proof}
We split the proof for the region $(\alpha,z) \in \mathcal{D}$ in the range $\alpha <1$, $\alpha>1$ and $\alpha=1$. 

Let $\alpha <1$. We then have 
\begin{align}
 \sigma \leq \sigma'
\implies  \sigma^\frac{1-\alpha}{z} \leq {\sigma'}^\frac{1-\alpha}{z} 
\implies  \Tr[(\rho^\frac{\alpha}{2z} \sigma^\frac{1-\alpha}{z}  \rho^\frac{\alpha}{2z})^z] \leq \Tr[(\rho^\frac{\alpha}{2z} {\sigma' }^\frac{1-\alpha}{z}  \rho^\frac{\alpha}{2z})^z]  
\label{Impl2} \,.
\end{align}
In the first implication, we used that $(1-\alpha)/z \in (0,1]$ and that the power is operator monotone in the range $(0,1]$. In the last implication, we used that the trace functional $M \rightarrow \Tr(f(M))$ inherits the monotonicity from $f$~(see e.g.~\cite{carlen2010trace}). Therefore, in this range, we get $D_{\alpha,z}(\rho \| \sigma) \geq D_{\alpha,z}(\rho \| \sigma')$. 

The proof for $\alpha >1$ follows in the same way by noticing that $(1-\alpha)/z \in [-1,0)$ and that the power in this range is operator antimonotone. 

The property for $\alpha=1$ follows by taking the limit $\alpha\rightarrow 1$. 
\end{proof}

\begin{lemma}
The $\alpha$-$z$ R\'enyi relative entropy
$
(\rho, \sigma) \rightarrow D_{\alpha,z}(\rho \| \sigma)
$
is lower semicontinuous for all $(\alpha,z)\in \mathcal{D}$. 
\end{lemma}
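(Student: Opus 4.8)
The plan is to realize $D_{\alpha,z}(\rho\|\sigma)$ as a pointwise supremum of \emph{jointly continuous} functions of $(\rho,\sigma)$; since the supremum of an arbitrary family of continuous functions is lower semicontinuous, this immediately yields the claim. For $\epsilon>0$ I would regularize the second argument by $\sigma+\epsilon\mathds{1}$ and set
\[
D^\epsilon_{\alpha,z}(\rho\|\sigma):=D_{\alpha,z}(\rho\|\sigma+\epsilon\mathds{1})=\frac{1}{\alpha-1}\log\Tr\big(\rho^\frac{\alpha}{2z}(\sigma+\epsilon\mathds{1})^\frac{1-\alpha}{z}\rho^\frac{\alpha}{2z}\big)^z \,.
\]
The target identity is $D_{\alpha,z}(\rho\|\sigma)=\sup_{\epsilon>0}D^\epsilon_{\alpha,z}(\rho\|\sigma)$ for every $(\alpha,z)\in\mathcal{D}$. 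The first step is to check that each $D^\epsilon_{\alpha,z}$ is jointly continuous: because $\sigma+\epsilon\mathds{1}\ge\epsilon\mathds{1}$ its spectrum stays in $[\epsilon,\infty)$, where the scalar power $t\mapsto t^{(1-\alpha)/z}$ is continuous even though the exponent may be negative, so $\sigma\mapsto(\sigma+\epsilon\mathds{1})^{(1-\alpha)/z}$ is continuous; likewise $\rho\mapsto\rho^{\alpha/2z}$ is continuous since $\alpha/2z>0$, the map $M\mapsto\Tr(M^z)$ is continuous on positive operators since $z>0$, and its value is strictly positive for any state because $(\sigma+\epsilon\mathds{1})^{(1-\alpha)/z}$ has full support, so the logarithm is finite. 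The regularization is exactly what removes the generalized inverse and makes the composition continuous.

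The second step is monotonicity in $\epsilon$. For $0<\epsilon_1<\epsilon_2$ we have $\sigma+\epsilon_1\mathds{1}\le\sigma+\epsilon_2\mathds{1}$, so Lemma~\ref{decreasing} gives $D^{\epsilon_1}_{\alpha,z}\ge D^{\epsilon_2}_{\alpha,z}$, while $\sigma\le\sigma+\epsilon\mathds{1}$ gives $D_{\alpha,z}(\rho\|\sigma)\ge D^\epsilon_{\alpha,z}(\rho\|\sigma)$. Hence $\epsilon\mapsto D^\epsilon_{\alpha,z}$ increases as $\epsilon\downarrow0$ and is bounded above by $D_{\alpha,z}(\rho\|\sigma)$, so the supremum equals the limit and satisfies $\sup_{\epsilon>0}D^\epsilon_{\alpha,z}\le D_{\alpha,z}(\rho\|\sigma)$. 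It remains to prove the reverse inequality, i.e. $\lim_{\epsilon\downarrow0}D^\epsilon_{\alpha,z}=D_{\alpha,z}(\rho\|\sigma)$.

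This last step is the crux and I expect it to be the main obstacle, because one must track how the regularization interacts with the support conditions hard-wired into definition~\eqref{definition}. I would split along the kernel of $\sigma$ using $(\sigma+\epsilon\mathds{1})^\frac{1-\alpha}{z}=\Pi(\sigma)(\sigma+\epsilon\mathds{1})^\frac{1-\alpha}{z}\Pi(\sigma)+\epsilon^\frac{1-\alpha}{z}(\mathds{1}-\Pi(\sigma))$. When $D_{\alpha,z}(\rho\|\sigma)$ is finite (that is, $\rho\not\perp\sigma$ for $\alpha<1$, or $\rho\ll\sigma$ for $\alpha>1$), the term carrying $\mathds{1}-\Pi(\sigma)$ is annihilated by $\rho^{\alpha/2z}$ inside the trace, the remaining term converges to the generalized-inverse expression $\rho^{\alpha/2z}\sigma^{(1-\alpha)/z}\rho^{\alpha/2z}$ (eigenvalues on $\supp(\sigma)$ are bounded below away from zero), and thus $D^\epsilon_{\alpha,z}\to D_{\alpha,z}(\rho\|\sigma)$. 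When $D_{\alpha,z}(\rho\|\sigma)=+\infty$ one checks the divergence directly: for $\alpha<1$ with $\rho\perp\sigma$ the trace argument tends to $0$ while $\tfrac{1}{\alpha-1}<0$, forcing $D^\epsilon_{\alpha,z}\to+\infty$; for $\alpha>1$ with $\rho\not\ll\sigma$ the surviving piece $\epsilon^{(1-\alpha)/z}\rho^{\alpha/2z}(\mathds{1}-\Pi(\sigma))\rho^{\alpha/2z}$ has a nonzero coefficient and blows up as $\epsilon\downarrow0$ because $(1-\alpha)/z<0$, so $\Tr(\cdots)^z\to+\infty$ and, since $\tfrac{1}{\alpha-1}>0$, again $D^\epsilon_{\alpha,z}\to+\infty$. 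Confirming that $\epsilon^{(1-\alpha)/z}$ drives the divergence in precisely the cases the definition declares infinite is the delicate point. The line $\alpha=1$ (Umegaki) is handled identically with $\log(\sigma+\epsilon\mathds{1})$ replacing the power, or alternatively by invoking the limit $\alpha\to1$ together with lower semicontinuity on either side; in all cases the monotone family $D^\epsilon_{\alpha,z}$ is continuous, so its supremum $D_{\alpha,z}$ is lower semicontinuous.
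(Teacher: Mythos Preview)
Your approach is the same as the paper's: regularize $\sigma\mapsto\sigma+\epsilon\mathds{1}$, invoke Lemma~\ref{decreasing} to write $D_{\alpha,z}(\rho\|\sigma)=\sup_{\epsilon>0}D_{\alpha,z}(\rho\|\sigma+\epsilon\mathds{1})$, and note that each term of the supremum is jointly continuous since the second argument has full support. You in fact supply more detail than the paper, which simply asserts the supremum identity without verifying the limit $\epsilon\downarrow0$. One minor slip in your case analysis: for $\alpha<1$ with $\rho\not\perp\sigma$ but $\rho\not\ll\sigma$, the kernel piece $\epsilon^{(1-\alpha)/z}\rho^{\alpha/2z}(\mathds{1}-\Pi(\sigma))\rho^{\alpha/2z}$ is \emph{not} annihilated by $\rho^{\alpha/2z}$ as you claim---it is nonzero, but it still vanishes in the limit because $(1-\alpha)/z>0$ forces $\epsilon^{(1-\alpha)/z}\to0$, so the conclusion is unaffected.
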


\begin{proof}
For any $\varepsilon>0$, we have the operator inequality $\sigma \leq \sigma + \varepsilon \mathds{1}$ where $\varepsilon>0$.
From the previous lemma we have $ D_{\alpha,z}(\rho \| \sigma) \geq D_{\alpha,z}(\rho \| \sigma+\varepsilon \mathds{1})$ and hence we can write
\begin{equation}
\label{sup}
D_{\alpha,z}(\rho \| \sigma) = \sup \limits_{\varepsilon>0} D_{\alpha,z}(\rho \| \sigma+\varepsilon \mathds{1}) \,.
\end{equation} 
For any fixed $\varepsilon>0$, the functions $(\rho, \sigma) \rightarrow D_{\alpha,z}(\rho \| \sigma+\varepsilon \mathds{1})$ are continuous since the second argument has full support. Since the pointwise supremum of continuous functions is lower semicontinuous, the function  $(\rho, \sigma) \rightarrow D_{\alpha,z}(\rho \| \sigma)$ is lower semicontinuous. 
\end{proof}
A direct consequence of the above lemma is that $\inf_{\sigma \in \mathcal{F}}D_{\alpha,z}(\rho \| \sigma) = \min_{\sigma \in \mathcal{F}}D_{\alpha,z}(\rho \| \sigma)$, i.e. the infimum is always achieved. Indeed, since the set of quantum states is compact (with respect, for example, to the trace norm topology), the set $\mathcal{F}$ is also compact being a closed subset of a compact set. The statement then follows from the fact that a lower semicontinuous function on a compact set has a minimum on the compact set.

Moreover, lower semicontinuity implies that
\begin{equation}
\lim \limits_{\alpha \rightarrow \infty}\min \limits_{\sigma \in \mathcal{F}}D_{\alpha,\alpha}(\rho \| \sigma) = \sup \limits_{\alpha > 1} \inf \limits_{\sigma \in \mathcal{F}}D_{\alpha,\alpha}(\rho \| \sigma) = \inf \limits_{\sigma \in \mathcal{F}}   \sup \limits_{\alpha > 1}D_{\alpha,\alpha}(\rho \| \sigma) = \min \limits_{\sigma \in \mathcal{F}}  \lim \limits_{\alpha \rightarrow \infty}D_{\alpha,\alpha}(\rho \| \sigma) \,.
\end{equation}
i.e. we can exchange the limit with the minimum. 
The first equality follows from the fact that the sandwiched R\'enyi divergence is monotone in $\alpha$ and hence we can replace the limit with the supremum over $\alpha>1$. The second equality follows from the minimax theorem in~\cite[Lemma II.1]{mosonyi2022some} since, as we discussed above, the $D_{\alpha,\alpha}$ is lower semicontinuous in the second argument. Moreover, by the minimax theorem, in the last equality, the infimum over the free states can be replaced with the minimum.
Similarly, we can replace the following limits with the infimum or supremum to obtain 
\begin{enumerate}
\item $\lim \limits_{z \rightarrow \infty}\min \limits_{\sigma \in \mathcal{F}}D_{\alpha,z}(\rho \| \sigma) = \min \limits_{\sigma \in \mathcal{F}}  \lim \limits_{z \rightarrow \infty}D_{\alpha,z}(\rho \| \sigma)\; $ for $\; \alpha<1$.
\item $\lim \limits_{z \rightarrow \alpha-1}\min \limits_{\sigma \in \mathcal{F}}D_{\alpha,z}(\rho \| \sigma) = \min \limits_{\sigma \in \mathcal{F}}  \lim \limits_{z \rightarrow \alpha-1}D_{\alpha,z}(\rho \| \sigma)\; $ for $\; \alpha > 1$.
\item $\lim \limits_{z \rightarrow 1-\alpha}\min \limits_{\sigma \in \mathcal{F}}D_{\alpha,z}(\rho \| \sigma) = \min \limits_{\sigma \in \mathcal{F}}  \lim \limits_{z \rightarrow 1-\alpha}D_{\alpha,z}(\rho \| \sigma)\; $ for $\; \alpha < 1$.
\item $\lim \limits_{\alpha \rightarrow 0}\min \limits_{\sigma \in \mathcal{F}}D_{\alpha,1}(\rho \| \sigma) = \min \limits_{\sigma \in \mathcal{F}}  \lim \limits_{\alpha \rightarrow 0}D_{\alpha,1}(\rho \| \sigma)$. 
\item $\lim \limits_{\alpha \rightarrow 1}\min \limits_{\sigma \in \mathcal{F}}D_{\alpha,1}(\rho \| \sigma) = \min \limits_{\sigma \in \mathcal{F}}  \lim \limits_{\alpha \rightarrow 1}D_{\alpha,1}(\rho \| \sigma) $ .
\end{enumerate}
The first two equalities follow from the minimax theorem~\cite[Lemma II.1]{mosonyi2022some} and the fact that the function $z \mapsto D_{\alpha,z}$ is monotonically increasing for $\alpha<1$ and monotonically decreasing for $\alpha>1$~\cite{lin2015investigating}. The third equality follows from the monotonicity just mentioned. The last two equalities follow from the monotonicity in $\alpha$ of the Petz R\'enyi divergence~\cite{Tomamichel}.

\subsection{Relationship between conditional entropies, entanglement monotones, and coherence monotones based on $\alpha$-$z$ R\'enyi divergences}
\label{relations}
In this appendix, we derive some relationships between the conditional entropy and the entanglement monotone on the $\alpha$-$z$ R\'enyi divergence. Moreover, we derive some constraints on the structure of an optimizer of $\mathfrak{D}_{\alpha,z}$ for maximally correlated states, which constitutes a fundamental ingredient to establish the additivity result of Theorem~\ref{TMC}. 

Let $\rho_{AB} \in \mathcal{S}(A B)$ a bipartite state. We define the $\alpha$-$z$ \textit{R\'enyi conditional entropy} as
\begin{equation}
H^\uparrow_{\alpha,z}(A|B)_{\rho_{AB}} := -\min \limits_{\sigma_B \in \mathcal{S}(B)}D_{\alpha,z}(\rho_{AB}\| I_{A} \otimes \sigma_{B}) \,.
\end{equation}
We often omit the subscript $AB$ when it is clear from the context.

We now show that for maximally correlated states, the $\alpha$-$z$ R\'enyi conditional entropies are equal to the $\alpha$-$z$ R\'enyi relative entropies of entanglement (up to a minus sign).  We follow similar arguments to the ones given in~\cite{zhu2017coherence}.
\begin{theorem}
\label{equivalence}
Let $\rho \in \mathcal{S}(AB)$ be a maximally correlated state and $(\alpha,z) \in \mathcal{D}$. Then we have
\begin{equation}
\label{lineq1}
\mathfrak{D}_{\alpha,z}(\rho) = -H^\uparrow_{\alpha,z}(A|B)_{\rho} \,.  
\end{equation}
\end{theorem}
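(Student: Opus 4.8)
The plan is to rewrite the right-hand side as $-H^\uparrow_{\alpha,z}(A|B)_\rho = \min_{\sigma_B\in\mathcal{S}(B)}D_{\alpha,z}(\rho\,\|\,I_A\otimes\sigma_B)$ and to establish the chain of equalities
\[
\min_{\sigma_B}D_{\alpha,z}(\rho\,\|\,I_A\otimes\sigma_B)\;=\;\min_{\tau\in\mathcal{T}_{\rho}}D_{\alpha,z}(\rho\,\|\,\tau)\;=\;\mathfrak{D}_{\alpha,z}(\rho),
\]
where $\mathcal{T}_{\rho}$ is the set of separable states $\sum_i s_i\,|i,i\rangle\!\langle i,i|$ introduced before Proposition~\ref{MC marginal}. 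The second equality is precisely the structural fact, announced right after Proposition~\ref{MC marginal} and established in this appendix, that a maximally correlated state admits an optimizer of $\mathfrak{D}_{\alpha,z}$ lying in $\mathcal{T}_{\rho}$; I take it as given. The work is therefore to prove the first equality, and, as in~\cite{zhu2017coherence}, this rests on two ingredients: an exact identification of the two families of second arguments, and a symmetrization forcing the optimal $\sigma_B$ to be diagonal.

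First I would prove the identification. Let $\Pi:=\sum_j|j,j\rangle\!\langle j,j|$, so that $\supp(\rho)\subseteq\operatorname{range}(\Pi)$ and $\rho^{\frac{\alpha}{2z}}\Pi=\rho^{\frac{\alpha}{2z}}$. For a diagonal $\sigma_B=\sum_i t_i|i\rangle\!\langle i|$ one has $(I_A\otimes\sigma_B)^{\frac{1-\alpha}{z}}=I_A\otimes\sigma_B^{\frac{1-\alpha}{z}}$, hence $\Pi\,(I_A\otimes\sigma_B)^{\frac{1-\alpha}{z}}\,\Pi=\sum_i t_i^{\frac{1-\alpha}{z}}|i,i\rangle\!\langle i,i|=\tau^{\frac{1-\alpha}{z}}$ with $\tau=\sum_i t_i|i,i\rangle\!\langle i,i|\in\mathcal{T}_{\rho}$. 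Since $\rho^{\frac{\alpha}{2z}}X\rho^{\frac{\alpha}{2z}}=\rho^{\frac{\alpha}{2z}}\Pi X\Pi\rho^{\frac{\alpha}{2z}}$ for every $X$, we get $Q_{\alpha,z}(\rho\,\|\,I_A\otimes\sigma_B)=Q_{\alpha,z}(\rho\,\|\,\tau)$ and therefore $D_{\alpha,z}(\rho\,\|\,I_A\otimes\sigma_B)=D_{\alpha,z}(\rho\,\|\,\tau)$. The assignment $\sigma_B\mapsto\tau$ is a bijection between diagonal states on $B$ and elements of $\mathcal{T}_{\rho}$ (using $\sum_i t_i=1$), so the minima over these two families coincide; the support bookkeeping in~\eqref{definition} is harmless because $\supp(\rho)\subseteq\operatorname{range}(\Pi)$ and one may restrict to $\sigma_B$ whose support contains the correlated indices.

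Second I would reduce a general $\sigma_B$ to a diagonal one. The key point is that $\rho$ is invariant under the product unitaries $U_{\vec\phi}=V_{\vec\phi}\otimes V_{\vec\phi}^{*}$, where $V_{\vec\phi}=\sum_j e^{i\phi_j}|j\rangle\!\langle j|$, because $U_{\vec\phi}|j,j\rangle=|j,j\rangle$. The twirl $\mathcal{N}(X)=\int U_{\vec\phi}\,X\,U_{\vec\phi}^{\dagger}\,\d\vec\phi$ is thus a CPTP map satisfying $\mathcal{N}(\rho)=\rho$ and $\mathcal{N}(I_A\otimes\sigma_B)=I_A\otimes\mathcal{P}_B(\sigma_B)$, where $\mathcal{P}_B$ is the dephasing (pinching) in the $\{|j\rangle\}$ basis. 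Applying the data-processing inequality, valid for $(\alpha,z)\in\mathcal{D}$, gives $D_{\alpha,z}(\rho\,\|\,I_A\otimes\mathcal{P}_B(\sigma_B))\le D_{\alpha,z}(\rho\,\|\,I_A\otimes\sigma_B)$, so the minimizer may be taken diagonal. Combined with the identification this yields $\min_{\sigma_B}D_{\alpha,z}(\rho\,\|\,I_A\otimes\sigma_B)=\min_{\tau\in\mathcal{T}_{\rho}}D_{\alpha,z}(\rho\,\|\,\tau)$, and together with the structural fact the theorem follows; the boundary cases $|(1-\alpha)/z|\in\{0,1\}$ and $\alpha=1$ are recovered through the limits justified in Appendix~\ref{lower-semicontinuous}.

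I expect the genuine obstacle to be the structural input that the separable optimizer can be chosen in $\mathcal{T}_{\rho}$: twirling by $\mathcal{N}$ maps any separable optimizer to an invariant separable state with no larger divergence, but a $U_{\vec\phi}$-invariant operator generically still carries the maximally correlated coherences $|i,i\rangle\!\langle j,j|$ and the off-correlated diagonal terms $|i,j\rangle\!\langle i,j|$, so collapsing it onto $\mathcal{T}_{\rho}$ requires the separate argument that I here take as established. The remaining care is confined to the functional-calculus manipulations of the identification step, which are legitimate precisely because $\rho$ lives inside $\operatorname{range}(\Pi)$, so that the fractional and negative powers act only on the correlated block.
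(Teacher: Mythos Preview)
Your argument is circular. The ``structural fact'' that a maximally correlated state admits a separable optimizer in $\mathcal{T}_\rho$ is not an input to this theorem: in the paper it is a \emph{corollary} of the very chain of inequalities proved here (the sentence ``The latter proof implies\ldots'' immediately follows the proof of Theorem~\ref{equivalence}). You correctly identify this as the genuine obstacle in your last paragraph, but then still assume it, which leaves the proof incomplete.

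What you are missing is a direct, non-circular proof of the inequality $\mathfrak{D}_{\alpha,z}(\rho)\ge -H^\uparrow_{\alpha,z}(A|B)_\rho$. The paper obtains it in one line from the reduction criterion~\cite{horodecki1999reduction}: every separable $\sigma_{AB}$ satisfies $\sigma_{AB}\le I_A\otimes\sigma_B$ with $\sigma_B=\Tr_A\sigma_{AB}$, and since $D_{\alpha,z}$ is antitone in its second argument on $\mathcal{D}$ (Lemma~\ref{decreasing}), $D_{\alpha,z}(\rho\|\sigma_{AB})\ge D_{\alpha,z}(\rho\|I_A\otimes\sigma_B)$. Minimizing over separable $\sigma_{AB}$ gives $\mathfrak{D}_{\alpha,z}(\rho)\ge\min_{\sigma_B}D_{\alpha,z}(\rho\|I_A\otimes\sigma_B)$. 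Combined with your first equality and the trivial inclusion $\mathcal{T}_\rho\subseteq\SEP$, this closes the circle $\mathfrak{D}_{\alpha,z}(\rho)\ge -H^\uparrow_{\alpha,z}(A|B)_\rho\ge\min_{\tau\in\mathcal{T}_\rho}D_{\alpha,z}(\rho\|\tau)\ge\mathfrak{D}_{\alpha,z}(\rho)$ and simultaneously yields the structural fact as a byproduct. Your $U_{\vec\phi}$-twirl for the middle step is a perfectly valid alternative to the paper's pinching $\Lambda(\cdot)=P(\cdot)P+(1-P)(\cdot)(1-P)$; both fix $\rho$ and collapse $I_A\otimes\sigma_B$ onto $\mathcal{T}_\rho$ via DPI.
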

\begin{proof}
We define the CPTP map $\Lambda$ by $\Lambda(\rho_{AB}) := P_{AB}\rho_{AB} P_{AB} + (1-P_{AB}) \rho_{AB} (1-P_{AB})$ where $P_{AB} = \sum_j|jj\rangle \! \langle jj|_{AB}$. We have for a maximally correlated state $\rho$
\begin{align}
\label{first inequality}
\mathfrak{D}_{\alpha,z}(\rho_{AB}) &\geq \min \limits_{\sigma_B \in \mathcal{S}(B)} D_{\alpha,z}(\rho_{AB} \| I_{A} \otimes \sigma_B) \\
& \geq  \min \limits_{\sigma_B \in \mathcal{S}(B)} D_{\alpha,z}(\rho_{AB} \| P_{AB}(I_{A} \otimes \sigma_B)P_{AB} + (1-P_{AB})(I_{A} \otimes \sigma_B)(1-P_{AB}))\\
\label{teq}
& = \min \limits_{\sigma_B \in \mathcal{S}(B)} D_{\alpha,z}(\rho_{AB} \| P_{AB}(I_{A} \otimes \sigma_B)P_{AB})\\
& \geq \min \limits_{\sigma_{AB} \in \mathcal{T}_{\rho_{AB}}}D_{\alpha,z}(\rho_{AB} \| \sigma_{AB})\,,
\end{align}
where $\mathcal{T}_{\rho_{AB}}$ is the set of all bipartite separable states of the form $\sigma_{AB} = \sum_i s_i |i,i\rangle \! \langle i,i|_{AB}$. Here, $|i,i\rangle$ is the basis such that $\rho = \sum_{jk} \rho_{jk} |j,j\rangle \! \langle k,k|$.
The first inequality comes from the fact that for any separable state $\sigma_{AB} \leq I_A \otimes \sigma_B$~\cite{horodecki1999reduction} and Lemma~\ref{decreasing}.
In the second inequality, we used the data-processing inequality and that a maximally correlated state is invariant under the projection $P_{AB}$, i.e. $\Lambda(\rho_{AB}) = \rho_{AB}$. In~\eqref{teq} we used that $(1-P_{AB})(I_A \otimes \sigma_B)(1-P_{AB})$ is orthogonal to $\rho_{AB}$ and hence it does not contribute to the value of $D_{\alpha,z}$. The last inequality is a consequence of the fact that $P_{AB}(I_A \otimes \sigma_B)P_{AB} = \sum_i(\sigma_B)_{i,i}|i,i\rangle \! \langle i,i|_{AB}$ belongs to $\mathcal{T}_{\rho_{AB}}$. Moreover, since $\mathcal{T}_{\rho_{AB}} \subseteq \SEP$, we have $\min_{\sigma_{AB} \in \mathcal{T}_{\rho_{AB}}}D_{\alpha,z}(\rho_{AB} \| \sigma_{AB}) \geq \mathfrak{D}_{\alpha,z}(\rho_{AB})$ and hence $\min_{\sigma_{AB} \in \mathcal{T}_{\rho_{AB}}} D_{\alpha,z}(\rho_{AB} \| \sigma_{AB}) = \mathfrak{D}_{\alpha,z}(\rho_{AB})$. The latter condition implies that the inequality~\eqref{first inequality} is equality and equation~\eqref{lineq1} holds.
\end{proof}
The latter proof implies that for a maximally correlated state, there always exists an optimizer of $\mathfrak{D}_{\alpha,z}$ in the set $ \mathcal{T}_{\rho}$. 

\begin{remark}
In general, the value of the coherence monotones depends on the specific choice of the coherence basis. Note that, if we set the coherence basis equal to the basis $\{ |i,j\rangle \}$ for which the maximally correlated state reads $\rho = \sum_{jk}\rho_{jk}|j,j\rangle \! \langle k,k|$, following the same arguments as above, we have that $\mathfrak{D}_{\alpha,z}(\rho) = \mathfrak{D}^{\mathcal{I}}_{\alpha,z}(\rho)$. Moreover, because of the invariance of the underlying relative entropy under the isometry $|i\rangle \rightarrow |i,i\rangle$, we have that the latter quantity is equal to $\mathfrak{D}^{\mathcal{I}}_{\alpha,z} \big(\sum_{i,j}\rho_{jk}|j\rangle \! \langle k|\big)$, where now the minimization runs over the states diagonal in the marginal basis $\{ |i\rangle \}$.
\end{remark}

Proposition~\ref{Bipartite pure} together with Theorem~\ref{equivalence} gives
\begin{lemma}
Let $(\alpha,z) \in \mathcal{D}$ and $\rho_{AB} \in \mathcal{S}(AB)$ a bipartite pure state. Then we have
\begin{equation}
H^\uparrow_{\alpha,z}(A|B)_{\rho} + H_\beta(A)_{\rho} = 0  \;, \quad \text{where}\;\;    \frac{1-\alpha}{z} + \frac{1}{\beta}  = 1 \,.
\end{equation}
Here, $H_{\alpha}(A)_{\rho}:= \frac{1}{1-\alpha}\log{\Tr(\rho_A^\alpha)}$.
\end{lemma}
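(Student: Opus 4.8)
The plan is to combine Proposition~\ref{Bipartite pure} and Theorem~\ref{equivalence}, which together pin down both sides of the claimed identity, so that the lemma follows as a direct corollary. The one preliminary observation I would make is that an arbitrary bipartite pure state $\rho_{AB} = |\rho\rangle\!\langle\rho|$ is a maximally correlated state: writing $|\rho\rangle$ in its Schmidt decomposition $|\rho(\vec p)\rangle = \sum_i \sqrt{p_i}\,|i,i\rangle$, we get $\rho = \sum_{jk}\sqrt{p_j p_k}\,|j,j\rangle\!\langle k,k|$, which is exactly of the form~\eqref{form} with $\rho_{jk}=\sqrt{p_j p_k}$. Hence Theorem~\ref{equivalence} applies and yields $\mathfrak{D}_{\alpha,z}(\rho) = -H^\uparrow_{\alpha,z}(A|B)_{\rho}$. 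Since every bipartite pure state can be brought to Schmidt form by local unitaries, and all three quantities in the statement---$\mathfrak{D}_{\alpha,z}$ as an LOCC monotone, $H^\uparrow_{\alpha,z}(A|B)$ by the unitary invariance of $D_{\alpha,z}$ (noting $U^\dagger I_A U = I_A$ and that minimizing over $\sigma_B$ is unchanged by conjugation), and $H_\beta(A)_{\rho}$ as a function of the spectrum of $\rho_A$---are invariant under such unitaries, there is no loss of generality in assuming the Schmidt form from the outset.

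Next I would invoke Proposition~\ref{Bipartite pure}, which gives $\mathfrak{D}_{\alpha,z}(|\rho(\vec p)\rangle) = H_\beta(\vec p)$ with $\tfrac{1-\alpha}{z}+\tfrac{1}{\beta}=1$, i.e. exactly the relation between $\beta$ and $(\alpha,z)$ quoted in the statement. It then remains only to identify $H_\beta(\vec p)$ with $H_\beta(A)_{\rho}$. This is immediate: the reduced state of the Schmidt-form pure state is $\rho_A = \sum_i p_i\,|i\rangle\!\langle i|$, whose eigenvalue vector is precisely $\vec p$, so $H_\beta(A)_{\rho} = H_\beta(\vec p)$ by the definition of the R\'enyi entropy of the spectrum (consistent with the $H_\alpha(\vec p)$ notation of Table~\ref{states}).

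Chaining the two displayed equalities gives $-H^\uparrow_{\alpha,z}(A|B)_{\rho} = \mathfrak{D}_{\alpha,z}(\rho) = H_\beta(\vec p) = H_\beta(A)_{\rho}$, which rearranges to $H^\uparrow_{\alpha,z}(A|B)_{\rho} + H_\beta(A)_{\rho} = 0$, as claimed. I do not expect any genuine obstacle: the statement is a two-line consequence of results already established. The only points requiring a moment's care are the bookkeeping of the subscript convention---reading $H_\beta(A)_{\rho}$ as the $\beta$-R\'enyi entropy of the marginal spectrum rather than as an $\alpha$-$z$ conditional entropy---and checking that the single parameter constraint $\tfrac{1-\alpha}{z}+\tfrac{1}{\beta}=1$ is the same in both invoked results, which it is.
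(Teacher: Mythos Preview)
Your proposal is correct and follows exactly the paper's approach: the paper's own proof is literally the single line ``Proposition~\ref{Bipartite pure} together with Theorem~\ref{equivalence} gives'' preceding the lemma, and you have spelled out that combination in full detail, including the observation that pure states are maximally correlated via the Schmidt decomposition and the identification $H_\beta(\vec p)=H_\beta(A)_\rho$. Your care in reading $H_\beta(A)_\rho$ as the $\beta$-R\'enyi entropy of the marginal (consistent with the Table~\ref{states} notation) rather than via the paper's somewhat awkward definition $H_{\alpha,z}(A)_{\rho_{AB}}:=H(\rho_A)$ is well-placed and indeed necessary for the statement to make sense.
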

The quantity $H_{\alpha}(A)_{\rho}$ is the quantum R\'enyi entropy of order $\alpha$ of the marginal $\rho_A := \Tr_B(\rho_{AB})$.

\end{document}